\newtheorem{theorem}{Theorem}
\newtheorem{assumption}{Assumption}
\newtheorem{corollary}{Corollary}
\newtheorem{lemma}{Lemma}
\newtheorem{property}{Property}
\newtheorem{proposition}{Proposition}[subsection]
\newtheorem{definition}{Definition}
\newtheorem{example}{Example}
\newtheorem{remark}{Remark}
\theoremstyle{definition}
\newcommand{\continuation}{??}
\newenvironment{continueexample}[1]
 {\renewcommand{\continuation}{\ref{#1}}\excont[continued]}
 {\endexcont}
\begin{document}

\begin{center}
{\LARGE Monotone Comparative Statics

for Equilibrium Problems}

\bigskip
\bigskip
\bigskip
\bigskip

Alfred Galichon$^{\dag }$, Larry Samuelson{\small $^{\flat }$}, and Lucas Vernet{\small $^{\S }$}

\bigskip
\bigskip

\today

\bigskip
\bigskip
\bigskip

\end{center}

\noindent {\bf Abstract.~~}  
We introduce a notion of  substitutability for correspondences and establish a monotone comparative static result,  unifying results such as the inverse isotonicity of M-matrices, Berry, Gandhi and Haile's identification of demand systems, monotone comparative statics, and results on the structure of the core of matching games without transfers (Gale and Shapley) and with transfers (Demange and Gale).  More specifically, we introduce the notions of {\em unified gross substitutes} and {\em nonreversingness} and show  that if  $\mathtt Q:P\rightrightarrows Q$ is a supply correspondence defined on a set of prices $P$ which is a sublattice of $\mathbb{R}^N$, and $\mathtt Q$ satisfies these two properties, then the set of equilibrium prices $\mathtt Q^{-1}(q)$ associated with a vector of quantities $q\in Q$ is increasing (in the  strong set order) in  $q$; and it is a sublattice of $P$.  

\vfill

$^{\dag }$Economics Department, FAS, and Mathematics Department, Courant Institute, New York University; and Economics Department, Sciences Po. Email: ag133@nyu.edu.  Galichon gratefully acknowledges funding from NSF grant DMS-1716489 and ERC grant CoG-866274.

{\small $^{\flat }$}Department of Economics, Yale University. Email: larry.samuelson@yale.edu.

{\small $^{\S }$}Economics Department, Sciences Po, and Banque de France. Email: 

\noindent lucas.vernet@acpr.banque-france.fr.

\thispagestyle{empty}
\clearpage

$~$
\vspace{-.75in}

{\footnotesize

\tableofcontents}

\thispagestyle{empty}

\clearpage


\setcounter{page}{1}

\begin{center}
{\Large Monotone Comparative Statics for Equilibrium Problems}
\end{center}

\section{Introduction}

This paper proposes the notion of {\em unified gross substitutes} for a correspondence $\mathtt Q:P\rightrightarrows Q$.  For concreteness we often interpret $\mathtt Q$ as a supply correspondence mapping from a set of prices $P$ to a set of quantities $Q$, though our analysis applies to correspondences in general.  Our analysis encompasses the familiar case in which  $\mathtt Q$ arises from the optimization problem of a single agent, but  unified gross substitutes need not  refer to a single agent's decision problem, and we are especially interested in its potential for the study of equilibrium problems.

Our focus is the inverse isotonicity of the correspondence $\mathtt{Q}$. We show that if the correspondence $\mathtt Q$ satisfies unified gross substitutes as well as a mild condition called \emph{nonreversingness}, then the set of parameters $\mathtt Q^{-1}(q)$ associated with an element $q\in Q$ is increasing (in the strong set order) in  $q$ and is a sublattice of $P$.

For functions, the notion of unified gross substitutes is equivalent to the familiar notion of weak gross substitutes.   Berry, Gandhi and Haile \cite{berry2013connected} provide an inverse isotonicity result for functions and explain the importance of inverse isotonicity.  One can view our work as generalizing their result to correspondences.   In some cases, there is a natural formulation of the inverse $\mathtt Q^{-1}$ as the solution to an optimization problem.  We can then (under appropriate conditions) apply the monotone comparative statics results of Topkis \cite{Topkis1998} and Milgrom and Shannon \cite{MandS94} to obtain inverse isotonicity.  One can view our work as extending the study of monotone comparative statics beyond optimization problems. 

For correspondences, the notion of unified gross substitutes implies (but is not equivalent to) Kelso and Crawford's \cite{KandC1982} notion of gross substitutes for correspondences, which is too weak to imply our inverse isotonicity result.   The notion of unified gross substitutes is related to the generalization of $M$-matrices to $M$-functions introduced by More and Rheinboldt \cite{more1973p} and is independent of a similar substitutes notion introduced by  Polterovich and Spivak \cite{Posp1983}.

We show that unified gross substitutes for the argmax correspondence of a maximization problem is equivalent to the submodularity of the value function, generalizing a familiar result for argmax functions.  We introduce a new equilibrium problem, referred to as the equilibrium flow problem, that contains a number of familiar settings as special cases and whose equilibrium correspondence satisfies unified gross substitutes.  This provides a new route to the result that the set of stable matches in matching problems with transfers (as in Demange and Gale \cite{demange1985strategy}) and without transfers form a sublattice.  Finally, we examine  hedonic pricing problems (cf. Rosen \cite{Rosen1974} and Ekeland, Heckman and Nesheim \cite{EHN2004}), extending the basic results of  Chiappori, McCann and Nesheim \cite{CMandN2010} 
beyond quasilinear utilities and establishing an inverse isotonicity result for such models.

\section{Theory}

When $u$ and $v$ are two elements of a partially ordered set $(\mathcal{S},\leq)$, we use $u < v$ to express that $u\leq v$ and $u \neq v$. Let $P\subseteq \mathbb{R}^{N}$ and $Q \subseteq \mathbb R^N$, for some finite $N$, with  generic elements  $p\in P$ and $q\in Q$. Let $\mathtt  Q:P\rightrightarrows Q$ be a correspondence.  We maintain the following assumption throughout, typically without explicit mention:
\begin{assumption}\label{swat}
$P$ is a sublattice of $\mathbb{R}^N$.
\end{assumption}

\noindent Recall from Topkis \cite[p. 13]{Topkis1998} that a set $P$ is a \emph{sublattice of $\mathbb{R}^N$} if for any pair of vectors $p,p'\in P$, the set $P$ also contains  their least upper bound (denoted by $p\vee p'$ and defined as the coordinate-wise maximum of $p$ and $p'$)  as well as their  greatest lower bound (denoted by  $p\wedge p'$ and defined as the  coordinate-wise minimum of $p$ and $p'$).

In one of our leading interpretations of the correspondence $\mathtt Q$, we view the dimensions of $\mathbb R^N$ as identifying goods and interpret $\mathtt Q$ as a supply correspondence. An element $p\in P$ is then a price vector, with $p_{z}$ denoting the price of good $z\in\{1,\ldots,N\}$. An element $q\in \mathtt Q(p)$ is an allocation, with $q_{z}$ denoting the quantity of good $z$ supplied at price vector $p$.  No matter what the interpretation, we typically refer to elements of $q$ as quantities and elements of $p$ as prices.

\subsection{Unified Gross Substitutes}

\subsubsection{Definition}
Our basic notion of substitutability for correspondences is:%
%
%

\begin{definition}[{\bf Unified Gross Substitutes}]\label{duck}
The correspondence $\mathtt Q:P\rightrightarrows Q$ satisfies \emph{unified gross substitutes} if, given $p\in P, p' \in P$, $q\in \mathtt Q\left( p\right) $ and $q^{\prime }\in \mathtt Q\left( p^{\prime }\right) $, there exists $q^{\wedge }\in \mathtt Q\left( p\wedge p^{\prime }\right) $ and $q^{\vee }\in \mathtt Q\left( p\vee p^{\prime }\right) $ such that
\begin{eqnarray}
p_{z}\leq p_{z}^{\prime }&\implies& q_{z}\leq q_{z}^{\wedge }\text{ and }%
q_{z}^{\vee }\leq q_{z}^{\prime }~ \label{lester}\\
p_{z}^{\prime }<p_{z}&\implies& q_{z}^{\prime }\leq q_{z}^{\wedge }\text{ and }%
q_{z}^{\vee }\leq q_{z}.\label{flatt}
\end{eqnarray}
\end{definition}

\begin{figure}[t]
    \centering
    \includegraphics[width=0.7\textwidth]{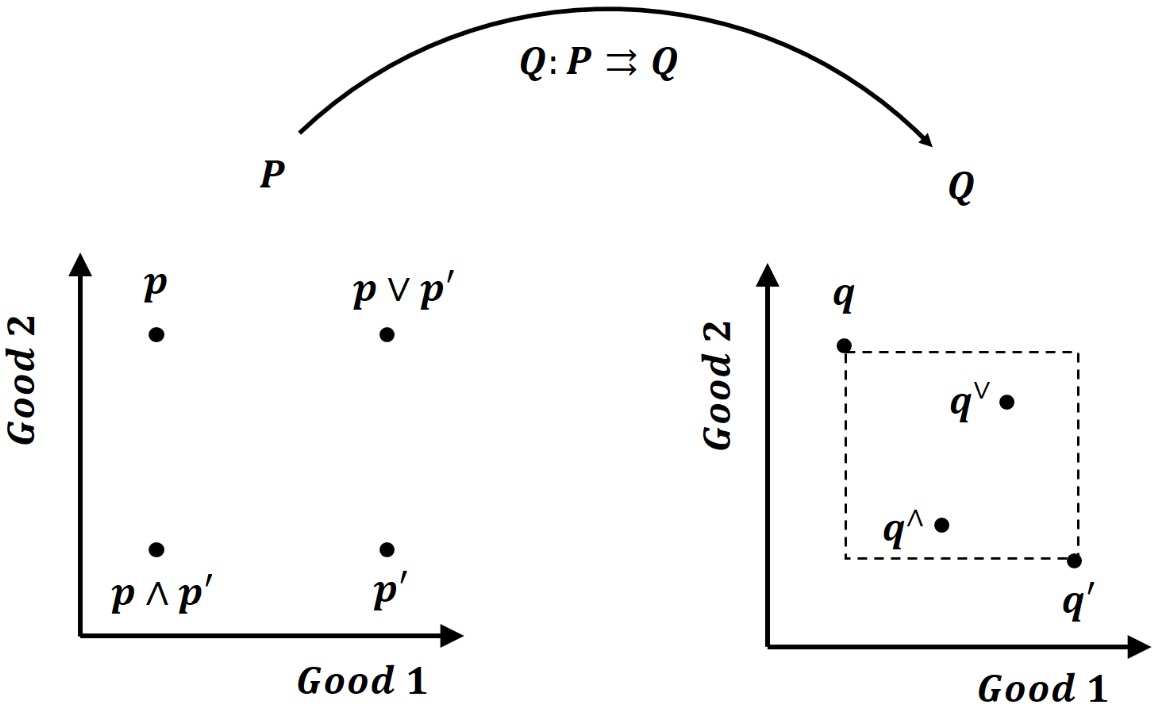}
    \caption{An illustration of the unified gross substitutes property with two goods.  $\mathtt Q$ is a mapping between the set of prices on the left and the set of quantities on the right with $q\in \mathtt Q(p)$ and $q'\in \mathtt Q(p')$. Definition~\ref{duck} imposes that there exists a $q^{\vee }\in \mathtt (p\vee p')$, smaller than the top-right border of the rectangle drawn in the set of quantities, and a $q^{\wedge}\in \mathtt (p\wedge p')$ larger than the bottom-left border.}
    \label{fig:UGS_illustration}
\end{figure}

\noindent Figure~\ref{fig:UGS_illustration} gives an illustration of the unified gross substitutes property with two goods. This definition is appropriate for our interpretation of $\mathtt Q$ as a supply correspondence, and would need to be adjusted in a straightforward way for applications to demand correspondences.%
\footnote{The counterpart of \eqref{lester}--\eqref{flatt} would then be
\begin{eqnarray}
	p_{z}\leq p_{z}^{\prime }&\implies& q_{z}\geq q_{z}^{\wedge}\text{ and }%
	q_{z}^{\vee}\geq q_{z}^{\prime}\label{white}\\
	p_{z}^{\prime }<p_{z}&\implies& q_{z}^{\prime}\geq q_{z}^{\wedge}\text{ and }%
	q_{z}^{\vee}\geq q_{z}.\label{rabbit}
\end{eqnarray}
}
 Section~\ref{sect:connections} connects this definition with existing notions in the literature and explains what is unified by this concept. 

One might be tempted to strengthen this definition by making the antecedents of \eqref{lester}--\eqref{flatt} both weak inequalities.  The resulting notion is not only stronger than we need, but vitiates many of our equivalence results and too often fails to hold. Appendix~{\ref{par:weak-strong}} provides an example.  Notice that, given the ability to interchange the prices $p$ and $p'$, it is irrelevant which antecedent carries the strict inequality, as long as one does. 

\subsubsection{Properties}\label{properties-ugs}

This section collects some basic properties of unified gross substitutes.  We first show that in the case of a function, unified gross substitutes is equivalent to the the textbook notion of weak gross substitutes (e.g., Mas-Colell, Green and Whinston \cite[Definition 17.F.2, p. 611]{MWG1995}).%
\footnote{Requiring $\mathtt{q}_{i}\left( p\right) $ to be strictly decreasing in $p_{j}$ gives {\em strict} gross substitutes.  Mas-Colell, Green and Whinston \cite{MWG1995} provide the definition of weak gross substitutes  for a demand function, namely that $\mathtt{q}_{i}\left( p\right) $ is nondecreasing in $p_{j}$.  Traditionally, goods are said to be simply ``substitutes'' when referring to Hicksian demand functions and ``gross substitutes'' when referring to Walrasian demand functions.  We do not restrict to correspondences derived from an optimization problem, obviating this distinction, and typically use the  phrase gross substitutes.}

\begin{property}{\em \textbf{For functions, unified gross substitutes and weak gross substitutes are equivalent.}\label{prop:wgs-implies-ugs} 
Recall that a function $\mathtt{q}:\mathbb{R}^{N}\rightarrow \mathbb{R}^{N}$ satisfies \emph{weak gross substitutes}, or equivalently is \emph{off-diagonally isotone}, or is a \emph{Z-map} (Plemmons \cite{Plemmons1977}), if and only if $\mathtt{q}_{i}\left( p\right) $ is nonincreasing in $p_{j}$
for $i\neq j$. 

We now show that the function $\mathtt q$ satisfies weak gross substitutes if and only if the  correspondence $\mathtt{Q}\left( p\right) =\left\{ \mathtt{q}\left( p\right) \right\} $ satisfies
unified gross substitutes. 
First let $\mathtt q$ satisfy weak gross substitutes, take $q=\mathtt{q}\left( p\right) $ and $q^{\prime
}=\mathtt{q}\left( p^{\prime }\right) $, and let $q^{\wedge }=\mathtt{q}\left( p\wedge
p^{\prime }\right) $ and $q^{\vee }=\mathtt{q}\left( p\vee p^{\prime }\right) $. Then 
$p_{z}\leq p_{z}^{\prime }\implies \left( p\wedge p^{\prime }\right)
_{z}=p_{z}$, which combines with $p\wedge p^{\prime }\leq p$ and weak gross substitutes to give $ \mathtt{q}_{z}\left( p\right) \le \mathtt{q}_{z}\left( p\wedge
p^{\prime }\right)$ and hence the first part of \eqref{lester}. The other requirements in \eqref{lester}--\eqref{flatt}  are dealt with similarly, and hence $\mathtt Q$ satisfies unified gross substitutes.  Conversely, let $\mathtt{Q}\left( p\right) =\left\{ \mathtt{q}\left( p\right) \right\} $ satisfy unified gross substitutes, let $p_j\ge p_j'$ and $p_i = p_i'$ for all $i\neq j$.  Then $p^{\wedge}=p'$ and $p^{\vee}=p$, so that applying \eqref{lester} to some $i\neq j$ gives $\mathtt q_i(p)\le \mathtt q_i(p')$, and hence $\mathtt q$ satisfies weak gross substitutes.
\hfill\rule{.1in}{.1in}
}
\end{property}

We next note that if the correspondence $\mathtt Q$ is the solution to a maximization problem, the unified gross substitutes condition admits a familiar characterization.  Ausubel and Milgrom \cite[Theorem 10]{AandM2002} show that a demand correspondence satisfies a  weak gross substitutes property (namely that $\mathtt{q}_{i}\left( p\right) $ is nonincreasing in $p_{j}$ for $i\neq j$ and for those $p$ for which $\mathtt q(p)$ is single-valued) if and only if the associated indirect utility function is submodular.  We establish an analogous condition for unified gross substitutes.  

\begin{property}\emph{\textbf{Subdifferentials of convex submodular functions satisfy unified gross substitutes.}\label{prop:submodular-implies-ugs} 
Let $c^*:\mathbb{R}^N \to \mathbb{R}$ 
be a convex function and consider the subdifferential of $c^*$, given by  $
 \partial c^*(p) = \left\{ q:p^{\prime }\rightarrow
[q^{\top }p^{\prime }-c^{\ast }\left( p^{\prime }\right)] \text{ is maximal at 
}p\right \}$.   
Theorem \ref{spice} in Section \ref{king} shows that $c^*$ is submodular if and only if the correspondence $p \rightrightarrows \partial c^*(p)$ satisfies unified gross substitutes.}

{\em Section \ref{king} shows that we can think of $c^*$ as the indirect profit function associated with a profit maximization problem, which is necessarily convex, allowing us to conclude that the associated supply correspondence satisfies unified gross substitutes if and only if the indirect profit function is submodular.  
\hfill\rule{.1in}{.1in}
}
\end{property}


Property \ref{pty:walras-law} below illustrates one use of the following property.

\begin{property}\emph{\textbf{Monetary measurement preserves unified gross substitutes.}\label{pty:change-of-num} 
Given an excess supply correspondence $\mathtt Q$ for a competitive economy, we can define the excess supply correspondence expressed in monetary terms as $$\mathtt Q^{\$}\left(p\right)=\left\{ \left(p_{z}q_{z}\right)_{z\in\left\{ 1,...,N\right\} }\mbox{ with }q\in \mathtt Q\left(p\right)\right\}. $$
It is straightforward to verify that when the domain $P$ of $\mathtt Q$ is included in $\mathbb{R}_{+}^{N}$, so that prices are nonnegative, then  $\mathtt Q$ satisfies unified gross substitutes if and only if $\mathtt Q^{\$}$ satisfies unified gross substitutes.  
\hfill\rule{.1in}{.1in}
}
\end{property}

The following property shows that unified gross substitutes aggregates in a natural way. 

\begin{property}\label{bigdog}\emph{\textbf{Aggregation preserves unified gross substitutes.}\label{pty:agg-pres-ugs}
If the correspondences $\mathtt Q^{1}$ and $\mathtt Q^{2}$ have the unified gross substitutes property, then
so does $\lambda \mathtt Q^{1}+\mu \mathtt Q^{2}$ for $\lambda\ge 0$ and $\mu\geq0$, where
\begin{equation*}
\left(\lambda \mathtt Q^{1}+\mu \mathtt Q^{2}\right)\left(p\right)=\left\{ \lambda q^{1}+\mu
q^{2}:q^{1}\in \mathtt Q^{1}\left(p\right),q^{2}\in \mathtt Q^{2}\left(p\right)\right\}.
\end{equation*}
The proof, in Appendix \ref{woodpecker}, is a straightforward bookkeeping exercise.
\hfill\rule{.1in}{.1in}
}
\end{property}

\subsection{Nonreversingness}

Our second condition is a requirement that the correspondence $\mathtt Q$ cannot completely reverse the order of two points:

\subsubsection{Definitions}

\begin{definition}[{\bf Nonreversing correspondence}]
The correspondence $\mathtt Q:P\rightrightarrows Q$ is {\em nonreversing} if 
\begin{equation}\label{hambone}
\left(
\begin{array}{c}
	q\in \mathtt Q\left( p\right)\\
	{	q'}\in \mathtt Q\left( p'\right)\\
	q\leq {q'}\\
	p\geq {p'}\\
\end{array}	
\right)	
\implies 
\left(
\begin{array}{c}
	q\in \mathtt Q\left({p'}\right)\\
	{q'}\in \mathtt Q\left( p\right) 
\end{array}
\right).
\end{equation}
\end{definition}

\noindent Nonreversingness is a weak monotonicity requirement.  It is implied by the (stronger but common) assumption that $\mathtt Q$ is increasing in the strong set order, as we would typically expect of a supply correspondence.

When $\mathtt Q$ is point-valued, the previous notion equivalently boils down to the following implication:
\begin{equation}
p \geq p^\prime\text{ and }   \mathtt q(p) \leq \mathtt q(p^\prime)\text{ implies } \mathtt q(p)=\mathtt q(p^\prime).
\end{equation}

Nonreversingneess is weaker than \emph{strong nonreversingness}, where the conclusion in \eqref{hambone} is replaced by the stronger statement that $p=p^\prime$:

\begin{definition}[{\bf Strongly nonreversing correspondence}]\label{def:strong-nonreversing}
The correspondence $\mathtt Q:P\rightrightarrows Q$ is {\em strongly nonreversing} if

\begin{equation}\label{crepes}
\left(
\begin{array}{c}
	q\in \mathtt Q\left( p\right)\\
	{	q'}\in \mathtt Q\left( p'\right)\\
	q\leq {q'}\\
	p\geq {p'}\\
\end{array}	
\right)	
\implies 
 p = p^\prime.
\end{equation}
\end{definition}

\subsubsection{Properties}

Some familiar properties of
economic models imply the nonreversingness of the supply correspondence.   The first three of the following properties immediately imply nonreversingness by ensuring that the antecedent of \eqref{hambone} can hold only if $q=q'$, rendering the consequent immediate.  

\begin{property}\emph{\textbf{Constant aggregate output implies nonreversingness.} \label{pty:cao-implies-nr}
The correspondence $\mathtt{Q}$ satisfies \emph{constant aggregate output} if there exists $k \in \mathbb{R}^N_{++}$ such that $\sum_{z=1}^N k_z q_z=0$ holds for all  $p \in P$ and $q\in \mathtt{Q}\left(p\right)$.}
	
{\em   It will often be natural to take $k$ to be the unit vector.  One obvious circumstance in which aggregate output is constant is that in which the bundle of goods is augmented by an ``outside good'' whose quantity is the negative of the sum of the quantities of the original set of goods, as in Berry, Gandhi and Haile \cite{berry2013connected}.  Alternatively, the correspondence may be describing market shares in a model of competition, probabilities in a prediction problem, or budget shares in a model of consumption.
\hfill\rule{.1in}{.1in}		
}		
\end{property}

\begin{property}\emph{\textbf{Monotone total output implies nonreversingness.} \label{pty:mto-implies-nr}
The correspondence $\mathtt{Q}$ satisfies \emph{monotone total output} 
if for $q \in \mathtt{Q}(p)$ and $q^\prime \in \mathtt{Q}(p^\prime)$, $p \geq p^\prime$ implies $\sum_{z=1}^N q_z \geq \sum_{z=1}^N q^\prime_z$.   } 

{\em  The monotone total output property plays an important role in the matching with contracts literature (Hatfield and Milgrom \cite{HandM2005}) under the name of \emph{law of aggregate demand} (or more properly here, law of aggregate supply).%
\footnote{That literature assumes goods are idiosyncratic and indivisible, so $q \in \{0,1\}^N$ and   $\sum_{z=1}^N q_z $ is the cardinality of the supply bundle $q$.}
\hfill\rule{.1in}{.1in}
}
\end{property}

\begin{property}\emph{\textbf{Aggregate monotonicity implies nonreversingness.}\label{zeus} The correspondence  $\mathtt{Q}$ satisfies \emph{aggregate monotonicity} if for $q \in \mathtt{Q}(p)$ and $q^\prime \in \mathtt{Q}(p^\prime)$, both  $p \geq p^\prime$ and $q < q^\prime $ cannot hold simultaneously.  }
	
{\em The correspondence $\mathtt Q$ satisfies aggregate monotonicity if and only if
for all $p\ge p'$, $q\in \mathtt Q(p)$ and $q'\in \mathtt Q(p')$, there exists $k\in \mathbb R_{++}$ with $\sum_{z=1}^N k_z q_z \ge \sum_{z=1}^N k_z q'_z$.
%
}
\hfill\rule{.1in}{.1in}

\end{property}

Constant aggregate output implies monotone total output if we take $k$ to be the unit vector in the former; otherwise they are not nested.  Aggregate monotonicity is clearly implied by  either of constant aggregate output or  monotone total output, but the converses fail.  
To illustrate, Appendix \ref{kettle} provides an example of a function that is nonreversing and that satisfies aggregate and weighted monotonicity (defined next) but not monotone total output.  

Our next property combines with unified gross substitutes to imply nonreversingness.  

\begin{property}\emph{\textbf{Weighted monotonicity and unified gross substitutes imply nonreversingness.}
The correspondence $\mathtt Q$ satisfies {\em weighted monotonicity} if, given prices $p$ and $p^{\prime }$ in $P$ and allocations $q\in \mathtt Q\left( p\right) $ and $q^{\prime 	}\in \mathtt Q\left( p^{\prime }\right) $, there exists  $k\in\mathbb R^N_{++}$ and allocations  $q^{\wedge }\in \mathtt Q\left(	p\wedge p^{\prime }\right) $ and $q^{\vee }\in \mathtt Q\left( p\vee p^{\prime	}\right) $ such that
\begin{equation}\label{bug}
\sum_{z=1}^Nk_zq_{z}\geq \sum_{z=1}^Nk_zq_{z}^{\wedge }~~~~{\rm
and }~~~~\sum_{z=1}^Nk_zq_{z}^{\vee }\geq \sum_{z=1}^Nk_zq_{z}^{\prime }.
\end{equation}
Notice that the vector $k$ is allowed to depend on the prices and allocations involved. }

{\em Weighted monotonicity and unified gross substitutes together imply that  $\mathtt Q$ is nonreversing.  To see this, consider $p$ and $p^{\prime }$ in $P$, and $q$ and $q^{\prime }$ in $Q$, such that $ q\in \mathtt Q\left( p\right) $, $q^{\prime }\in \mathtt Q\left( p^{\prime }\right) $, $q\leq q^{\prime }$ and $p\ge p'$.  By definition, there
exists $k\in \mathbb R^N_{++}$, $q^{\wedge }\in \mathtt Q\left( p\wedge p^{\prime }\right) $ and $q^{\vee
}\in \mathtt Q\left( p\vee p^{\prime }\right) $, satisfying \eqref{bug}.
Using $q\le q'$ and unified gross substitutes \eqref{lester}, one sees that $q_{z}\leq q_{z}^{\wedge }$ and $q_{z}^{\vee }\leq
q_{z}^{\prime }$\ for all $z$. By summation one obtains $%
\sum_{z=1}^Nk_zq_{z}\leq \sum_{z=1}^Nk_zq_{z}^{\wedge }$ and
$\sum_{z=1}^Nk_zq_{z}^{\vee }\leq \sum_{z=1}^Nk_zq_{z}^{\prime }$, and thus (using \eqref{bug}) all these inequalities are equalities. Hence $%
q=q^{\wedge }$, and $q^{\prime }=q^{\vee }$, which shows that $q\in \mathtt Q\left(
p\wedge p^{\prime }\right)=\mathtt Q(p') $ and $q^{\prime }\in \mathtt Q\left( p\vee p^{\prime
}\right) =\mathtt Q(p)$ as needed.
\hfill\rule{.1in}{.1in}
}
\end{property}

\begin{property}\emph{\textbf{Walras law and nonreversingness.}\label{pty:walras-law} 
Given a correspondence $\mathtt Q$, consider the correspondence measured in monetary terms,
\[
\mathtt Q^{\$}\left(p\right)=\left\{ \left(p_{z}q_{z}\right)_{z\in\left\{ 1,...,N\right\} }\mbox{ with }q\in \mathtt Q\left(p\right)\right\}
\]
introduced in Property \ref{pty:change-of-num}.  Then $\mathtt Q$ satisfies Walras law if for all $ p$, and all $q\in \mathtt Q\left(p\right)$,we have  $p^{\mathrm{\top}}q=0$, which holds
if and only if for all $ p$ and all  $\tilde q\in \mathtt Q^{\$}\left(p\right)$,
 we have $1^{\mathrm{\top}}\tilde q=0$. Hence, if $\mathtt Q$ satisfies Walras law, it follows that $\mathtt Q^{\$}$ has constant aggregate output, and is therefore nonreversing.}

{\em  The obvious example of a correspondence satisfying Walras law is the excess supply correspondence of a competitive economy.
\hfill\rule{.1in}{.1in}
}
\end{property}

\begin{property}\emph{\textbf{Inverse isotonicity implies nonreversingness.}\label{monica} 
The inverse correspondence $\mathtt{Q}^{-1}$ is isotone in the strong set order  if for $q\in\mathtt{Q}(p)$ and $q^\prime  \in\mathtt{Q}(p^\prime )$ with  $q \leq q^\prime$, one has  
$q\in\mathtt{Q}(p \wedge p^ \prime)$ and $q^\prime \in\mathtt{Q}(p \vee p^ \prime)$. }

{\em It follows immediately from the definitions that inverse isotonicity in the strong set order implies nonreversingness.  Section \ref{grenade} develops this connection further.
\hfill\rule{.1in}{.1in}
}
\end{property}


\begin{property}\emph{
\textbf{Single crossing of the objective function implies nonreversing argmax, and conversely.}\label{indiff} Consider $\varphi:\mathbb R^{2N}\to \mathbb R$ and assume  $\varphi(p,q)$ has \emph{single crossing} in $(p,q)$, that is if $0 \geq   \varphi(p',q) -\varphi(p',q')  $ and  $\varphi(p,q) - \varphi(p,q')\geq 0$ holds for $p' \geq p$ and $q' \leq q$, then these two inequalities hold as equalities. Although stated differently, this definition is equivalent to the one introduced by Milgrom and Shannon \cite{MandS94}.
Consider 
\begin{equation}\label{Q-argmax}
    \mathtt Q(p)=\arg \max_{q \in Q} \{ \varphi(p,q) \}.
\end{equation}
Then $\mathtt Q$ is nonreversing.
Indeed, assume $q \in \mathtt Q(p)$, $q' \in \mathtt Q(p')$, $p \geq p'$ and $q \leq q'$. Then by definition of $\mathtt Q$, one has $\varphi(p,q)\geq \varphi(p,q')$ and $\varphi(p',q')\geq \varphi(p',q)$. By single crossing, equality holds in both inequalities, and therefore $q \in \mathtt Q(p')$ and $q' \in \mathtt Q(p)$.  
Conversely, it is easily seen by taking $Q=\{q,q'\}$ that if $\mathtt Q$ defined as in~\eqref{Q-argmax} above is nonreversing for all $Q$, then $\varphi(p,q)$ has single crossing in (p,q).
}
\hfill\rule{.1in}{.1in}

\end{property}

Our next property connects nonreversingness with the notion of a \emph{P-function}.  Such functions arise in a number of applications, and are the subject of a rich literature (e.g., Varga \cite{Varga2000}). 

\begin{property}\emph{\textbf{A P-function is  nonreversing.}
An affine function $\mathtt q:\mathbb R^N\rightarrow \mathbb R^N$, written as $\mathtt q(p) = Ap+b$ for an $N\times N$ matrix $A$ and an $N\times 1$ vector $b$,  is a $P$ function if the matrix $A$ is a $P$ matrix (every principal minor is positive).}   

{\em More and Rheinboldt \cite[Definition 2.5, p. 49]{more1973p} generalize this notion.  
A function $\mathtt q:\mathbb R^N\rightarrow \mathbb R^N$ is a $P$-function if for every $p$ and $p'$ with $p\not= p'$, there exists $z\in\{1,\ldots,N\}$ such that  $(p_z-p_z')(\mathtt q(p)_z-\mathtt q(p')_z) >0$. 
An implication of this condition is
\[
p\ge p', \mathtt q(p)\le \mathtt q(p') \implies p=p'.
\]
The natural application of this concept to correspondences $\mathtt Q:P\rightrightarrows Q$ is that %
\[
[q\in \mathtt Q\left(p\right), q^{\prime}\in
\mathtt Q\left(p^{\prime}\right), q\leq q^{\prime}, p\geq p^{\prime}] \implies
p=p^{\prime}.
\]
The  $P$-correspondence property immediately implies nonreversingness.
\hfill\rule{.1in}{.1in}
}
\end{property}

Gale and Nikaido \cite{GandN1965} show that a smooth P-function defined on a rectangle is injective [Theorem 4.2, p. 86] and and establish conditions under which it has an isotone inverse [Theorem 5, p. 87]. P-functions and their properties are discussed at length in Nikaido \cite{Nikaido68}.

\subsection{M0-correspondences and Inverse Isotonicity}

\subsubsection{M0-correspondences and Related Notions}\label{crawl}

We introduce the notion of an \emph{M0-correspondences}:

\begin{definition}[M0-correspondence]
An \emph{M0-correspondence} is a correspondence which satisfies unified gross substitutes and is nonreversing.   
\end{definition}

\noindent  We view unified gross substitutes as the more substantive of the two conditions, with nonreversingness typically being innocuous. 

We noted in Property \ref{bigdog} that unified gross substitutes is preserved by aggregation.  The same is not the case once we add nonreversingness:

\begin{remark}[{\bf M0-correspondences fail to aggregate}]
Appendix \ref{par:sum-m-corresp} shows that the sum (as defined in Property~\ref{pty:agg-pres-ugs}) of two M0-correspondences $\mathtt Q$ and $\mathtt Q^\prime$ is not necessarily an M0-correspondence.
\end{remark}

If $\mathtt Q$ is an M0-correspondence, then neither $\mathtt Q$ nor $\mathtt Q^{-1} $ need be point-valued.  In the special case in which both are point-valued, we recover the existing notion of an \emph{M-function}: 

\begin{definition}[M-function]
An \emph{M-function} is an M0-correspondence $\mathtt Q$ which is point-valued and  has point-valued inverse $\mathtt Q^{-1}$. \label{shade}
\end{definition}

We can then mix and match to define:

\begin{definition}[M0-function]\label{rainbow}
An \emph{M0-function} is an M0-correspondence $\mathtt Q$ which is point-valued. \end{definition}

\begin{definition}[M-correspondence]
An \emph{M-correspondence} is an M0-correspondence $\mathtt Q$ which has point-valued inverse $\mathtt Q^{-1}$.
\end{definition}

\noindent We summarize with the following table:
\begin{center}
\begin{tabular}{|l||c|c|} 
 \hline
  ~ & $\mathtt Q^{-1} $ is point-valued & $\mathtt Q^{-1} $ is set-valued \\ 
  \hline
  \hline
$\mathtt Q$ is point-valued & $\mathtt Q$ is an M-function & $\mathtt Q$ is an M0-function \\ 
\hline
 $\mathtt Q$ is set-valued & $\mathtt Q$ is an M-correspondence & $\mathtt Q$ is an M0-correspondence \\ 
 \hline
\end{tabular}
\end{center}

This string of definitions is based on our characterization of M0-correspondences in terms of unified gross substitutes and nonreversingness.  
More and Rheinboldt \cite[Definition 2.3, p. 48]{more1973p} (see also Ortega and Rheinboldt \cite[Definition 13.5.7, p. 468]{OandR2000}) define an M-function $\mathtt Q$ as one that (in our terms) satisfies weak gross substitutes and the condition that $\mathtt Q(p)\le \mathtt Q(p')$ implies $p\le p'$.  
In Appendix \ref{water}, we show that our definition and More and Rheinboldt's definition of an M-function are indeed equivalent.

To motivate the labels for these various notions, we recall  that an  M-matrix is a square matrix with every off-diagonal entry less or equal than zero and with every principal minor greater than zero. An M0-matrix is a square matrix with every off-diagonal entry less or equal than zero and with every principal minor greater than {\em or equal to} zero.%
\footnote{Our definition of an $M$-matrix is common and matches (for example) that of Ortega and Rheinboldt \cite[Definition 2.4.7, p. 54]{OandR2000} (see Plemmons \cite[Theorem 1, p. 148]{Plemmons1977} for the equivalence).  Plemmons \cite{Plemmons1977} uses the terms nonsingular M-matrix and M-matrix in place of the M-matrix and M0-matrix terms invoked here.}
Then we have:

\begin{property}\label{demon}{\em {\bf M(M0)-matrices Induce M(M0)-functions.}
Let $Q$ be an $n \times n$ matrix and consider the function $\mathtt{Q}(p) = Qp$. Then $\mathtt{Q}$ is an M-function if and only if $Q$ is a M-matrix, and $\mathtt{Q}$ is a M0-function if and only if $Q$ is an M0-matrix.}

{\em To confirm this statement, we first note that $Qp$ is obviously point valued.  If $Q$ is an M-matrix, then it is nonsingular and hence has an inverse $Q^{-1}$, and in addition the nonsingular matrix $Q$ is the inverse of $Q^{-1}$, ensuring that $Q^{-1}$ is point valued.  Next, if $Q$ is an M0-matrix, then $Qp$ is again point-valued but $Q$ may be  singular, in which case $Q^{-1}$ can be set-valued.\hfill\rule{.1in}{.1in}}
\end{property}

\noindent  Our terms thus exte notions of an M-function and M0-function to nonlinear functions and correspondences.

We can give examples for each of these categories.  As Section \ref{sun} explains, Berry, Gandhi and Haile \cite{berry2013connected} offer sufficient conditions for a function to be an M-function, with one example (among others) being the estimation of the choice probabilities in a random-utility discrete choice model.  Section \ref{apple} below presents examples of M0-correspondences. The excess supply correspondence of a competitive exchange economy with strictly convex preferences is in general an M0-function.  Each price vector gives rise to a unique excess supply, but the inverse may be set-valued---multiple price vectors may give rise to (for example) an excess supply of zero.  Relaxing strict convexity to convexity but adding weak gross substitutes yields an M-correspondence.  A price vector may give rise to multiple excess supplies, but the inverse is point-valued---each excess supply comes from a unique price vector. The following properties present additional examples.

\begin{property}\label{hydra}\emph{\textbf{The aggregate supply function associated with a logit model without price normalization is an M0-function.}} {\em
		Consider a logit model where there are $n_x$ producers of type $x\in\mathcal{X}$ and each producer $x$ produces one unit of good $z$ chosen from a set of goods $\{1,...,N\}$.  Producer type $x$ gleans the random profit  $\pi_{xz}(p_z)+\varepsilon_z $ from producing good $z$, where $\pi_{xz}(.)$ is increasing and $(\varepsilon_z)_{z\in \{1,...,N\}}$ is an independent-and-identically-distributed random vector with Gumbel distribution. Then the aggregate supply function is a function $\mathtt Q_z: \mathbb{R}^N \to \mathbb{R}^N$
		given by   
		$$
		\mathtt Q_z (p) = \sum_{x\in \mathcal{X}} n_x \frac {\exp(\pi_{xz}(p_z))}  { \sum_{ z^{\prime}=1}^ N \exp(\pi_{xz^{\prime}}(p{z^\prime}))},  
		$$ 
		and is an M0-function.\hfill\rule{.1in}{.1in}}
\end{property}

\begin{property}\emph{\textbf{The aggregate supply function associated with a logit model with price normalization is an M-function.}}
	{\em 
		In the model of Property \ref{hydra}, assume that the price $p_N$ of good $N$ is normalized to equal $\pi$.  The aggregate production function restricted and corestricted to $\mathbb{R}^{N-1 }$, denoted by $\mathtt Q_z: \mathbb{R}^{N-1 }\to \mathbb{R}^{N-1}$
		and given by   
		$$\mathtt Q_z (p) = \sum_{x\in \mathcal{X}} n_x \frac {\exp(\pi_{xz}(p_z))}  { \exp( \pi_{xN}( \pi ) )+\sum_{ z^{\prime}=1}^ N \exp(\pi_{xz^{\prime}}(p{z^\prime}))},  $$ 
		is an M-function. \hfill\rule{.1in}{.1in}}
\end{property}

\subsubsection{Inverse Isotonicity Theorems}\label{grenade}

We introduce the following condition, which is slightly stronger than requiring that the inverse correspondence $\mathtt Q^{-1}$ is isotone in the strong set order (a.k.a. Veinott's order, Veinott \cite{veinott1992lattice}).%
\footnote{Isotone correspondences are sometimes also said to be (weakly) increasing, and contrast with antitone (or weakly decreasing) correspondences. }

\begin{definition}[{\bf Totally Isotone Inverse}]\label{def:Veinott-isotone}
A correspondence $\mathtt Q:P\rightrightarrows Q$ has {\em totally isotone inverse} if, whenever $q \in \mathtt Q \left( p\right) $ and $q ^{\prime }\in \mathtt Q\left( p^{\prime}\right) $ 
are such that there exists $B\subseteq \{1,\ldots,N\}$ with $p_{z} \leq p_{z}^{\prime }$ for all $z\in B$ and $q_{z}\leq q_{z}^{\prime }$ for all $z\notin B$, we have
\begin{equation}\label{parking}
q\in \mathtt Q\left( p\wedge p^{\prime }\right) \text{ and }q^{\prime }\in \mathtt Q\left(
p\vee p^{\prime }\right).
\end{equation} 
\end{definition}

If we require \eqref{parking} to hold only for the case  $B=\emptyset$, then the inverse correspondence $\mathtt Q^{-1}$ is isotone in the strong set order, i.e,  whenever $q \in \mathtt Q \left( p\right) $ and $q ^{\prime }\in \mathtt Q\left( p^{\prime}\right) $ 
are such that  $q_{z}\leq q_{z}^{\prime }$ for all $1\leq z \leq N$, we have
\begin{equation*}
	q\in \mathtt Q\left( p\wedge p^{\prime }\right) \text{ and }q^{\prime }\in \mathtt Q\left(
	p\vee p^{\prime }\right).
\end{equation*} 
In this case we say simply that $\mathtt Q$ is {\em inverse isotone}, a weaker property obviously implied by totally isotone inverse.  
We can equivalently express the inverse isotonicity of $\mathtt Q$, or equivalently the isotonicity of $\mathtt Q^{-1}$ in the strong set order,  as the requirement that  whenever $p \in \mathtt Q^{-1} \left( q\right) $ and $p^{\prime }\in \mathtt Q^{-1} \left( q^{\prime}\right) $ where $q\le q'$,  it follows that 
\begin{equation*}
	p\wedge p^{\prime } \in \mathtt \mathtt Q^{-1}\left(q\right) \text{ and } p\vee p^{\prime }\in \mathtt \mathtt Q^{-1}\left( q^{\prime }\right) \text{.}
\end{equation*}

The following inverse isotonicity result is the building block for subsequent applications.  

\begin{theorem}\label{belgian}
	Let $\mathtt Q:P\rightrightarrows Q$ satisfy unified gross substitutes. Then the following conditions are equivalent:
	
	(i) $\mathtt Q$ is nonreversing  (i.e.,  $\mathtt Q$ is a M0-correspondence), and 	
	
	(ii) $\mathtt Q$ has totally isotone inverse.
	
\end{theorem}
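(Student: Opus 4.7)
The plan is to prove both directions separately; the reverse implication is essentially definitional, while the forward implication is where unified gross substitutes does the real work.

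For (ii) $\Rightarrow$ (i), I specialize. Given $q\in \mathtt Q(p)$, $q'\in \mathtt Q(p')$ with $q\le q'$ and $p\ge p'$, I invoke Definition~\ref{def:Veinott-isotone} with $B=\emptyset$: the hypothesis $q_z\le q'_z$ for $z\notin B$ holds by assumption, and the conclusion reads $q\in \mathtt Q(p\wedge p')=\mathtt Q(p')$ and $q'\in \mathtt Q(p\vee p')=\mathtt Q(p)$, which is exactly nonreversingness. This direction does not use unified gross substitutes.

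For (i) $\Rightarrow$ (ii), fix $q\in \mathtt Q(p)$, $q'\in \mathtt Q(p')$ and a set $B\subseteq\{1,\ldots,N\}$ with $p_z\le p'_z$ for $z\in B$ and $q_z\le q'_z$ for $z\notin B$. Unified gross substitutes produces $q^\wedge\in \mathtt Q(p\wedge p')$ and $q^\vee\in \mathtt Q(p\vee p')$ satisfying (\ref{lester})--(\ref{flatt}). The key intermediate claim is the unconditional coordinatewise inequalities $q\le q^\wedge$ and $q^\vee\le q'$. For coordinate $z$ with $p_z\le p'_z$, (\ref{lester}) supplies both inequalities directly. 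For coordinate $z$ with $p'_z<p_z$, such a $z$ necessarily lies outside $B$, so $q_z\le q'_z$ holds by hypothesis; chaining this with (\ref{flatt}) yields $q_z\le q'_z\le q^\wedge_z$ and $q^\vee_z\le q_z\le q'_z$.

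With $q\le q^\wedge$ and $q^\vee\le q'$ in hand, I apply nonreversingness twice. The quadruple $q\in \mathtt Q(p)$, $q^\wedge\in \mathtt Q(p\wedge p')$, $q\le q^\wedge$, $p\ge p\wedge p'$ produces $q\in \mathtt Q(p\wedge p')$; the quadruple $q^\vee\in \mathtt Q(p\vee p')$, $q'\in \mathtt Q(p')$, $q^\vee\le q'$, $p\vee p'\ge p'$ produces $q'\in \mathtt Q(p\vee p')$. These are exactly the two membership statements demanded by Definition~\ref{def:Veinott-isotone}. I expect the main obstacle to be the coordinatewise case analysis: the set $B$ is precisely the device that converts the price-conditional bounds of unified gross substitutes into unconditional dominance relations, and once this conversion is carried out, nonreversingness closes the proof essentially automatically.
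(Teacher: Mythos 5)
Your proof is correct and follows essentially the same route as the paper's. Both directions are handled identically: for $(ii)\Rightarrow(i)$ you specialize to $B=\emptyset$ exactly as the paper does, and for $(i)\Rightarrow(ii)$ you use the observation $p'_z<p_z\Rightarrow z\notin B\Rightarrow q_z\le q'_z$ to convert the conditional bounds from unified gross substitutes into the unconditional dominance relations $q\le q^{\wedge}$ and $q^{\vee}\le q'$, after which two applications of nonreversingness finish the argument — the paper does the first of these explicitly and invokes ``a similar reasoning'' for the second, which you have simply written out.
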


\paragraph{Proof} Assume (i) and consider $q \in \mathtt Q \left( p\right) $,  $q ^{\prime }\in \mathtt Q\left( p^{\prime}\right) $, and $B\subseteq \{1,...,N\}$ such that $p_{z} \leq p_{z}^{\prime }$ for all $z\in B$ and $q_{z}\leq q_{z}^{\prime }$ for all $z\notin  B$. By unified gross substitutes, one has the existence of $q^\wedge \in \mathtt Q \left( p \wedge p^\prime \right) $ and $q^\vee \in \mathtt Q \left( p \vee p^\prime \right) $ such that
\begin{eqnarray*}
p_{z} \leq p_{z}^{\prime }&\implies& q_{z}\leq q_{z}^{\wedge } \\
p_{z} >p_{z}^{\prime }&\implies &q_{z}^{\prime }\leq q_{z}^{\wedge }.
\end{eqnarray*}%
We have $p_{z}>p_{z}^{\prime }\implies z \notin B$, which implies $q_{z}\leq q_{z}^{\prime
} $.  Combining with the inequalities above, we see that 
$q_z\leq q_z^{\wedge }$ holds for any $1 \leq z \leq N$. But then $q\leq q^{\wedge }$ and $p\geq p\wedge
p^{\prime }$, so by nonreversingness it follows that $q\in \mathtt Q\left( p\wedge
p^{\prime }\right) $. A similar reasoning shows that $q^{\prime }\in \mathtt Q\left( p\vee p^{\prime
}\right) $.  We have therefore shown statement (ii). 

Conversely, we assume statement (ii) holds and show  $\mathtt Q$ is nonreversing. Take $p\geq p^\prime$ and $q \leq q^\prime$ such that $q \in \mathtt Q(p)$ and $q^\prime \in \mathtt Q(p^\prime)$. Letting $B = \emptyset$, because $\mathtt Q$ has totally isotone inverse, we have that $q \in \mathtt Q(p \wedge p^\prime) $, which is equivalent to $q \in \mathtt Q (p^\prime)$, and $q^\prime \in \mathtt Q(p \vee p^\prime) $, which is equivalent to $q^\prime \in \mathtt Q (p)$, which shows (i).
\hfill\rule{.1in}{.1in}

\bigskip

Appendix \ref{rooster} generalizes this result to partial inverse correspondences.  It is an immediate implication that: 

\begin{corollary}\label{gobble}
Let $\mathtt Q$ be an M0-correspondence.  Then the set of prices $\mathtt Q^{-1}(q)$ associated with an allocation $q$ is a sublattice of $P$.
\end{corollary}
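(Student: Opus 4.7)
The plan is to extract the sublattice property directly from Theorem~\ref{belgian} by taking the two quantities to coincide. Fix $q\in Q$ and let $p,p'\in \mathtt Q^{-1}(q)$; the goal is to show that both $p\wedge p'$ and $p\vee p'$ lie in $\mathtt Q^{-1}(q)$. Note first that, by Assumption~\ref{swat}, $p\wedge p'$ and $p\vee p'$ lie in $P$, so the only content is membership in $\mathtt Q^{-1}(q)$.

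The key move is to set $q'=q$ in the definition of totally isotone inverse, so that the hypothesis $q_z\leq q_z'$ for all $z\notin B$ is satisfied trivially for any choice of $B$. In particular, taking $B=\emptyset$ (the inverse-isotone special case), the hypotheses of Definition~\ref{def:Veinott-isotone} reduce to $q\in \mathtt Q(p)$ and $q\in \mathtt Q(p')$, which we have by assumption. Since $\mathtt Q$ is an M0-correspondence, Theorem~\ref{belgian} says that $\mathtt Q$ has totally isotone inverse, so the conclusion \eqref{parking} applies and yields $q\in \mathtt Q(p\wedge p')$ and $q\in \mathtt Q(p\vee p')$. Equivalently, $p\wedge p'\in \mathtt Q^{-1}(q)$ and $p\vee p'\in \mathtt Q^{-1}(q)$, which is the sublattice property.

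There is no real obstacle here: the corollary is essentially a specialization of the inverse-isotonicity half of Theorem~\ref{belgian} to the diagonal case $q=q'$. The only thing to flag is that one needs $P$ to be a sublattice of $\mathbb{R}^N$ in order for the meet and join to be legitimate candidates for membership in $\mathtt Q^{-1}(q)\subseteq P$, which is exactly Assumption~\ref{swat} that is standing throughout.
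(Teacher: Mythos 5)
Your proof is correct and follows essentially the same route as the paper's: both apply Theorem~\ref{belgian} with $q'=q$, using the inverse-isotone (equivalently $B=\emptyset$) instance of the totally-isotone-inverse property to place $p\wedge p'$ and $p\vee p'$ in $\mathtt Q^{-1}(q)$. Your write-up is merely more explicit about the choice $B=\emptyset$ and about where Assumption~\ref{swat} enters.
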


\paragraph{Proof} Take $p\in \mathtt Q^{-1}\left(q\right)$ and $%
p^{\prime}\in \mathtt Q^{-1}\left(q\right)$. Then $q\leq q$ yields $%
q\in\mathtt Q\left(p\wedge p^{\prime}\right)$ and $q^{\prime}\in\mathtt Q\left(p%
\vee p^{\prime}\right)$.\hfill\rule{.1in}{.1in}

\bigskip

The inverse image $\mathtt Q^{-1}(\tilde q)$ will often describe an equilibrium.  For example, $\mathtt Q$ may be the aggregate supply function of a competitive economy and $\tilde q$ may be the negative if the aggregate endowment, so that $\mathtt Q^{-1}(\tilde q)$ identifies the set of competitive equilibrium prices.  Theorem \ref{belgian} thus gives us monotone comparative statics results for equilibrium problems, in our example describing how competitive equilibrium prices vary in the endowment. 

One can can show that under uniform gross substitutes, $\mathtt Q$ is a M-correspondence if and only if it is totally nonreversing. 

\begin{theorem}\label{thm:strong-inverse-isotonicity}

Let $\mathtt Q:P\rightrightarrows Q$ satisfy unified gross substitutes. Then the
following three conditions are equivalent:

(i) $\mathtt Q$ is an M-correspondence;

(ii) $\mathtt Q^{-1}$ is point-valued and isotone where not empty, i.e. $q \in \mathtt Q\left( p\right)$,  $ q^\prime \in \mathtt Q\left( p^\prime\right)$

\quad  and $ q \leq  q^\prime$ imply $p\leq
p^{\prime }$;

(iii) $\mathtt Q$ is strongly nonreversing.

\end{theorem}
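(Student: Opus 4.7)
The plan is to prove the cyclic chain of implications (i)$\Rightarrow$(ii)$\Rightarrow$(iii)$\Rightarrow$(i), leveraging Theorem~\ref{belgian} and Corollary~\ref{gobble} that have already been established, so that no new structural lemma is required.

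For (i)$\Rightarrow$(ii), I would start from the assumption that $\mathtt Q$ is an M-correspondence, hence in particular an M0-correspondence with point-valued inverse. Given $q\in\mathtt Q(p)$ and $q'\in\mathtt Q(p')$ with $q\leq q'$, Theorem~\ref{belgian} applied with $B=\emptyset$ gives $q\in\mathtt Q(p\wedge p')$ and $q'\in\mathtt Q(p\vee p')$. Since $\mathtt Q^{-1}$ is point-valued, the fiber $\mathtt Q^{-1}(q)$ contains both $p$ and $p\wedge p'$, forcing $p=p\wedge p'$, i.e.\ $p\leq p'$; symmetrically $p'=p\vee p'$, so indeed $p\leq p'$.

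For (ii)$\Rightarrow$(iii) the argument is immediate: given the hypotheses of strong nonreversingness, namely $q\in\mathtt Q(p)$, $q'\in\mathtt Q(p')$, $q\leq q'$, and $p\geq p'$, part~(ii) yields $p\leq p'$, which combined with $p\geq p'$ gives $p=p'$. For (iii)$\Rightarrow$(i), note first that strong nonreversingness trivially implies nonreversingness, because if the antecedent forces $p=p'$ then the conclusion $q\in\mathtt Q(p')$ and $q'\in\mathtt Q(p)$ is automatic; together with the standing assumption of unified gross substitutes, this makes $\mathtt Q$ an M0-correspondence. It then remains to show that $\mathtt Q^{-1}$ is point-valued. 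If $p,p'\in\mathtt Q^{-1}(q)$, Corollary~\ref{gobble} tells us that $\mathtt Q^{-1}(q)$ is a sublattice, so both $p\vee p'$ and $p\wedge p'$ belong to $\mathtt Q^{-1}(q)$. Applying strong nonreversingness to the data $q\in\mathtt Q(p\vee p')$, $q\in\mathtt Q(p\wedge p')$, $q\leq q$, and $p\vee p'\geq p\wedge p'$ forces $p\vee p'=p\wedge p'$, which yields $p=p'$.

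I do not anticipate a serious obstacle: the whole argument is a careful bookkeeping on the definitions, and the only nontrivial inputs (the totally isotone inverse property under UGS plus nonreversingness, and the sublattice structure of fibers of an M0-correspondence) are exactly the content of Theorem~\ref{belgian} and Corollary~\ref{gobble}. The mildest care is needed in the (iii)$\Rightarrow$(i) step, where one must remember to first upgrade strong nonreversingness to nonreversingness so that Corollary~\ref{gobble} can be invoked before collapsing the sublattice fiber to a single point.
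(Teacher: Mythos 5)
Your proof is correct and follows essentially the same cyclic decomposition (i)$\Rightarrow$(ii)$\Rightarrow$(iii)$\Rightarrow$(i) as the paper, with the same use of Theorem~\ref{belgian} in the first leg and the same one-line arguments in the second. The only variation is in (iii)$\Rightarrow$(i): the paper applies unified gross substitutes directly to $q\in\mathtt Q(p)$ and $q\in\mathtt Q(p')$ to produce $q^\vee\in\mathtt Q(p\vee p')$ with $q^\vee\leq q$ and then invokes strong nonreversingness twice, whereas you route through Corollary~\ref{gobble} (sublattice fibers) and invoke strong nonreversingness once on the pair $(p\vee p',\,p\wedge p')$; both arguments rest on the same underlying ingredients, and yours is a clean, equally valid packaging.
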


\paragraph{Proof}  \underline{(i) implies (ii)}: Assume $\mathtt Q$ is a M-correspondence and assume
 $q \in \mathtt Q\left( p\right)$, $ q^\prime \in \mathtt Q\left( p^\prime\right)$ and $ q \leq  q^\prime$. By Theorem~\ref{belgian}, we have $ q \in \mathtt Q\left( p\wedge p^{\prime }\right) $ and $ q^\prime \in \mathtt Q\left( p\vee p^{\prime }\right) $. Because $\mathtt Q$ is
injective, it follows that $p=p\wedge p^{\prime }$ and thus $p\leq p^{\prime
}$.

\underline{(ii) implies (iii)}: Assume $\mathtt Q$ is inverse isotone and assume $\mathtt q\left(
p\right) \leq \mathtt q\left( p^{\prime }\right) $ and $p\geq p^{\prime }$. By
inverse isotonicity one has $p\leq p^{\prime }$, and thus $p=p^{\prime }$.

\underline{(iii) implies (i)}: Assume $\mathtt Q$ is strong nonreversing. Then it is nonreversing  and thus $\mathtt Q $
is a M0-correspondence. Assume $q\in \mathtt Q(p)$ and $q\in \mathtt Q(p')$.  
By unified gross substitutes, we have $q^{\vee}\in \mathtt Q(p\vee p')$ with $q^{\vee}\le q$.  Because $p\vee p'\ge p,p'$, strong nonreversingness  gives $p = p\vee p' = p'$.  Hence, 
%
%
$p=p^{\prime }$ and
thus $\mathtt q$ is injective.
\hfill\rule{.1in}{.1in}

\section{Connections with Existing Theories}
\label{sect:connections}

\subsection{Kelso and Crawford's Gross Substitutes}\label{par:kelso-crawford}

We can connect unified gross substitutes to Kelso and Crawford's well-known definition \cite[p. 1486]{KandC1982} of substitutes.  Translating their definition to our setting of supply correspondences, the correspondence $\mathtt Q$ has the {\em Kelso-Crawford substitutes} property if, given two price vectors $p$ and $p'$ with  $p'\le p$, for any $q\in \mathtt Q\left(p\right)$ there exists $q'\in \mathtt Q(p')$ such that $p_z=p_z' \implies q'_z\ge q_z$.  

\begin{property}\emph{\textbf{Unified gross substitutes implies Kelso-Crawford substitutes.}} It follows immediately from  \eqref{lester} and the definition of Kelso-Crawford substitutes that 
unified gross substitutes implies the Kelso and Crawford substitutes property.
\end{property}

\noindent Appendix \ref{par:kc-example} shows that Kelso and Crawford's definition does not imply unified gross substitutes.

We refer to our notion as {\em unified gross substitutes} in order to emphasize the distinction between Kelso and Crawford gross substitutes and our notion, namely that the latter ``unifies'' the conditions \eqref{lester} and \eqref{flatt} by requiring them to be satisfied by 
common $q^{\wedge}$ and $q^{\vee}$.  Kelso and Crawford's condition implies there are allocations $q^{\wedge}$ and $q^{\vee}$ satisfying  \eqref{lester} and also allocations $q^{\wedge}$ and $q^{\vee}$ satisfying  \eqref{flatt}, but allows these allocations to differ.

\subsection{Polterovich and Spivak's Gross Substitutes}\label{wobble}

 Polterovich and Spivak \cite[Definition 1, p. 118]{Posp1983} propose a notion of gross substitutes for correspondences.   (See Howitt \cite{Howitt80} for an intermediate notion.)  In our notation and setting, the correspondence $\mathtt Q(p)$ satisfies their notion of gross substitutability if, for any price vectors $p\le p'$ and any $q\in \mathtt Q(p)$ and $q'\in \mathtt Q(p')$, it is {\em not} the case that
\[
q'_z>q_z~~\forall z~s.t.~p_z=p'_z.
\]
Polterovich and Spivak's notion thus stipulates that if the prices of some set of goods increase while others remain constant, it cannot be the case that every one of the quantities associated with the latter set strictly increases.  

Polterovich and Spivak \cite[Lemma 1, p. 123]{Posp1983} show that if the correspondence $\mathtt Q$ maps from the interior of $\mathbb R^N_+$ into $\mathbb R^N$,  is convex valued and closed valued, and maps compact sets into nonempty bounded sets, then their gross substitutability condition implies \eqref{lester}--\eqref{flatt}.   Appendix \ref{par:polterovich-comparison} shows that our requirement that $P$ be a sublattice of $\mathbb{R}^N$ does not suffice for this result, and that in general neither notion implies the other. Polterovich and Spivak \cite[p. 119]{Posp1983} note that their definition of  gross substitutes for correspondences is not preserved under the addition of correspondences, unlike unified gross substitutes.  

\subsection{Berry, Gandhi and Haile's Connected Strict Substitutes}\label{sun}

Berry, Gandhi and Haile \cite{berry2013connected}, abbreviated as BGH, examine functions $\mathtt q:P\rightarrow Q$. They make the follwoing assumptions, which allow them to show that an inverse function is isotone and point-valued:

\bigskip

 \textbf{BGH, Implicit Assumption 0}: $\mathtt q$ is point-valued.

\smallskip
 \textbf{BGH, Assumption 1}: $\mathtt q$ is defined on a Cartesian product of sets.

\smallskip

 \textbf{BGH, Assumption 2.a}: $\mathtt q$ has the gross substitutes property. 

\smallskip

\textbf{BGH, Assumption 2.b}: The function defined by
\[
\mathtt q_{0}\left( p\right) :=-\sum_{z=1}^{N}\mathtt q_z\left(
p\right) 
\] 

\quad is weakly decreasing in each $p_{z}$ for $z\in \left\{
1,...,N\right\} $.

\smallskip

\textbf{BGH, Assumption 3}: 
%
%
%
For all $B\subseteq \{1,\ldots,N\}$, $p$ and $p'$ such that  $p_z=p_z'$

\quad for all 
$z \in B$ and $p_z>p_z'$ for all $z\not \in B$, there exists $\tilde z\in B\cup\{0\}$ such that 

\quad $\mathtt q_{\tilde z}(p)<\mathtt q_{\tilde z}(p')$.

%

\bigskip 

We find it useful to record Berry, Gandhi and Haile's restriction to demand functions as an implicit Assumption 0.  Berry, Gandhi and Haile's Assumption 1, that 
$\mathtt q$ is defined on a Cartesian produce of Euclidean spaces  \cite[p. 2094]{berry2013connected}, implies our Assumption \ref{swat} that $P$ is a sublattice of $\mathbb{R}^N$. Their second assumption \cite[p. 2094]{berry2013connected}  (in our notation, and translating from their framing in terms of demand functions to our framing in terms of supply functions) is split here into Assumptions by 2.a and 2.b. Lastly, their  Assumption 3 \cite[p. 2095]{berry2013connected} is a connected strict substitutes assumption which expresses that one cannot partition the set of goods into two set of products such that no good in the first set is a substitute for some good in the second set.  We have stated this assumption in the equivalent form established in their Lemma 1.    

Applying
our Theorem~\ref{belgian} gives a variant of Berry, Gandhi and Haile's \cite{berry2013connected} Theorem 1 which operates under weaker assumptions (more precisely, without their Assumption 3) but delivers a weaker conclusion:%
\footnote{Berry, Gandhi and Haile establish the stronger inverse isotonicity notion that appears later in Corollary~\ref{church}.}
\begin{corollary}\label{back}
Under Berry, Gandhi and Haile's Assumptions 0, 1, 2.a, and 2.b, the function $\mathtt q$ is an M0-function and hence the inverse  $\mathtt q^{-1}$ of $\mathtt q$ is isotone in the strong set order, that is $\mathtt q(p) \leq \mathtt q(p^\prime) $ implies $\mathtt q(p)= \mathtt q(p\wedge p^ \prime)$ and $\mathtt q(p^\prime)= \mathtt q(p\vee p^ \prime)$. 
\end{corollary}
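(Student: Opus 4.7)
The plan is essentially to assemble the previously established properties: I want to verify that BGH's four assumptions translate directly into the four ingredients needed to conclude that $\mathtt q$ is an M0-function (point-valued correspondence satisfying unified gross substitutes and nonreversingness), and then invoke Theorem~\ref{belgian} to deduce inverse isotonicity in the strong set order.

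First I would note that BGH's Assumption~1 (that $\mathtt q$ is defined on a Cartesian product) immediately gives our standing Assumption~\ref{swat} that $P$ is a sublattice of $\mathbb{R}^N$, and BGH's implicit Assumption~0 supplies point-valuedness. Next, BGH's Assumption~2.a (weak gross substitutes) combined with Property~\ref{prop:wgs-implies-ugs} gives that the single-valued correspondence $p\mapsto\{\mathtt q(p)\}$ satisfies unified gross substitutes. It remains only to exhibit nonreversingness, which is where BGH's Assumption~2.b enters: the condition that $\mathtt q_0(p)=-\sum_{z=1}^N\mathtt q_z(p)$ is weakly decreasing in each $p_z$ is exactly the statement that $\sum_{z=1}^N\mathtt q_z(p)$ is weakly increasing in each coordinate of $p$, i.e., that $\mathtt q$ satisfies the monotone total output condition of Property~\ref{pty:mto-implies-nr}. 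That property then yields nonreversingness directly.

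With unified gross substitutes and nonreversingness in hand, $\mathtt q$ (viewed as a correspondence) is an M0-correspondence by definition, and since it is point-valued it is an M0-function in the sense of Definition~\ref{rainbow}. Theorem~\ref{belgian} then applies and tells us that $\mathtt q$ has totally isotone inverse; in particular it is inverse isotone in the strong set order. To finish, I would translate this back to the point-valued statement: if $\mathtt q(p)\le \mathtt q(p')$, inverse isotonicity gives $\mathtt q(p)\in\mathtt q(p\wedge p')$ and $\mathtt q(p')\in\mathtt q(p\vee p')$, which, because $\mathtt q$ is single-valued, collapses to the stated equalities $\mathtt q(p)=\mathtt q(p\wedge p')$ and $\mathtt q(p')=\mathtt q(p\vee p')$.

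There is no real obstacle here: the work is entirely in the bookkeeping of recognizing that BGH's four hypotheses line up, one for one, with point-valuedness, the sublattice assumption, unified gross substitutes (via Property~\ref{prop:wgs-implies-ugs}), and nonreversingness (via Property~\ref{pty:mto-implies-nr}). The one spot where a reader might pause is the observation that Assumption~2.b is literally monotone total output written with a sign flip, and in particular that BGH's Assumption~3 (connected strict substitutes) plays no role in this weaker conclusion — its absence is precisely why we obtain only isotonicity of $\mathtt q^{-1}$ in the strong set order rather than the full injectivity/strict isotonicity established by BGH.
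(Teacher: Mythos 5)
Your proof is correct and follows essentially the same route as the paper's: translate BGH Assumption~2.a into unified gross substitutes via Property~\ref{prop:wgs-implies-ugs}, translate Assumption~2.b into monotone total output and hence nonreversingness via Property~\ref{pty:mto-implies-nr}, conclude $\mathtt q$ is an M0-function, and then invoke Theorem~\ref{belgian} for inverse isotonicity. Your write-up is slightly more explicit about the role of Assumptions~0 and~1 and about translating the set-valued conclusion back to the point-valued statement, but the argument is the same as the paper's.
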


\paragraph{Proof}
Recall that 
Property \ref{prop:wgs-implies-ugs} establishes that if the function $ {\mathtt q}$ satisfies weak gross substitutes, which it does under Assumption 2.a, then it satisfies unified gross substitutes. Using Assumption 2.b and Property~\ref{pty:mto-implies-nr}, we get that $\mathtt q$ is also nonreversing, and hence $\mathtt q$ is an M0-function.
\hfill\rule{.1in}{.1in}

\bigskip

Applying our Theorem~\ref{thm:strong-inverse-isotonicity} now gives   an alternative route to Berry, Gandhi and Haile's Theorem \cite{berry2013connected} Theorem 1 and Corollary 1.  

\begin{corollary}\label{church}
Under Berry, Gandhi and Haile's Assumptions 0, 1, 2.a, 2.b and 3, the function $\mathtt q$ is an $M$-function and hence is inverse isotone ($\mathtt q(p) \leq \mathtt q(p^\prime) $ implies $p \leq p^\prime$) and $\mathtt q^{-1}$ is point-valued. 
\end{corollary}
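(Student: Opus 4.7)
The plan is to build on Corollary~\ref{back}, which already establishes that $\mathtt q$ is an M0-function under Assumptions 0, 1, 2.a, and 2.b, and then to use Assumption~3 to upgrade nonreversingness to strong nonreversingness. Theorem~\ref{thm:strong-inverse-isotonicity} will then deliver in one stroke that $\mathtt q$ is an M-correspondence (hence, since $\mathtt q$ is point-valued, an M-function per Definition~\ref{shade}) and that $\mathtt q^{-1}$ is point-valued and isotone where nonempty, which is precisely the inverse isotonicity statement $\mathtt q(p)\leq \mathtt q(p')\implies p\leq p'$.

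The substantive step is the verification of strong nonreversingness. Suppose $p\geq p'$ and $\mathtt q(p)\leq \mathtt q(p')$; I claim $p=p'$. Because $\mathtt q$ is an M0-function it is nonreversing, so applying \eqref{hambone} to the singleton correspondence $\mathtt Q(\cdot)=\{\mathtt q(\cdot)\}$ yields $\mathtt q(p)\in \mathtt Q(p')$ and $\mathtt q(p')\in \mathtt Q(p)$; since $\mathtt q$ is a function, both inclusions collapse to the coordinate-wise equality $\mathtt q(p)=\mathtt q(p')$.

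Now set $B=\{z:p_z=p'_z\}$, so that $p_z>p'_z$ for every $z\notin B$. If $p\neq p'$, then $B\neq\{1,\ldots,N\}$, and Assumption~3 supplies some index $\tilde z\in B\cup\{0\}$ with $\mathtt q_{\tilde z}(p)<\mathtt q_{\tilde z}(p')$. For $\tilde z\in B$ this directly contradicts the equality $\mathtt q(p)=\mathtt q(p')$. For $\tilde z=0$, the aggregator from Assumption~2.b satisfies $\mathtt q_0(p)=-\sum_{z=1}^N \mathtt q_z(p)=-\sum_{z=1}^N \mathtt q_z(p')=\mathtt q_0(p')$, again a contradiction. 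Hence $B=\{1,\ldots,N\}$ and $p=p'$, establishing strong nonreversingness.

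I do not foresee a major obstacle: the only delicate point is making sure the ``outside good'' index $\tilde z=0$ is handled, which is immediate because equality of $\mathtt q(p)$ and $\mathtt q(p')$ on the coordinates $\{1,\ldots,N\}$ automatically propagates to the aggregator $\mathtt q_0$. Once strong nonreversingness is in hand, the rest of the corollary is the implications (iii)$\Rightarrow$(i) and (i)$\Rightarrow$(ii) of Theorem~\ref{thm:strong-inverse-isotonicity}, which together yield the M-function conclusion and the inverse isotonicity with point-valued $\mathtt q^{-1}$.
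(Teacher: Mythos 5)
Your proof is correct and follows essentially the same route as the paper: use nonreversingness to force $\mathtt q(p)=\mathtt q(p')$, invoke Assumption~3 with $B=\{z:p_z=p'_z\}$ to derive a contradiction unless $p=p'$, and then apply Theorem~\ref{thm:strong-inverse-isotonicity}. Your treatment is in fact slightly more careful than the paper's, which leaves the $\tilde z=0$ case implicit, whereas you explicitly observe that $\mathtt q_0(p)=\mathtt q_0(p')$ follows from the equality of the remaining coordinates.
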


\paragraph{Proof} We show that under the additional Assumption 3, the function $\mathtt q$ is in addition strongly nonreversing. Indeed, assume $\mathtt q(p) \leq  \mathtt q (p^\prime)$ and $p \geq p^\prime$. By nonreversingness, one has $\mathtt q(p) = \mathtt q(p^\prime)$. Assume $p > p^\prime$. 
Let $B=\{z:p_z=p'\}\ne \{1,\ldots, N\}$.  By BGH Assumption 3, there exists $z\in B\cup \{0\}$ such that $\mathtt q_z(p)<\mathtt q_z(p')$, a contradiction.  
%
%
%
Hence $p = p^\prime$, giving strong nonreversingness.  Theorem~\ref{thm:strong-inverse-isotonicity}  then gives the result.
\hfill\rule{.1in}{.1in}

\bigskip

We can generalize Berry, Gandhi and Haile's result to correspondences.  

\bigskip

\textbf{BGH, Assumption 3'}: 
For all $B\subseteq \{1,\ldots,N\}$, $p$ and $p'$ such that 
$p_z=p_z'$ 

\quad for all $z \in B$ and 
$p_z>p_z'$ for all $z\not \in B$, and all $q\in \mathtt Q(p)$ and $q'\in \mathtt Q(p')$

\quad with $q\le q'$, there exists $\tilde z\in B\cup\{0\}$ such that $q_{\tilde z}<q'_{\tilde z}$.

\bigskip

\noindent Let $\tilde {\mathtt Q}:\{1\}\times P\rightrightarrows \mathbb R\times Q$ be the correspondence, written $\tilde q\in \tilde {\mathtt Q}(\tilde p)$ constructed from $Q$ by letting $\tilde p=(1,p)$ and $\tilde q = (-\sum_{z=1}^Nq_z,q)$.   
Then $\tilde {\mathtt Q}$ by construction satisfies constant aggregate output and hence nonreversingingss.  An argument analogous to the proof of Corollary \ref{church} establishes that $\mathtt Q$ is strongly nonreversing.  
Theorem \ref{thm:strong-inverse-isotonicity} then immediately gives:

\begin{corollary}
Let $\tilde {\mathtt Q}$ satisfy unified gross substitutes. Then $\tilde {\mathtt Q}$ is an $M$-function and hence is inverse isotone ($\mathtt q(p) \leq \mathtt q(p^\prime) $ implies $p \leq p^\prime$) and and $\mathtt Q^{-1}$ is point-valued.      
\end{corollary}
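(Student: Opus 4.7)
The plan is to apply Theorem~\ref{thm:strong-inverse-isotonicity} to $\tilde{\mathtt Q}$, which requires unified gross substitutes (given) together with strong nonreversingness. Ordinary nonreversingness will fall out of the construction, and the actual work lies in upgrading this to the strong form using BGH Assumption~3' as an inherited hypothesis from the preceding discussion.

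First I would observe that by construction $\tilde{\mathtt Q}$ has constant aggregate output with weight vector $k=(1,1,\ldots,1)$: the zeroth coordinate $-\sum_{z=1}^N q_z$ cancels the sum of the remaining coordinates for every $\tilde q \in \tilde{\mathtt Q}(\tilde p)$. Property~\ref{pty:cao-implies-nr} then immediately yields nonreversingness, so that $\tilde{\mathtt Q}$ is already an M0-correspondence and Theorem~\ref{belgian} alone would give a totally isotone inverse at this stage.

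For the upgrade to strong nonreversingness I would mirror the proof of Corollary~\ref{church}. Suppose $\tilde q \in \tilde{\mathtt Q}(\tilde p)$ and $\tilde q' \in \tilde{\mathtt Q}(\tilde p')$ with $\tilde p \geq \tilde p'$ and $\tilde q \leq \tilde q'$. Unpacking the construction yields $p \geq p'$, $q \leq q'$, and (from the zeroth coordinate) $\sum_{z} q_z \geq \sum_{z} q'_z$; the last two inequalities together force $q = q'$ and hence $\tilde q = \tilde q'$. Now set $B = \{z\in\{1,\ldots,N\} : p_z = p'_z\}$ and assume for contradiction that $B \neq \{1,\ldots,N\}$. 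Then BGH Assumption~3' applied to $q \in \mathtt Q(p)$ and $q' \in \mathtt Q(p')$ produces some $\tilde z \in B \cup \{0\}$ with $q_{\tilde z} < q'_{\tilde z}$, contradicting $q = q'$ (taking $q_0 := -\sum_z q_z$ handles the $\tilde z = 0$ case, since we already have $\sum q_z = \sum q'_z$). Hence $p = p'$, so $\tilde p = \tilde p'$, which is strong nonreversingness.

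With strong nonreversingness established, Theorem~\ref{thm:strong-inverse-isotonicity} immediately yields that $\tilde{\mathtt Q}$ is an M-correspondence with point-valued, isotone inverse, giving the asserted conclusion. The main delicate point is that BGH Assumption~3' is not repeated in the corollary statement itself; a trivial constant correspondence $\mathtt Q(p)=\{q_0\}$ satisfies unified gross substitutes but defeats strong nonreversingness, so some such extra hypothesis is genuinely required and the proof must flag it explicitly as carried over from the paragraph preceding the statement.
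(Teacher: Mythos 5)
Your proof is correct and follows the paper's route exactly: constant aggregate output of $\tilde{\mathtt Q}$ gives nonreversingness, BGH Assumption~3' upgrades this to strong nonreversingness by the argument analogous to Corollary~\ref{church}, and Theorem~\ref{thm:strong-inverse-isotonicity} delivers the conclusion. You are also right to flag that Assumption~3' is a hidden but essential hypothesis---the paper leaves it implicit in the paragraph preceding the corollary, and your constant-correspondence counterexample confirms it cannot be dropped.
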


\noindent Appendix \ref{horns} shows that if the correspondence $\tilde {\mathtt Q}$ satisfies unified gross substitutes, then so does $Q$.

\subsection{Topkis' Theorem}
\label{par:connections-mcs}
When the inverse $\mathtt Q^{-1}$ is itself the solution to a maximization problem, we may be able to obtain  an inverse isotonicity result via monotone comparative static arguments familiar from Topkis \cite{Topkis1978,Topkis1998}.    For example, we may be interested in the supply function of a competitive, multiproduct firm, given by 
\begin{equation}\label{saxophone}
\mathtt Q(p) = \arg \max_{q\in \mathbb{R}_{+}^{N}}\left\{p^{\top }q-c\left( q\right) \right\}=\partial c^*(p),
\end{equation}		
where $c$ is a convex cost function $c:\mathbb R_+^N\rightarrow \mathbb R$,  $c^*(p) = \max_{q\in \mathbb{R}_{+}^{N}}\left\{p^{\top }q-c\left( q\right) \right\}$ is the indirect profit function, and $\partial c^*(p)$ is the  subdifferential of $c^*$ at $p$.%
\footnote{Given the convexity of the cost function $c$, the general form of Shephard's lemma (Rockafellar \cite[Theorem 23.5, p. 218]{Rocket1997}) gives $\mathtt Q(p) = \partial c^*(p)$, where $\partial c^*(p)$ is the subdifferential of the indirect profit function.}
%
%
%
%
\[
\mathtt Q^{-1}(q)=\arg \max_{p\in \mathbb{R}_{+}^{N}}\left\{p^{\top }q-c^*\left( p\right) \right\} ~=~ \partial c(q),
\]
where $\partial c(q)$ is the subdifferential of $c$ at $q$.

We now have two routes to establishing the isotonicity in the strong set order of $\mathtt Q^{-1}$.   Section \ref{king} pursues an approach based on establishing an equivalence between the submodularity of the profit cost function $c^*$ and the supply correspondence $\mathtt Q = \partial c^* $ satisfying unified gross substitutes.  Alternatively,  we can note  that the function $p^{\top }q-c^*\left( p\right)$ is supermodular in $p$ (given the submodularity of $c^*$) and exhibits increasing differences, and so $\mathtt Q^{-1}$ is increasing in the strong set order (Topkis \cite[Theorem 6.1, p. 317]{Topkis1978}).

More generally, given a correspondence $\mathtt Q:P\rightrightarrows Q$, we can potentially exploit the existence of a function $g(p,q)$ such that the inverse correspondence  $\mathtt Q^{-1}$ is the solution to the maximization problem
\begin{equation}\label{string}
	\mathtt Q^{-1}(q) = \arg\max_{p\in P}g(p,q).
\end{equation}
%
%
%
%
%
%
Topkis \cite{Topkis1978} assumes that $g$ is supermodular in $p$ and has increasing differences in $(p,q)$, which imply the following sufficient condition for his isotonicity result:
\begin{equation}\label{spinner}
	q\geq q^{\prime}
	\implies
	g(p\vee p',q)-g(p',q)
	\geq
	g(p,q')-g(p\wedge p^{\prime},q').
\end{equation}
Milgrom and Shannon's \cite{MandS94} weaker assumptions of quasi-supermodularity and single crossing similarly imply
\begin{equation}\label{rooster}
	q\geq q^{\prime}\implies\left\{
	\begin{array}{c}
		g\left(p,q'\right)-g\left(p\wedge p^{\prime},q'\right)\geq0\implies
		g\left(p\vee
		p^{\prime},q\right)-g\left(p^{\prime},q\right)\geq0 \\
		g\left(p\vee
		p^{\prime},q\right)-g\left(p^{\prime},q\right)\leq0%
		\implies g\left(p,q'\right)-g\left(p\wedge p^{\prime},q'\right)\leq0.
	\end{array}%
	\right.
\end{equation}
%
%
%
%
Both sets of assumptions lead to monotone comparative statics results. In Appendix \ref{buffalo} we give two examples showing that our (unified gross substitutes and nonreversingness) conditions for inverse isotonicity do not imply those of Topkis or Milgrom and Shannon, nor do the reverse implications hold.  We thus have independent conditions for the case of optimization problems, while  we can also view our results as extending the analysis beyond the purview of optimization problems.

\section{Applications}\label{apple}

\subsection{Profit Maximization}\label{king}
In keeping with our interpretation of $\mathtt Q$ as a supply correspondence, suppose a competitive multiproduct firm faces output price vector $p\in \mathbb R^N$ and convex cost function $c:\mathbb R^N\rightarrow \mathbb R$.  Given a price $p$, the set of optimal production vectors $\mathtt Q(p)\subseteq \mathbb R^N$ is given by
\eqref{saxophone}.
%
%
%
%
%
%
The conjugate of the cost function $c$, denoted by $c^*(p)$, is the indirect profit function:
\[
c^*(p) = \max_{q\in \mathbb R^N}\left\{p^Tq-c(q)\right\}.  
\]
Note that by Property \ref{indiff}, the $\arg\max$ correspondence $\mathtt{Q}$ is nonreversing, and hence $\mathtt{Q}$ is a M0-correspondence if and only if it satisfies unified gross substitutes.

The following result, whose proof is given in Appendix \ref{purple}, relates unified gross substitutes to the submodularity of the indirect profit function. It offers a continuous counterpart to Ausubel and Milgrom's \cite{AandM2002} Theorem 10:

\begin{theorem}\label{spice}\label{wonder}
\label{thm:equivalence}The following conditions are equivalent:\\
(i) the indirect profit function $c^{\ast }$ is submodular, and\\
(ii) the supply correspondence $\mathtt Q(p) = \partial c^{\ast }\left( p\right) $ satisfies unified gross substitutes, and\\
(iii) the supply correspondence $\mathtt Q(p) =  \partial c^{\ast }\left( p\right) $ is an M0-correspondence.
\end{theorem}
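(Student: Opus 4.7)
My plan is first to dispense with (ii) $\iff$ (iii), then tackle the substantive equivalence (i) $\iff$ (ii). For (ii) $\iff$ (iii): by Fenchel duality $\partial c^*(p) = \arg\max_q\{p^\top q - c(q)\}$, so $\mathtt{Q}$ is the argmax of $\varphi(p, q) = p^\top q - c(q)$. Since $\varphi$ is linear in $p$, it has increasing differences in $(p, q)$ and thus single crossing, and Property~\ref{indiff} gives that $\mathtt{Q}$ is automatically nonreversing. An M0-correspondence is by definition UGS plus nonreversingness, and the latter is free here, so (ii) and (iii) are equivalent.

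For (ii) $\implies$ (i), fix $p, p' \in P$ and arbitrary $q \in \partial c^*(p)$, $q' \in \partial c^*(p')$, and apply UGS to obtain $q^\wedge \in \partial c^*(p \wedge p')$, $q^\vee \in \partial c^*(p \vee p')$ satisfying \eqref{lester}--\eqref{flatt}. The plan is to combine the subgradient inequalities for $c^*$ at the four points $p, p', p\wedge p', p \vee p'$ with the coordinate-wise UGS comparisons on $I = \{z: p_z \le p'_z\}$ and $J = \{z: p_z > p'_z\}$---using also the monotonicity of $\partial c^*$ that follows from convexity---to deduce $c^*(p \vee p') + c^*(p \wedge p') \le c^*(p) + c^*(p')$. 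The main step pairs, for example, $c^*(p') - c^*(p \wedge p') \ge \sum_{z \in I} q^\wedge_z(p'_z - p_z)$ with $c^*(p \vee p') - c^*(p) \le \sum_{z \in I} q^\vee_z(p'_z - p_z)$ and exploits $q^\wedge_z \ge q_z$, $q^\vee_z \le q'_z$ on $I$ (with analogous statements on $J$). For (i) $\implies$ (ii), I would first handle the $C^2$ strongly convex case: $\partial c^*(p) = \{\nabla c^*(p)\}$ is single-valued, Property~\ref{prop:wgs-implies-ugs} reduces UGS to weak gross substitutes ($\partial^2_{p_i p_j} c^* \le 0$ for $i \ne j$), and this is the standard Hessian characterization of submodular $C^2$ functions (Topkis~\cite{Topkis1998}). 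For general convex submodular $c^*$, apply Moreau--Yosida regularization $c^*_\epsilon$, which yields a $C^{1,1}$ convex function inheriting submodularity from $c^*$; invoke the smooth case and pass to the limit $\epsilon \downarrow 0$ via convergence of subdifferentials.

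The main obstacle is the limit step in (i) $\implies$ (ii). UGS asserts the existence of specific subgradients at $p \wedge p'$ and $p \vee p'$ whose coordinates obey \eqref{lester}--\eqref{flatt} against arbitrary $q, q'$, and extracting a $c^*$-witness from a sequence of $c^*_\epsilon$-witnesses demands a compactness/selection argument that preserves the coordinate inequalities through the limit; the (ii) $\implies$ (i) bookkeeping is also delicate since a naive combination of subgradient inequalities with monotonicity falls just short of the submodular inequality and the UGS comparisons must be used in a carefully coordinated fashion on $I$ and $J$.
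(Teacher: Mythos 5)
Your disposal of (ii) $\iff$ (iii) via Property~\ref{indiff} is exactly what the paper does (the sentence immediately preceding the theorem makes the same observation), so that part is fine. The two substantive implications, however, both have genuine gaps, and they are not the gaps one can close with a bit more care---they reflect that the paper needs a structurally different toolkit.

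For (ii) $\implies$ (i): the pairing you describe does not yield the submodular inequality. Writing $D = c^*(p) + c^*(p') - c^*(p\wedge p') - c^*(p\vee p')$ and carrying out your bookkeeping on $I = \{z : p_z \le p'_z\}$ and $J = \{z : p_z > p'_z\}$ with the UGS inequalities $q^\wedge \ge q\mathbf 1_I + q'\mathbf 1_J$, $q^\vee \le q'\mathbf 1_I + q\mathbf 1_J$, the best you can extract is $2D \ge (q^\wedge - q^\vee)^\top(p\vee p' - p\wedge p')$. But monotonicity of $\partial c^*$ forces $(q^\wedge - q^\vee)^\top(p\vee p' - p\wedge p') \le 0$, so the bound is vacuous; substituting the UGS coordinatewise inequalities only lowers it to $-(q-q')^\top(p-p')$, which is again $\le 0$. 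Conversely, using subgradients at $p$ and $p'$ gives $2D \le (q-q')^\top(p-p')$, so $D$ is sandwiched between a nonpositive and a nonnegative quantity and the argument stalls. What the paper does instead (Lemmas~\ref{lem:char-subdiff}--\ref{lemma:diff-char-submod}) is replace pointwise subgradient inequalities with the \emph{support function} of $\partial c^*(p)$, i.e.\ with directional derivatives $\frac{d}{dt}c^*(p+tb)|_{0^+} = \max_{q \in \partial c^*(p)} q^\top b$, which turn the subgradient inequality into an \emph{equality} in the relevant direction. Lemma~\ref{lemma:diff-char-submod} then characterizes submodularity as a sub-additivity of directional derivatives in directions $b^\le$ and $b^>$, and the UGS-supplied point $q\mathbf 1_I + q'\mathbf 1_J \le q^\wedge$ plugs in precisely there. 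Your ``carefully coordinated fashion'' is not a refinement of the same argument; it is a different argument, and it cannot be coaxed out of subgradient inequalities alone.

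For (i) $\implies$ (ii): the smooth case reduction is fine (Hessian characterization of submodularity plus Property~\ref{prop:wgs-implies-ugs}), and Moreau--Yosida regularization does preserve submodularity here (the quadratic penalty is modular, so the infimum of the jointly submodular function $c^*(p') + \tfrac{1}{2\epsilon}\|p-p'\|^2$ over $p'$ is submodular). The breaking point is exactly the one you flag, and it is fatal: as $\epsilon \downarrow 0$, $\nabla c^*_\epsilon(p)$ converges to the \emph{minimal-norm} element of $\partial c^*(p)$, not to an arbitrary prescribed $q \in \partial c^*(p)$. UGS is a statement quantified over \emph{all} $q \in \partial c^*(p)$ and $q' \in \partial c^*(p')$, and you would only obtain it for the least-norm selection. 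Translating the regularization (penalizing $\|q - \bar q\|^2$ with $\bar q$ chosen to target $q$ or $q'$) makes the regularized supply map depend on the target, so it is no longer a single function to which the WGS/UGS machinery applies. The paper sidesteps this entirely by never selecting subgradients: it works with the down-set $\widetilde{\partial c^*(p)}$ and characterizes it via the support function restricted to $b \ge 0$ (Lemma~\ref{tympani}), so the conclusion ``$q\mathbf 1_I + q'\mathbf 1_J$ lies below \emph{some} element of $\partial c^*(p \wedge p')$'' is produced directly, for the arbitrary given $q, q'$, without any approximation or limit. That set-level, support-function viewpoint is the essential new idea, and it is what the regularization route lacks.

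Bottom line: your decomposition is right, the nonreversingness step is right, and your two flagged obstacles are real---but they are not loose ends, they are the content of the theorem. The paper's Lemmas~\ref{lem:char-subdiff}--\ref{lemma:diff-char-submod} exist precisely to replace ``choose a subgradient and bound'' by ``maximize over the subdifferential and characterize,'' and without something in that spirit neither direction closes.
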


\begin{remark}\label{remark_imperfect_competition}
{\em
We can relax the assumption of perfect competition and accommodate a more general revenue function.   Consider a firm whose profit maximization problem is given by
\begin{equation}\label{move}
c^\pi(p) = \max_{q\in\mathbb R^N} \left\{ \sum_{z=1}^N\pi_z(p_z,q_z)-c\left(q\right) \right\},
\end{equation}
where $c:\mathbb R^N \rightarrow \mathbb R$ is an increasing and strictly convex cost function and each function $\pi_z:\mathbb R^2\rightarrow\mathbb R$ is a revenue function that is differentiable, increasing in $p_z$ and concave in $q_z$ and satisfies $\frac{\partial^2\pi_z(p_z,q_z)}{\partial p_z \partial q_z}>0$ and $\frac{\partial^3\pi_z(p_z,q_z)}{\partial p_z \partial q_z^2 }>0$.  The interpretation is that a firm chooses quantities $q_1\ldots, q_N$ to sell in each of $N$ markets.  The total cost of production is given by $c(q)$.  Revenue in each market $z$ is given by $\pi_z(p_z,q_z)$.   Market $z$ may be perfectly competitive, in which case we can take $p_z$ to be the price in the market.  More interestingly, market $z$ may be imperfectly competitive, in which case $p_z$ is a demand shifter, with higher values of $p_z$ indicating higher revenue and higher marginal revenue for each given quantity. 

Appendix \ref{sawbones} shows that the supply correspondence $\mathtt Q$ emanating from \eqref{move} satisfies unified gross substitutes if and only if the indirect profit function $c^{\pi}$ is submodular, providing an extension of Theorem~\ref{spice} to revenue functions $\pi_z(p_z,q_z)$ that are more general than $\pi_z(p_z,q_z) = p_z q_z$.
\hfill\rule{.1in}{.1in}}
\end{remark}

\begin{remark}{\em
Gul and Stacchetti  \cite[Theoren 1, p, 99]{GandS1999} show that in their context of utility maximization with indivisible goods (so that an individual consumes either zero or one unit of each good and a consumption bundle is a list of which goods are consumed), gross substitutes is equivalent to each of two additional properties that they introduce, namely no complementarities and single improvement.  Their no-complementarities condition stipulates that if $X$ and $Y$ are both optimal consumption bundles and $Z\subset X\setminus Y$, then one can also obtain an optimal consumption bundle by removing the goods $Z$ from the set $X$ and adding some elements of $Y$.  For the case of divisible goods, Galichon et al. \cite{Galichon2022} offer the following continuous counterpart, which requires the optimality of only one of the two bundles:%
\footnote{A counterpart of Gul and Stacchetti's single improvement property is less relevant in the case of divisible goods.  Their single improvement property states that if the set of goods $X$ is suboptimal, then one can find a superior set by at most deleting one existing object from $X$ and adding one additional object.  The counterpart of this in the divisible case would presumably be that if the allocation $q$ is suboptimal, then one can reach a superior allocation by adjusting at most two dimensions of $q$.  Suppose, however that we maintain the standard assumption that the cost function $c$ is convex.  If  $c$ is differentiable, then it is immediate that for any suboptimal $q$, there is an adjustment of a single dimension of $q$ that leads to a superior allocation.  We thus get the counterpart of single improvement without any appeal to substitution properties.}	

\begin{definition}[{\bf No Complementarities}]\label{def:no-complement}
	The correspondence $\mathtt Q(p)$ satisfies the no complementarities property if whenever $q\in \mathtt Q(p)$ and $q'\neq q$, then for each $\delta _{1}$ such that $0\leq
	\delta _{1}\leq \left( q^{\prime }-q\right) ^{+}$, there exists $\delta _{2}$
	such that $0\leq \delta _{2}\leq \left( q-q^{\prime }\right) ^{+}$ such that
	\[
	p^{\top} (q'+\delta_2-\delta_1)-c(q'+\delta_2-\delta_1)\ge p^{\top} q'-c(q').
	\]
\end{definition}

\noindent  When applied to the special case in which $q, q'\in \mathtt Q(p)$, the result is the functional equivalent of Gul and Stacchetti's condition, in that we begin with the optimal allocations $q$ and $q'$, and then construct a new optimal allocation by adding to $q'$ in some dimensions $z$ in which $q'_z<q_z$ and subtracting from $q'$ in some dimensions $z$ in which $q'_z>q_z$. Galichon et al. \cite{Galichon2022} prove the following:%
\footnote{This result is related to Chen and Li \cite[Theorem 2, p. 13]{ChenLi2020}  who show that if $c$ is convex and satisfies their S-EXC property, then  its conjugate is submodular.  However, Chen and Li's definition does not one allow to establish the converse implication, while the notion introduced in Definition~\ref{def:no-complement} allows us to formulate a necessary and sufficient condition.}

\begin{theorem}\label{tarantula}
	Assume $c\left( q\right) $ is convex.  The  correspondence $\mathtt Q$ defined in \eqref{saxophone} satisfies unified gross substitutes if and only if it satisfies no complementarities.  
\end{theorem}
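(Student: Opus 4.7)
My plan is to invoke Theorem~\ref{spice}, which identifies unified gross substitutes of the supply correspondence $\mathtt{Q}=\partial c^{\ast}$ with submodularity of the indirect profit function $c^{\ast}$. The task thus reduces to showing that $c^{\ast}$ is submodular if and only if $\mathtt{Q}$ satisfies the no complementarities property, and I would establish the two implications in turn.

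For the forward direction (submodularity of $c^{\ast}$, equivalently UGS, implies no complementarities), fix $q \in \mathtt{Q}(p)$, $q' \neq q$, and $\delta_{1} \in [0,(q'-q)^{+}]$. Let $B = \{z : q'_{z} > q_{z}\}$. I would take the maximal candidate $\delta_{2} := (q-q')^{+}$, yielding $q'' := q' + \delta_2 - \delta_1 = q \vee q' - \delta_{1}$, and aim to verify $\phi(q'') \geq \phi(q')$, where $\phi(r) = p^{\top}r - c(r)$. Note that $q''$ agrees with $q$ on $B^c$ and lies coordinatewise between $q$ and $q'$ on $B$. The route is to pick some $p' \in \partial c(q')$ (nonempty by convexity of $c$), which places $q' \in \mathtt{Q}(p')$, apply UGS to the pair $(p,q)$ and $(p',q')$ to produce $q^{\vee}\in\mathtt{Q}(p \vee p')$ and $q^{\wedge}\in\mathtt{Q}(p \wedge p')$ with the requisite coordinatewise inequalities, and finally combine these UGS bounds with the Fenchel identities $c^{\ast}(\tilde p) + c(r) \geq \tilde p^{\top} r$ (with equality for $r\in\mathtt{Q}(\tilde p)$) to transfer the comparison to $\phi(q'') \geq \phi(q')$ at price $p$.

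For the backward direction (no complementarities implies UGS), fix $p, p' \in P$, $q \in \mathtt{Q}(p)$, and $q' \in \mathtt{Q}(p')$, and select any $\bar q \in \mathtt{Q}(p \vee p')$ (nonempty under standard coercivity of $c$). I apply no complementarities at the price $p \vee p'$ with $\bar q$ as the optimal point and $q$ as the alternative, choosing $\delta_{1}$ and $\delta_{2}$ aligned with the signs of $p_z - p'_z$. Because $\bar q$ is already optimal at $p \vee p'$, the produced improvement must be optimal too, so the modified point still lies in $\mathtt{Q}(p \vee p')$. Iterating with $q'$ in place of $q$ yields a point $q^{\vee}\in\mathtt{Q}(p \vee p')$ satisfying both UGS coordinate inequalities simultaneously: $q^\vee_z \leq q'_z$ on $\{z : p_z \leq p'_z\}$ and $q^\vee_z \leq q_z$ on $\{z : p_z > p'_z\}$. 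A symmetric construction at $p \wedge p'$ yields $q^{\wedge}$.

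The main obstacle is the forward direction, specifically transferring the UGS inequalities (which relate supplies at the shifted lattice-operation prices $p \vee p'$ and $p \wedge p'$) back to the no complementarities inequality for $\phi$ at the original price $p$. Doing this cleanly requires a careful combination of UGS bounds on $q^\vee, q^\wedge$ with Fenchel inequalities, and identifying the right linear combination is the delicate step; a key sanity check will be that under the natural choice of $p' \in \partial c(q')$, the set $B = \{q' > q\}$ aligns (at least partially) with $\{p' > p\}$, so the UGS cases match the support structure of $\delta_1, \delta_2$. The backward direction is technically cleaner but still requires verifying that iterated applications of no complementarities can be arranged to deliver a single point satisfying both UGS sign patterns at once.
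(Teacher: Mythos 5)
The paper does not contain its own proof of Theorem~\ref{tarantula}: the statement is presented inside a remark with the attribution ``Galichon et al.~\cite{Galichon2022} prove the following,'' so there is no in-paper argument to compare against. Judged on its own merits, your proposal has two genuine gaps, one in each direction.

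\textbf{Forward direction.} You propose to fix the ``maximal'' exchange vector $\delta_2=(q-q')^{+}$, producing $q''=q\vee q'-\delta_1$, and then show $\phi(q'')\geq\phi(q')$ by pulling in a $p'\in\partial c(q')$ and applying UGS. This strategy is not merely delicate (as you note); the specific choice $\delta_2=(q-q')^{+}$ is simply wrong in general. The definition of no complementarities only asserts that for each $\delta_1$ there \emph{exists} a suitable $\delta_2$, and that $\delta_2$ must in general depend on $\delta_1$. Concretely, take $N=2$, $c(q)=\tfrac12q_1^2+\tfrac12q_2^2+\tfrac{9}{10}q_1q_2$ (strictly convex, supermodular, so $c^\ast$ is submodular and UGS holds), $p=(1,1)$, whence $q=\bigl(\tfrac{10}{19},\tfrac{10}{19}\bigr)\in\mathtt{Q}(p)$. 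Let $q'=(2,0)$ and $\delta_1=0$. Then your prescription yields $\delta_2=(0,\tfrac{10}{19})$ and $q''=(2,\tfrac{10}{19})$, but
\[
\phi(q'')-\phi(q')=\tfrac{10}{19}\Bigl(1-\tfrac{5}{19}-\tfrac{9}{5}\Bigr)<0,
\]
so the ``maximal'' exchange strictly hurts, even though no complementarities trivially holds here with $\delta_2=0$. Any correct proof of this direction must let $\delta_2$ adapt to $\delta_1$ (as well as to $q$ and $q'$), and the mechanism by which the UGS inequalities at $p\vee p'$, $p\wedge p'$ deliver the right adaptive $\delta_2$ at the \emph{original} price $p$ is precisely what is missing.

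\textbf{Backward direction.} You apply no complementarities at price $p\vee p'$ with $\bar q\in\mathtt{Q}(p\vee p')$ as the optimizer and $q$ as the alternative, and then assert that ``because $\bar q$ is already optimal at $p\vee p'$, the produced improvement must be optimal too, so the modified point still lies in $\mathtt{Q}(p\vee p')$.'' This does not follow. The no-complementarities property only guarantees $\phi_{p\vee p'}(q+\delta_2-\delta_1)\geq\phi_{p\vee p'}(q)$, i.e., an improvement over the \emph{alternative} $q$, which is not assumed optimal at $p\vee p'$; it gives you nothing about whether the modified point attains the maximum at $p\vee p'$. Consequently you have not produced a $q^\vee$ that actually lies in $\mathtt{Q}(p\vee p')$. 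The ``iterating with $q'$'' step inherits the same problem, and the supports of $\delta_1,\delta_2$ in the definition are tied to the sign pattern of $q-q'$ rather than the sign pattern of $p-p'$, so the claim that one can ``align $\delta_1,\delta_2$ with the signs of $p_z-p'_z$'' is not directly available either. A correct argument here needs to show that some element of $\mathtt{Q}(p\vee p')$ lies below the threshold $q'\mathbf{1}_{\{p\leq p'\}}+q\mathbf{1}_{\{p>p'\}}$ (and dually for $p\wedge p'$), which requires a different mechanism from a single improvement over a suboptimal alternative.
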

\hfill\rule{.1in}{.1in}}
\end{remark}


\subsection{Structures of Solutions}

It is a familiar result that if each agent's (point-valued) demand function in an exchange economy satisfies strict gross substitutes, then that economy has a unique equilibrium (e.g., Arrow and Hahn \cite[Chapter 9]{AandH1971}).  In more general settings that allow for set-valued demand and supply functions, the proximate notion of a unique equilibrium is that the set of equilibrium prices forms a lattice.   Since Walras law implies nonreversingness (Property \ref{pty:walras-law}), we have the result that if the excess supply correspondence $\mathtt Q$ satisfies unified gross substitutes, then the set of equilibrium prices is a sublattice.  Polterovich and Spivak \cite[Corollary 1, p. 125]{Posp1983} similarly show that the set of equilibrium prices in an exchange economy satisfying their gross substitutes condition is a lattice.  
Gul and Stacchetti \cite[Corollary 1, p. 105]{GandS1999} have a similar result for economies with indivisible goods.

 In the point-valued case, the notion of \emph{subsolutions} (a price vector $p$ such that $Q(p) \leq 0$ (or any other exogenously-specified value) and \emph{supersolutions} (a price vector $p$ such that $Q(p)\geq 0$) play an important role. One can show in particular that if $Q$ is continuous and satisfies weak gross substitutes, then any subsolution $p$ that is not a solution, is such that there exists another subsolution $p^\prime$ such that $p < p^\prime$. As a result, if an iterative method produces a sequence of subsolutions $p_k$ that increases ``enough,'' in the sense that the set of subsolutions that dominate $p_k$ shrinks, one may show that it converges to a solution. As our larger goal is to provide iterative methods for correspondences, we investigate set-valued extensions of these notions.   Appendix \ref{wolverine} reports first steps in this broader research agenda.

\subsection{An Equilibrium Flow Problem}

\paragraph{Network.} Consider a network $\left(\mathcal{Z},\mathcal{A}\right)$ where $%
\mathcal{Z}$ is a finite set of nodes and $\mathcal{A} \subseteq \mathcal Z \times \mathcal Z$ is the set of directed
arcs.  If $xy \in \mathcal A$, we say that $xy$ is the arc from $x\in \mathcal Z$ to $y\in \mathcal Z$, and we say that $x$ is the starting point of the arc, while $y$ is its end point. We assume that there is no arc in $\mathcal A$ whose starting point coincides with the end point. We describe the network with an $|\mathcal{A}|\times
|\mathcal{Z}|$ \emph{arc-node incidence matrix} matrix $\nabla$, defined by letting,
for $xy\in \mathcal{A}$ and $z\in Z$
\[
\nabla _{xy,z}=\mathbf 1_{\left\{ z=y\right\}} -\mathbf 1_{\left\{ z=x\right\}}.
\]
We thus have $\nabla _{xy,z}=1$ if $xy$ is an arc  ending at $z$, and $\nabla _{xy,z}=-1$
if $xy$ is an arc beginning at $z$.  Otherwise $\nabla _{xy,z}=0$.

\paragraph{Prices, connection functions.} Let $p\in\mathbb{R}^{\mathcal{Z}}$ be a price vector, where we interpret $p_{z}$ as the price at node $z$.  
To have a concrete description, though we do not require this interpretation, one may consider a trader operating on node $xy$, who is able to purchase one unit of a commodity at node $x$, ship it along arc $xy$ toward node $y$, and resell it at node $y$. Given the resale price at node $y$, there is a certain threshold value of the price at node $x$ such that the trader is indifferent between engaging in the trade or not. This value is an increasing and continuous function of the price at node $y$, and can be expressed as $G_{xy}(p_y)$, where for each arc $xy \in \mathcal A$, the function 
$G_{xy}: \mathbb R \to \mathbb R$ is continuous and increasing, and called the \emph{connection function}.\footnote{This is related to the idea of a Galois connections, which explains the choice of the letter $G$;  see~N\"{o}ldeke and Samuelson \cite{NandS2017}.}
Hence, if $p_x>G_{xy}(p_y)$, the purchase price at node $x$ is excessive, and the trader will not engage in the trade. On the contrary, if $p_x<G_{xy}(p_y)$, the purchase price is strictly below indifference level and positive rent can be made from the trade on the arc $xy$.

Our framework allows for any situation where the per-unit rent $\Pi_{xy}(p_x,p_y)$ of the trade on arc $xy$ is a continuous and possibly nonlinear function of $p_x$ and $p_y$, increasing in the resale price $p_y$ and decreasing in the purchase price $p_x$. In that case, $G_{xy}(p_y)$ is implicitly defined from $\Pi_{xy}$ by $G_{xy}(p_y) = \Pi_{xy}(.,p_y)^{-1}(0)$, or equivalently, $\Pi_{xy}(G_{xy}(p_y),p_y)=0$.

\begin{example}[Additive case]\label{ex:additive}{\em 
A simple example of connection function  assumes linear surplus for the trader, in which case the per-unit profit of the trader on arc $xy$ is $p_y - p_x - c_{xy}$ where $c_{xy}$ is the unit shipping cost from $x$ to $y$, and the indifference price at arc $x$ is \begin{equation}\label{TU-connection}
G_{xy}(p_y) = p_y - c_{xy}. 
\end{equation}
We refer to this as the \emph{additive case}.   
\hfill\rule{.1in}{.1in}}
\end{example}

\paragraph{Exiting flow, internal flow, mass balance.} Let $q\in \mathbb R^{\mathcal Z}$ with $\sum _{z \in \mathcal Z}q_z=0$ attach a net flow to each node $z\in \mathcal Z$.  If $q_z>0$, then the net quantity $|q_z|$ must flow into node $z$, while $q_z<0$ indicates that the net quantity $|q_z|$ must flow away from node $z$. Hence, we call $q$  the vector of \emph{exiting flows}. We let  $\mu\in\mathbb R^{\mathcal A}_+$ be the vector of \emph{internal flows} along arcs, so that  $\mu_{xy}$ is the flow through arc $xy$.  The feasibility condition connecting these notions is that, for any $z\in\mathcal{Z}_{0}$, the total internal flow that arrives at $z$ minus the total internal flow that leaves $z$ equals the exiting flow at $z$, that is 
\begin{equation*}
	\sum_{x:xz\in\mathcal{A}}\mu_{xz}-\sum_{y:zy\in\mathcal{A}}\mu_{zy}=q_{z},
\end{equation*}
which we call the \emph{mass balance equation}, and which can be rewritten as
\begin{equation}\label{guitar}
	\nabla^{\intercal}\mu=q.
\end{equation}
The interpretation of these flows will depend on the application of the equilibrium flow problem.  The flows may represent quantities of commodities, volumes of traffic, assignments of objects, matches of individuals, and so on.

\paragraph{Equilibrium flow.} The triple $(q,\mu,p)\in \mathbb R^{\mathcal Z}\times \mathbb R^{\mathcal A}_+\times \mathbb R^{\mathcal Z}$ is an equilibrium flow outcome if it satisfies three conditions.  The first condition is the conservation of the flow given by the mass balance equation~\eqref{guitar}.  The second condition is that there  there is no positive rent on any arc, that is:
\begin{equation*}
	p_x \geq G_{xy}\left(p_y\right)\ ~\forall xy\in\mathcal{A}.
\end{equation*}
Our third condition is that arcs with negative rents carry no flow.   Hence $\mu_{xy}>0\implies  p_x \leq G_{xy}\left(p_y\right)$,
which combines with no-positive-rent requirement to yield
$\mu_{xy}>0\implies p_x = G_{xy}\left(p\right)$. This is a complementary
slackness condition, which can be written
\begin{equation*}
	\sum_{xy\in\mathcal{A}}\mu_{xy} \left( p_x - G_{xy}\left(p_y\right) \right) =0.
\end{equation*}
The interpretation of these rent conditions will again depend on the application.  In some cases, they will be the counterparts of zero-profit conditions in markets with entry, while in other cases they will play the role of incentive constraints.

In summary, we define:

\begin{definition}[{\bf Equilibrium Flow Outcome}] 
	The triple $(q,\mu,p)\in \mathbb R^{\mathcal Z}\times \mathbb R^{\mathcal A}_+\times \mathbb R^{\mathcal Z}$ is an equilibrium flow outcome when the following conditions are met:
	
	\bigskip
	
	(i) $\nabla^{\intercal}\mu=q$
	
	(ii) $p_x \geq G_{xy}\left(p_y\right)\ ~\forall xy\in\mathcal{A}$
	
	(iii) $\sum_{xy\in\mathcal{A}}\mu_{xy} \left( p_x - G_{xy}\left(p_y\right) \right) =0$.
\end{definition}

\noindent The first condition implies $\sum_{z\in\mathcal Z}q_z=0$.  Notice that if $p$ satisfies condition (ii), then setting $q=0$ and $\mu = 0$ ensures that the remaining conditions are satisfied. Indeed, if $(q,\mu,p)$ is a equilibrium flow outcome, then so is $(\lambda q,\lambda \mu,p)$ for any nonnegative scalar $\lambda$.  Hence, there will either be no equilibrium flow outcome (if there is no $p$ satisfying condition (ii)) or there will be multiple equilibrium flows outcomes.  
Figure~\ref{fig:example_equilibrium_flow} presents a simple example of an equilibrium flow outcome. 

\begin{figure}[t]
    \centering
    \includegraphics[width=0.5\textwidth]{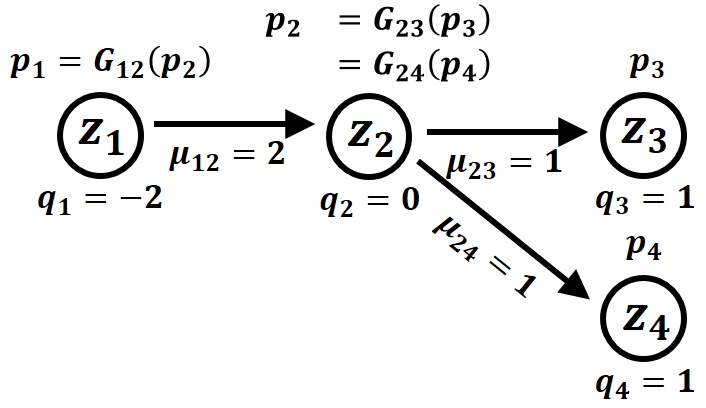}
    \caption{A simple example of equilibrium flow outcome.  The three equilibrium conditions are satisfied as (i) the flow balance condition is satisfied at each node of the network (ii) there is no arbitrage on the network (iii) when there is a trade on an arc ($\mu$ larger than zero on this arc), rent on this arc is nonnegative.}
    \label{fig:example_equilibrium_flow}
\end{figure}

\begin{continueexample}{ex:additive}
		Consider the network $\left(\mathcal{Z},\mathcal{A}%
		\right)$, with the node-arc incidence matrix $\nabla$ and the additive rent function $G_{xy}(p_y) = p_y - c_{xy}$.
		%
		%
		%
		%
		%
		%
		%
		%
		%
		In this additive case, the no-positive-rent condition rewrites as $p_{y}-p_{x}\leq c_{xy}~\forall xy\in\mathcal{A}$. Let  $q$ be a vector of exit flows.  A triple $(q,\mu,p)$ is then an equilibrium flow outcome if
		%
		%
		\begin{eqnarray*}
			(i)&&\nabla^T\mu = q\\
			(ii)&& \left(\nabla p\right)_{a}\leq c_{a}~~~\forall a\in \mathcal A\\
			(iii) &&\mu_{a}>0\implies \left(\nabla p\right)_{a}=c_{a}~~~\forall a\in \mathcal A.
		\end{eqnarray*}
		These conditions are exactly the optimality conditions associated with the
		linear programming problem which consists of minimizing $\mu^\top c$ subject to $\mu \geq 0$ and $\nabla^\top \mu = q$, which has dual $\max_p \left\{ p^\top q : \nabla p \leq c\right\}$. This problem is well-known as the minimum-cost flow problem (see e.g. Ahuja, Magnanti and Orlin \cite{AMandO1993}).  We can interpret the problem as one in which each source node $z$ contains an amount $-q_z$ of material (soil, in the original incarnation of the problem by Monge \cite{Monge1781}), with the objective being to minimize the cost of transporting $q_{\tilde z}$ of this material to each destination $\tilde z$.\hfill\rule{.1in}{.1in}
\end{continueexample}

\begin{example}[Nonadditive shortest path problem]{\em Our setting allows for a natural extension of the well-known (additive) shortest path problem of Bellman \cite{Bellman1958}. The shortest path problem is an 
instance of the minimum-cost flow problem described in the example above, with a single origin or source node $o\in\mathcal{Z}$ and a single destination node $d\in\mathcal{Z}$, and $q_{z}=\mathbf 1_{\left\{ z=d\right\}} -\mathbf 1_{\left\{ z=o\right\} }$. In that case, the additive shortest path problem consists of looking for the path from $o$ to $d$ through the network with the smallest sum of arc costs.
		
We introduce our nonadditive extension by assuming that $p_z$ measures the time of passage at node $z$. Fixing $p_d$, which is the arrival time at the destination node $d$, one seeks the latest departure from  the origin node $o$ consistent with arriving at the destination at no later than time $p_d$.  The innovation allowed by the equilibrium flow formulation is that we can allow the duration of the travel though arc $a$ to depend on the tine the arc is reached, which we can interpret as reflecting the state of traffic, or discrete times of passage of the means of transportation, and so on.

More formally, if a traveler on arc $xy$ aims at arriving at node $y$ at time $p_{y}$, then she must leave node $x$ no later than time $p_{x}=G_{xy}\left( p_{y}\right)$.  To avoid time travel, we assume throughout that the various functions $G_{xy}:\mathbb R\rightarrow \mathbb R$ are increasing and satisfy $G_{xy}\left( p_{y}\right) <p_{y}$. As one can arrive at node $y$ at time $p_y$ by leaving no sooner than $p_x$, the equilibrium should satisfy the condition 
\[
p_{x}\geq G_{xy}\left( p_{y}\right).
\] 
At equilibrium, we have  %
\[
\mu _{xy}>0 ~~\implies ~~p_{x}=G_{xy}\left( p_{y}\right),
\]
which ensures that if arc $xy$ is visited, the time of departure is consistent with the date of arrival. 
%
\hfill\rule{.1in}{.1in}
}
\end{example}

\paragraph{Equilibrium flow correspondence.}  Given an equilibrium flow problem, let the {\em equilibrium flow correspondence} be the correspondence between prices $p$ and quantities $q$ that appear in an equilibrium flow. More formally:

\begin{definition}[Equilibrium flow correspondence]
The equilibrium flow correspondence is the correspondence $\mathtt Q:P\rightrightarrows\mathbb R^{\mathcal Z}$ defined by the fact that for $p\in P$,  $\mathtt Q(p)$ is the set of $q \in \mathbb R^{\mathcal Z}$ such that  there is a flow $\mu$ such that $(q,\mu,p)$ is an equilibrium flow outcome. 
\end{definition}

Appendix \ref{iron} proves:

\begin{theorem}\label{camel}
	The equilibrium flow correspondence $\mathtt Q:\mathbb R^{\mathcal Z}\rightrightarrows\mathbb R^{\mathcal Z}$
	satisfies unified gross substitutes.
\end{theorem}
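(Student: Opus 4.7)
The plan is to decompose the equilibrium flow correspondence $\mathtt Q$ as a Minkowski sum of single-arc correspondences and then invoke the aggregation Property~\ref{bigdog}. The starting observation is that the no-arbitrage set $P = \{p \in \mathbb R^{\mathcal Z} : p_x \geq G_{xy}(p_y)\text{ for all }xy \in \mathcal A\}$ is a sublattice of $\mathbb R^{\mathcal Z}$: for any arc $xy$ and $p, p' \in P$, whichever of $p, p'$ equals $(p\wedge p')_x$---say $p$---satisfies $(p\wedge p')_x = p_x \geq G_{xy}(p_y) \geq G_{xy}((p\wedge p')_y)$ because $G_{xy}$ is increasing, and the argument for $p\vee p'$ is symmetric.

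Next, since $q = \nabla^{\top}\mu = \sum_{xy\in\mathcal A} \mu_{xy}(e_y - e_x)$ with $e_z$ the standard basis vector at node $z$, and since complementary slackness (iii) forces $\mu_{xy} = 0$ whenever $p_x > G_{xy}(p_y)$, I decompose
\[
\mathtt Q(p) = \sum_{xy \in \mathcal A} \mathtt Q^{xy}(p), \qquad \mathtt Q^{xy}(p) = \begin{cases}\mathbb R_+(e_y - e_x) & \text{if } p_x = G_{xy}(p_y),\\ \{0\} & \text{if } p_x > G_{xy}(p_y),\end{cases}
\]
as a Minkowski sum, with each $\mathtt Q^{xy}$ defined on the sublattice $D_{xy} = \{p : p_x \geq G_{xy}(p_y)\}$ (a sublattice by the same argument applied to a single arc). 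Since $P = \bigcap_{xy} D_{xy}$ is a sublattice and Property~\ref{bigdog} extends by induction to finitely many summands, it suffices to show that each single-arc correspondence $\mathtt Q^{xy}$ satisfies unified gross substitutes.

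For each $\mathtt Q^{xy}$, all candidate points lie in $\mathbb R_+(e_y - e_x)$, so writing $q = t(e_y - e_x)$, $q' = t'(e_y-e_x)$, $q^\wedge = s^\wedge(e_y-e_x)$, $q^\vee = s^\vee(e_y-e_x)$ with nonnegative scalars, the UGS inequalities reduce to two scalar inequalities on coordinates $x$ and $y$ (all other coordinates being vacuously satisfied). Monotonicity of $G_{xy}$ tightly constrains the admissible configurations: when both $q$ and $q'$ are nonzero, the orderings of $(p_x, p'_x)$ and $(p_y, p'_y)$ must agree (since $p_x = G_{xy}(p_y)$ and $p'_x = G_{xy}(p'_y)$), placing $x$ and $y$ on the same side of $B = \{z : p_z \leq p'_z\}$; then both $p\wedge p'$ and $p\vee p'$ are active on $xy$ and taking $q^\wedge, q^\vee$ equal to the appropriate one of $q, q'$ satisfies UGS. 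When only one of $q, q'$ is nonzero, monotonicity of $G_{xy}$ forces arc $xy$ to be inactive at either $p\wedge p'$ or $p\vee p'$, making $q^\wedge = 0$ or $q^\vee = 0$ both feasible and UGS-compatible. The main obstacle is this per-arc case analysis, in which one must jointly track the activeness of $xy$ at $p, p', p\wedge p', p\vee p'$ together with the orderings of $p_x, p'_x$ and $p_y, p'_y$; monotonicity of $G_{xy}$ rules out the a priori problematic configurations---for instance, $q$ nonzero with $p_y \leq p'_y$ and $p_x > p'_x$ is impossible---and ultimately makes the bookkeeping routine.
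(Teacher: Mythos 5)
Your argument is correct and takes a genuinely different route from the paper's. The paper works directly with flows: given $\mu$ supporting $q$ at $p$ and $\mu'$ supporting $q'$ at $p'$, it defines a single candidate flow $\mu^\wedge$ at $p\wedge p'$ by taking, on each arc $xy$, the flow $\mu_{xy}$ when the arc's \emph{starting} node lies in $\{z:p_z\le p'_z\}$ and $\mu'_{xy}$ otherwise, and then checks the coordinate inequalities for $q^\wedge=\nabla^\top\mu^\wedge$. You instead decompose $\mathtt Q=\sum_{xy\in\mathcal A}\mathtt Q^{xy}$ as a Minkowski sum of single-arc correspondences, observe that $D_{xy}$ and hence $P=\bigcap_{xy}D_{xy}$ are sublattices, reduce UGS to a two-coordinate case analysis on each $\mathtt Q^{xy}$, and invoke the aggregation Property~\ref{bigdog}. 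The modular route is cleaner and, importantly, makes the verification that $q^\wedge$ actually lies in $\mathtt Q(p\wedge p')$ transparent: in your per-arc scheme an arc whose two endpoints land on opposite sides of $B$ is necessarily assigned $s^\wedge=0$, which is exactly what complementary slackness at $p\wedge p'$ demands (the arc cannot be tight at $p\wedge p'$ there); the paper's rule keyed only to the starting node leaves this step compressed. One phrasing point to tighten: your claim that, when exactly one of $q,q'$ is nonzero, ``monotonicity of $G_{xy}$ forces arc $xy$ to be inactive at either $p\wedge p'$ or $p\vee p'$'' is not literally true when $x$ and $y$ lie on the same side of $B$ (with $t>0$, $t'=0$, and both $x,y\in\mathcal Z^\le$, the arc is \emph{active} at $p\wedge p'$ and may be active at $p\vee p'$ as well). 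What your case analysis actually shows is that the scalars $s^\wedge,s^\vee$ that the inequalities demand --- either $0$ or the forced value $t$ or $t'$ --- are always attainable, because whenever a strictly positive value is required the arc is active at the corresponding lattice point. This is a minor slip in exposition, not in substance; the proof is sound.
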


It is immediate that the correspondence $\mathtt Q$ is nonreversing, since $q\in \mathtt Q(p)$ implies $\sum_{z\in \mathcal Z}q_z=0$.  Hence it follows from Theorem \ref{camel},  the inverse isotonicity Theorem \ref{belgian}, and Corollary \ref{gobble} that:

\begin{corollary}\label{flapper}
	The equilibrium flow correspondence $\mathtt Q(p)$ has totally isotone inverse and the set of equilibrium prices $\mathtt Q^{-1}$ is a sublattice of ${\mathbb R}^{\mathcal Z}$.
\end{corollary}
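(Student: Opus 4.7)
The plan is to recognize that Corollary \ref{flapper} is a direct harvest of the three results appearing just above it: Theorem \ref{camel} (unified gross substitutes for $\mathtt Q$), Theorem \ref{belgian} (an M0-correspondence has totally isotone inverse), and Corollary \ref{gobble} (fibers of an M0-correspondence are sublattices). Since Theorem \ref{camel} already gives unified gross substitutes, the only substantive thing I need to verify is that the equilibrium flow correspondence is nonreversing, after which the chain of implications closes automatically.

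To establish nonreversingness, I would invoke Property \ref{pty:cao-implies-nr} (constant aggregate output implies nonreversingness) with $k$ equal to the unit vector. Concretely, if $q \in \mathtt Q(p)$, then by definition of an equilibrium flow outcome there is some $\mu \in \mathbb R^{\mathcal A}_+$ with $\nabla^\top \mu = q$. Summing the coordinates,
\[
\sum_{z\in\mathcal Z} q_z \;=\; \mathbf 1^\top \nabla^\top \mu \;=\; (\nabla \mathbf 1)^\top \mu,
\]
and each row $(xy,\cdot)$ of $\nabla$ has a single $+1$ (at $z=y$) and a single $-1$ (at $z=x$), so $\nabla \mathbf 1 = 0$ and consequently $\sum_{z\in\mathcal Z} q_z = 0$. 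This is exactly constant aggregate output, so $\mathtt Q$ is nonreversing.

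Combining unified gross substitutes with nonreversingness, $\mathtt Q$ is an M0-correspondence. Applying Theorem \ref{belgian} gives that $\mathtt Q$ has totally isotone inverse, and applying Corollary \ref{gobble} gives that for each $q$ the set $\mathtt Q^{-1}(q)$ is a sublattice of $\mathbb R^{\mathcal Z}$. There is no real obstacle here; the entire content of the proof lives in Theorem \ref{camel}, and the present corollary is merely the plumbing that routes that result through the general inverse isotonicity machinery of Section~\ref{grenade}.
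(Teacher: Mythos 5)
Your proposal is correct and matches the paper's own argument: the paper likewise observes that $q\in \mathtt Q(p)$ implies $\sum_{z\in\mathcal Z} q_z = 0$ (constant aggregate output, hence nonreversingness), combines this with Theorem \ref{camel} to conclude $\mathtt Q$ is an M0-correspondence, and then invokes Theorem \ref{belgian} and Corollary \ref{gobble}. Your explicit computation $\nabla\mathbf 1 = 0$ is a welcome extra line of detail but does not change the route.
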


\noindent Galichon, Samuelson and Vernet \cite{GSV2022} provide an existence result for the equilibrium flow problem, invoking a generalization of Hall's  \cite{Hall1935} conditions and an analogue of Rochet's \cite{Rochet87} cyclical monotonicity condition.

\subsection{Matching with (Fully or Imperfectly) Transferable Utility}

To put our equilibrium flow formulation to work, consider the following one-to-one matching market with either fully or imperfectly transferable utility.  Let  $\mathcal{X}$ be a set of types of workers and $\mathcal{Y}$ a set of types of firms.  There are $n_{x}$ workers of each type $x\in\mathcal{X}$, and $m_{y}$ firms of each type $y\in\mathcal{Y}$.  Assume  the total number of workers and firms is the same, or 
\begin{equation*}
\sum_{x\in\mathcal{X}}n_{x}=\sum_{y\in\mathcal{Y}}m_{y}.
\end{equation*}
A match between worker type $x$ and firm type $y$ is characterized by a wage $w_{xy}$, in which case it gives rise to the utilities $\mathcal{U}_{xy}\left(w_{xy}\right)$ for the worker and $\mathcal{V}_{xy}\left(w_{xy}\right)$ for the firm.     

A matching is a pair $(\mu,w)$, where $w$ is a vector specifying the wage $w_{xy}$ attached to each pair $xy\in \mathcal X\times \mathcal Y$ and $\mu$ is a vector identifying the mass $\mu_{xy}$ of matches between workers of type $x$ and firms of type $y$, for each $xy\in \mathcal X\times \mathcal Y$.  The matching $\left(\mu,w\right)$ is stable if
		\begin{equation}
		\sum_{y\in\mathcal{Y}}\mu_{xy}=n_{x}\text{ and }\sum_{x\in\mathcal{X}%
		}\mu_{xy}=m_{y}
		\end{equation}
		and
		\begin{equation}
		\mu_{xy}>0\text{ implies }y\in\arg\max_{y\in\mathcal{Y}}\mathcal{U}%
		_{xy}\left(w_{xy}\right)\text{ and }x\in\arg\max_{x\in\mathcal{X}}\mathcal{V}%
		_{xy}\left(w_{xy}\right).\label{two-maximization-pbs}
		\end{equation}
The first condition provides the feasibility condition that the number of each type of worker and firm that is matched equals the number present in the market, while the second provides the stability condition that no worker and firm can improve their utilities by matching with one another at an appropriate wage.

Now let $\mathtt Q$ be a correspondence identifying, for each specification of utilities, the configurations of workers and firms for which there exists a stable match exhibiting those utilities. More formally, let $\mathcal{Z}=\mathcal{X}\cup \mathcal{Y}$ and fix the utility vector $u:\mathcal Z\rightarrow \mathbb R$.  Let  $p\in \mathbb{R}^{\mathcal{Z}}$ be defined by  $p_{z}=u_{z}\mathbf 1_{\left\{ z\in \mathcal{X}\right\}} -v_{z}\mathbf1_{\left\{ z\in \mathcal{Y}%
\right\}} $, and $q_{z}=-n_{z}\mathbf1_{\left\{ z\in \mathcal{X}\right\}}
+m_{z}\mathbf1_{\left\{ z\in \mathcal{Y}\right\}} $.%
\footnote{Notice that the quantity vector $q$ identifies the negative of the number of each type of workers and the number of each type of firm.  This allows us to represent a match between a work and a firm as a flow along the arc connecting the node containing that type of worker to the node representing the firm.  Similarly, the payoff vector $p$ identifies the payoffs of workers and the negative of the payoffs of firms, so that an increase in the payoff of a firm at node $y$ corresponds to a smaller utility requirement from the firm at that node, making it more attractive for workers to traverse the arc terminating at the node, which corresponds to the increasingness of $G_{xy}(p_y)$. }
The stability problem then 
reformulates as%
\[
q\in \mathtt{Q}\left( p\right), 
\]%
where $\mathtt{Q}(p)$ is the set of 
$q\in \mathbb{R}^{\mathcal{Z}}$ such that there exists  $w\in \mathbb{R}^{\mathcal{X}\times 
\mathcal{Y}}$ and $\mu \in \mathbb{R}_{+}^{\mathcal{X}\times \mathcal{Y}}$ with $\sum_{y\in \mathcal{Y}}\mu _{xy}=-q_{x}$ for all $x\in \mathcal{X\,\ }\text{%
and }\sum_{x\in \mathcal{X}}\mu _{xy}=q_{y}$ for all $y\in \mathcal{Y}$
and also \eqref{two-maximization-pbs} is satisfied with $p_z = \max_{y\in\mathcal{Y}}\mathcal{U}%
_{xy}\left(w_{xy}\right)$ for $z\in \mathcal X$ and $-p_z = \max_{x\in\mathcal{X}}\mathcal{V}%
_{xy}\left(w_{xy}\right)$ for $z\in \mathcal Y$.

\begin{figure}[t]
    \centering
    \includegraphics[width=0.35\textwidth]{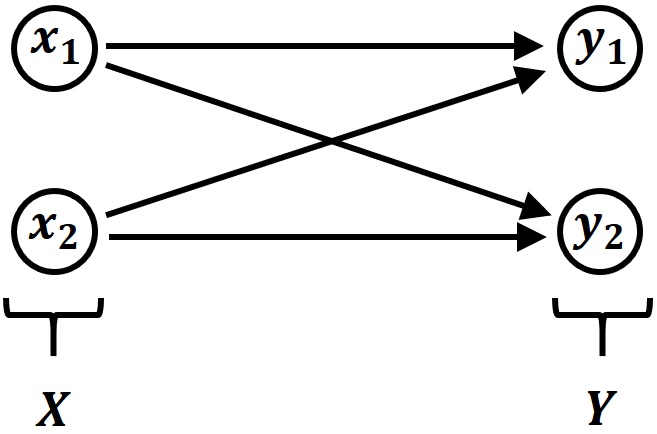}
    \caption{Network associated with a bipartite matching problem without singles.}
    \label{fig:matching-without-singles}
\end{figure}

As a immediate result of the equilibrium flow formulation illustrated in Figure~\ref{fig:matching-without-singles}, it follows from Theorem~\ref{camel} that:

\begin{theorem}\label{thm:matching-is-ugs}
The correspondence that maps the vector of payoffs $p$  to the vector of populations $q$ is an M0-correspondence.  
\end{theorem}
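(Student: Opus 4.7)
The plan is to show that the matching correspondence $\mathtt Q$ is an instance of the equilibrium flow correspondence on an appropriately constructed bipartite network, and then invoke Theorem~\ref{camel} together with the constant aggregate output observation following it.

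First, I construct the network. Let the node set be $\mathcal{Z}=\mathcal{X}\cup\mathcal{Y}$ and the arc set be $\mathcal{A}=\mathcal{X}\times\mathcal{Y}$, with each arc $xy$ directed from worker $x$ to firm $y$, as depicted in Figure~\ref{fig:matching-without-singles}. For each arc $xy$, define the connection function implicitly by
\begin{equation*}
G_{xy}(p_y) := \mathcal{U}_{xy}\!\left(\mathcal{V}_{xy}^{-1}(-p_y)\right),
\end{equation*}
so that $G_{xy}(p_y)$ is the worker utility realized at the unique wage that leaves firm $y$ with utility $-p_y$. Under the standing assumption that $\mathcal{U}_{xy}$ is increasing and continuous in $w$ while $\mathcal{V}_{xy}$ is decreasing and continuous in $w$ (standard for imperfectly transferable utility), $G_{xy}$ is continuous and increasing, as required by the equilibrium flow setup.

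Next, I verify that $\mathtt Q(p)$ as defined in the theorem coincides with the equilibrium flow correspondence for this network. Translating $p_x=u_x$ for $x\in\mathcal{X}$ and $-p_y=v_y$ for $y\in\mathcal{Y}$: the feasibility conditions $\sum_{y}\mu_{xy}=-q_x=n_x$ and $\sum_{x}\mu_{xy}=q_y=m_y$ are exactly $\nabla^\top\mu=q$. The two-max stability conditions in \eqref{two-maximization-pbs} unwind as follows. For every pair $xy$, the fact that worker $x$ cannot improve by matching with $y$ means $\mathcal{U}_{xy}(\mathcal{V}_{xy}^{-1}(v_y))\le u_x$, i.e.\ $G_{xy}(p_y)\le p_x$, which is the no-positive-rent condition. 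Conversely, when $\mu_{xy}>0$, the two $\arg\max$'s force $\mathcal{U}_{xy}(w_{xy})=u_x$ and $\mathcal{V}_{xy}(w_{xy})=v_y$, so $w_{xy}=\mathcal{V}_{xy}^{-1}(-p_y)$ and $p_x=G_{xy}(p_y)$, which is the complementary slackness condition.

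Having identified $\mathtt Q$ with the equilibrium flow correspondence on the bipartite network, Theorem~\ref{camel} immediately yields unified gross substitutes. For nonreversingness, observe that for any $q=\nabla^\top\mu\in\mathtt Q(p)$ we have $\mathbf 1^\top q=(\nabla \mathbf 1)^\top\mu=0$, since each row of $\nabla$ consists of one $+1$ and one $-1$. Hence $\mathtt Q$ has constant aggregate output and is therefore nonreversing by Property~\ref{pty:cao-implies-nr}, completing the proof that $\mathtt Q$ is an M0-correspondence. The main obstacle is the careful bookkeeping in the translation step: one must check that the universal quantification over all pairs in the two-max formulation maps cleanly onto the universal no-positive-rent inequality plus the complementary slackness condition on matched pairs, and that the sign conventions $p_x=u_x$, $p_y=-v_y$ correctly reverse the direction of monotonicity on the firm side so that $G_{xy}$ is increasing.
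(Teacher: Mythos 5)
Your proof is correct and takes essentially the same route as the paper: reformulate the matching market as an equilibrium flow problem on the bipartite network with $\mathcal{Z}=\mathcal{X}\cup\mathcal{Y}$, arcs $\mathcal{A}=\mathcal{X}\times\mathcal{Y}$, connection functions $G_{xy}(p_y)=\mathcal{U}_{xy}\circ\mathcal{V}_{xy}^{-1}(-p_y)$, and the same sign conventions $p_x=u_x$, $p_y=-v_y$, $q_x=-n_x$, $q_y=m_y$; then apply Theorem~\ref{camel} for unified gross substitutes and observe that $\nabla\mathbf 1=0$ gives constant aggregate output and hence nonreversingness. This is exactly the paper's argument (stated compactly in the body and with the translation details spelled out in Appendix~\ref{chatter}).
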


\noindent As noted in Galichon \cite[Section 4.3.2]{galichon2021unreasonable},  it may be paradoxical to see substituability arise in matching problems, given that the latter are  meant to capture complementarity. However,  the ``change-of-sign'' technique applied here, consisting of adding a negative sign in front of $v_y$ and in front of $n_x$, allows one to turn a problem with complementarity into a problem with substitutes. This exploits the bipartite nature of the problem.\footnote{The same phenomenon explains why convex submodular functions are dual to convex supermodular functions in dimension two, but not above.}

\begin{remark}[Equilibrium utilities constitute a lattice]\label{remark:unmatched}{\em 
Appendix \ref{chatter} generalizes  this matching problem to accommodate unmatched agents, showing it is equivalent to an equilibrium flow problem.   Theorem \ref{camel} again applies, and in each case Corollary \ref{flapper} then implies that  $\mathtt{Q}$ is an M0-correspondence.  Hence, given a specification $q$ of buyers and sellers, the  set of equilibrium utilities consistent with a stable match constitutes a lattice, providing an alternative route to the lattice result of Demange and Gale \cite[Lemma 2 and Property 2]{demange1985strategy}.\hfill\rule{.1in}{.1in}}
\end{remark}

\begin{remark}[Comparative Statics]{\em 
Once again, the inverse isotonicity of the equilibrium correspondence gives comparative static results.  For example, as the number of firms increases, the set of equilibrium payoffs of firms decreases (in the strong set order) while set set of equilibrium payoffs of workers increases.  An increase in the number of workers has the reverse effects. This result is proven in \cite{demange1986multi}. \hfill\rule{.1in}{.1in}}
\end{remark}

\subsection{Matching Without Transfers}

We now consider bipartite matching with non-transferable utility. We borrow
the traditional language of \textquotedblleft men\textquotedblright\ and
\textquotedblleft women\textquotedblright\ and heterosexual unions. Let $%
\mathcal{X}$ the set of men and $\mathcal{Y}$ be the set of women.  Let  $\mathcal{X}_{0}=\mathcal{X}\cup \left\{ 0\right\} $ be the set of
marital options available to women, where $0$ is being unmatched, and
similarly let $\mathcal{Y}_{0}=\mathcal{Y}\cup \left\{ 0\right\} $ be the set
of marital options available to men.

A match between a man $x\in \mathcal{X}$ and a woman $y\in \mathcal{Y}$
generates utility $\alpha _{xy}$ for the man and $\gamma _{xy}$ for the
woman; if $x$ remains unmatched he receives $\alpha _{x0}$, while if $y$ remains
unmatched, she receives $\gamma _{0y}$. As is common in the literature, we impose strict preferences:

\begin{assumption}\label{strict}
\textbf{Strict preferences}. $\alpha _{xy}\neq \alpha _{xy^{\prime }}$ for
every $x\in \mathcal{X}$ and $y\neq y^{\prime }\in \mathcal{Y}_{0}$, and $%
\gamma _{xy}\neq \gamma _{x^{\prime }y}$ for every $y\in \mathcal{Y}$ and $%
x\neq x^{\prime }\in \mathcal{X}_{0}$.
\end{assumption}

\paragraph{Matchings, feasible matchings.} A matching $\mu =\left( \mu _{xy},\mu _{x0},\mu _{0y}\right) $ is such that $%
\mu _{xy}=1$ if man $x$ and woman $y$ are matched and equals $0$ otherwise, $\mu
_{x0}=1$ if man $x$ remains unmatched and $0$ otherwise, and $\mu _{0y}=1$
if woman $y$ remains unmatched, and $0$ otherwise.  A {\em feasible} matching is such that each individual has at most one partner,
that is%
\begin{equation*}
\sum_{y\in \mathcal{Y}_{0}}\mu _{xy}=1\text{ and }\sum_{x\in \mathcal{X}%
_{0}}\mu _{xy}=1.
\end{equation*}
Given a feasible matching $\mu $, one introduces 
\begin{equation}
u_{x}^{\mu }=\sum_{y\in \mathcal{Y}_{0}}\mu _{xy}\alpha _{xy}\text{ and }%
v_{y}^{\mu }=\sum_{x\in \mathcal{X}_{0}}\mu _{xy}\gamma _{xy},
\label{def-umu-and-vmu}
\end{equation}
which are the utilities obtained by each individual under this matching.

\paragraph{Stable matchings.} A \emph{stable matching} is a feasible matching such that there is no pair $xy\in 
\mathcal{X}\times \mathcal{Y}$ for which $\alpha _{xy}>u_{x}^{\mu }$ and $%
\gamma _{xy}>v_{y}^{\mu }$ and such that $u_{x}\geq \alpha _{x0}$ for all $%
x\in \mathcal{X}$ and $v_{y}\geq \gamma _{0y}$ for all $y\in \mathcal{Y}$.

Define $\mathtt Q:\mathbb{R}^{\mathcal{Y}}\rightarrow \mathbb{Z}^{\mathcal{Y}}$ by%
\begin{equation*}
\mathtt Q_{y}\left( v\right) :=1-\sum_{x\in \mathcal{X}}\mathbf 1_{\left\{ y\in \arg
\max_{y\in \mathcal{Y}}\left\{ \alpha _{xy}:\gamma _{xy}\geq v_{y},\alpha
_{x0}\right\} \right\}} -\mathbf1_{\left\{ \gamma _{0y}\geq v_{y}\right\}},
\end{equation*}
which can be interpreted as the ``excess supply'' of type $y$, namely the ``supply'' of type $y$ (which is one) when this agent is in a relation that secures him a utility $v_y$, minus the  the number of  agents $x\in \mathcal X_0$ on the other side of the market who are willing to match with type $y$ (given $y$'s utility requirement $v_y$). The following theorem (with proof in Appendix \ref{proof_th_ntu_matching}) expresses stable matchings in terms of the function $\mathtt Q$.

\begin{theorem}\label{th_ntu_matching}
Let assumption \ref{strict} hold. Then:

(i) If $\mu $ is a stable matching, then $\mathtt Q\left( v^{\mu }\right) =0$, with $v^{\mu }$  defined in (\ref{def-umu-and-vmu}).

(ii) Conversely, if $\mathtt Q\left( v\right) =0$, then define 
\begin{equation*}
\begin{array}{l}
\mu _{xy}=\mathbf 1_{\left\{ y\in \arg \max_{y\in \mathcal{Y}}\left\{ \alpha
_{xy}:\gamma _{xy}\geq v_{y},\alpha _{x0}\right\} \right\}} \\ 
\mu _{x0}=\mathbf 1_{\left\{ 0\in \arg \max_{y\in \mathcal{Y}}\left\{ \alpha
_{xy}:\gamma _{xy}\geq v_{y},\alpha _{x0}\right\} \right\}} \\ 
\mu _{0y}=\mathbf 1_{\left\{ \gamma _{0y}\geq v_{y}\right\}},%
\end{array}%
\end{equation*}%
and we have that $\mu =\left( \mu _{xy},\mu _{x0},\mu _{0y}\right) $ is a
stable matching.
\end{theorem}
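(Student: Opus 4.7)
The plan is to prove the two directions separately, with Assumption~\ref{strict} on strict preferences serving as the key tool that converts weak inequalities into strict ones and thereby controls ties in the $\arg\max$ appearing in the definition of $\mathtt Q_y$. The guiding observation is that $\mathtt Q_y(v) = 0$ expresses exactly one of two mutually exclusive possibilities for each $y\in\mathcal Y$: either exactly one man $x$ has $y$ as his top choice among the set $\{y':\gamma_{xy'}\geq v_{y'}\}$ augmented by the outside option yielding $\alpha_{x0}$, in which case $\gamma_{0y} < v_y$; or no such man exists, in which case $\gamma_{0y}\geq v_y$. I match these two cases to whether $y$ is matched under $\mu$ or single.

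For direction (i), I fix a stable matching $\mu$ and a woman $y\in\mathcal Y$. If $\mu_{x_0 y}=1$ for some $x_0$, then $v_y^\mu = \gamma_{x_0 y}$; individual rationality and Assumption~\ref{strict} give $\gamma_{0y} < v_y^\mu$, so the last indicator in $\mathtt Q_y$ vanishes. I then argue that $y$ lies in $x_0$'s $\arg\max$ because any strictly preferred feasible alternative would, by strict preferences, produce a blocking pair; and that for $x\neq x_0$ the woman $y$ cannot be in $x$'s $\arg\max$ since $\gamma_{xy}\geq \gamma_{x_0 y}$ and $\alpha_{xy}\geq u_x^\mu$ strengthen (by Assumption~\ref{strict}) to strict inequalities and again yield a blocking pair. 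Hence the sum over $x$ contributes exactly $1$, so $\mathtt Q_y(v^\mu)=0$. If instead $\mu_{0y}=1$, then $v_y^\mu = \gamma_{0y}$, the last indicator equals $1$, and the same blocking-pair argument rules out any contribution from the sum.

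For direction (ii), I assume $\mathtt Q(v)=0$ and take $\mu$ as defined in the statement. Strict preferences make each $\arg\max$ a singleton, so $\sum_{y\in\mathcal Y_0}\mu_{xy} = 1$ for every $x$, while $\sum_{x\in\mathcal X_0}\mu_{xy} = 1$ is precisely the equation $\mathtt Q_y(v)=0$; feasibility follows. Individual rationality for $x$ holds because $\alpha_{x0}$ is itself among the options in $x$'s $\arg\max$; for a matched woman, $\mathtt Q_y(v)=0$ forces $\gamma_{0y} < v_y \leq v_y^\mu$, and for a single woman $v_y^\mu = \gamma_{0y}$ trivially. To rule out a blocking pair $(x,y)$, I observe that $\gamma_{xy} > v_y^\mu \geq v_y$ in either subcase, so $y$ belongs to the feasible set used to define $x$'s $\arg\max$; but then $\alpha_{xy} > u_x^\mu$ contradicts the optimality of $x$'s realized choice, be it another woman or the outside option.

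The main delicate point is the consistent tracking of the unmatched case, since~(\ref{def-umu-and-vmu}) sums over the extended set $\mathcal X_0$ and yields $v_y^\mu = \gamma_{0y}$ when $y$ is single, so the relationship between $v_y^\mu$ and $v_y$ must be handled separately in each subcase. The other recurring subtlety is that a blocking pair requires \emph{strict} inequalities on both sides, whereas the data of the problem and the definition of $\mathtt Q_y$ deliver only weak inequalities; Assumption~\ref{strict} must be invoked at each such step to upgrade $\gamma_{xy}\geq v_y^\mu$ and $\alpha_{xy}\geq u_x^\mu$ to strict inequalities, and symmetrically on the $\gamma$-side, so that stability arguments go through.
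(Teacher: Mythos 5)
Your proof is correct and takes essentially the same approach as the paper's: both directions case-split on whether the woman $y$ is matched, use the strict preferences assumption to upgrade the weak inequalities arising from $\arg\max$ membership (such as $\gamma_{xy}\geq v_y^\mu$ and $\alpha_{xy}\geq u_x^\mu$) into strict ones, and derive contradictions via blocking pairs. The only presentational difference is that the paper factors direction (i) into three labelled intermediate claims before combining them, while you state the same facts inline.
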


\noindent This result is useful because we can show that $\mathtt Q$ is an M0-correspondence.  Appendix \ref{proof_th_ntu_matching_m0} proves the following result.  

\begin{theorem}\label{th_ntu_matching_m0}
The map $\mathtt Q$ defined above defines a (point-valued) M0-correspondence.
\end{theorem}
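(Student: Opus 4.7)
The plan is to verify directly the two defining properties of an M0-correspondence for the function $\mathtt Q$. Because $\mathtt Q$ is point-valued by construction, Property~\ref{prop:wgs-implies-ugs} reduces unified gross substitutes to weak gross substitutes, and Property~\ref{pty:mto-implies-nr} lets me deduce nonreversingness from monotone total output. The argument then rests on a single-valued selector $\sigma_x(v)\in\mathcal Y\cup\{0\}$ defined, for each $x\in\mathcal X$, as the unique maximizer of $\alpha_{x\cdot}$ over $\{y\in\mathcal Y:\gamma_{xy}\ge v_y\}\cup\{0\}$; existence and uniqueness follow from Assumption~\ref{strict}.

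For weak gross substitutes, I would fix $y\ne y'$ and examine how $\mathtt Q_y(v)$ varies with $v_{y'}$. The constant $1$ and the term $\mathbf 1_{\{\gamma_{0y}\ge v_y\}}$ do not depend on $v_{y'}$, so only the middle sum matters. Raising $v_{y'}$ alone can only make $y'$ infeasible for some agents $x$; the feasibility of every other partner is unaffected. Consequently, for each $x$ either $\sigma_x(v)$ does not move, or it moves away from $y'$ onto some element of $(\mathcal Y\setminus\{y'\})\cup\{0\}$. In both cases, for $y\ne y'$, the indicator $\mathbf 1_{\{\sigma_x(v)=y\}}$ can only switch from $0$ to $1$, never in reverse. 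Summing over $x$ makes the middle sum nondecreasing in $v_{y'}$, so $\mathtt Q_y(v)$ is nonincreasing in $v_{y'}$.

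For monotone total output, I would rewrite
\[
\sum_{y\in\mathcal Y}\mathtt Q_y(v)=|\mathcal Y|-\sum_{x\in\mathcal X}\mathbf 1_{\{\sigma_x(v)\in\mathcal Y\}}-\sum_{y\in\mathcal Y}\mathbf 1_{\{\gamma_{0y}\ge v_y\}}.
\]
When $v$ increases componentwise, each $\mathbf 1_{\{\gamma_{0y}\ge v_y\}}$ is nonincreasing, and each $x$'s feasible set can only shrink; once $\sigma_x(v)=0$, the outside option continues to dominate for any larger $v$. Hence $\mathbf 1_{\{\sigma_x(v)\in\mathcal Y\}}$ is nonincreasing as well, and $\sum_{y}\mathtt Q_y(v)$ is therefore nondecreasing in $v$. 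By Property~\ref{pty:mto-implies-nr}, $\mathtt Q$ is nonreversing, which together with the previous step establishes that $\mathtt Q$ is an M0-correspondence.

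I do not anticipate a real obstacle. The only delicate point is securing the clean ``stay-put or switch away from $y'$'' dichotomy for $\sigma_x$ under a coordinate-wise increase in $v_{y'}$, which is precisely what strict preferences buy us. Once this observation is in place, both conclusions follow by simple aggregation across $x$ and $y$.
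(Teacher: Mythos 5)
Your proof is correct and follows essentially the same route as the paper's: establish weak gross substitutes (hence unified gross substitutes, via Property~\ref{prop:wgs-implies-ugs}) and monotone total output (hence nonreversingness, via Property~\ref{pty:mto-implies-nr}). Your writeup also fills in the gross-substitutes verification that the paper dismisses as ``straightforward,'' and your bookkeeping via $\mathbf 1_{\{\sigma_x(v)\in\mathcal Y\}}$ keeps the signs straight: the paper's displayed chain flips a sign on the $\sum_x$ term and then misstates the monotonicity of $\mathbf 1_{\{\gamma_{0y}\ge v_y\}}$ (two slips that cancel), whereas your version — both $\mathbf 1_{\{\sigma_x(v)\in\mathcal Y\}}$ and $\mathbf 1_{\{\gamma_{0y}\ge v_y\}}$ nonincreasing in $v$, each prefixed with a minus — gives the nondecreasing total directly and correctly.
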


\noindent  This gives us an alternative route to the lattice structure of stable matchings.

\subsection{Hedonic Pricing}

The simplest models of a competitive economy assume that each of a finite number of goods is perfectly divisible and perfectly homogeneous .  The hedonic pricing model, introduced by Rosen \cite{Rosen1974} and developed by  Ekeland, Heckman and Nesheim \cite{EHN2004}) and  Chiappori, McCann and Nesheim \cite{CMandN2010} (among others), examines the opposite extreme, examining an economy filled with indivisible, idiosyncratic goods.    To reduce the dimensionality of the prices in the latter case, one typically assumes the goods can be described by the extent to which they exhibit certain characteristics, with prices determined by these characteristics, thus giving rise to the term hedonic pricing.  

Chiappori, McCann and Nesheim \cite{CMandN2010} show the the hedonic pricing problem with quasilinear utilities is equivalent to a matching problem with transferable utility, which is in turn equivalent to an optimal transport problem.  They then draw on familiar results to establish that the optimal transport problem has an equilibrium, and hence so do the the associated matching and hedonic pricing problems.  

We examine the hedonic pricing problem without assuming that utilities are quasilinear.       Chiappori, McCann and Nesheim \cite{CMandN2010} invoke a twist condition to establish the uniqueness for equilibrium for the quasilinear case.  Our counterpart of this result is to establish that for any equilibrium allocation, the set of utilities and prices supporting this equilibrium allocation is a lattice.  We avoid  a host of technical difficulties by working with finite sets of agents and goods.

\paragraph{Consumers and producers.} The basic elements of the model are a set $\mathcal{X}$ of types of producers and a set $\mathcal{Y}$ of types of consumers.  There are $n_x$ producers of each type $x\in \mathcal{X}$ and $m_y$ consumers of each type $y\in \mathcal{Y}$.  We do not require that the total number of producers $\sum_{x\in
	\mathcal{X}}n_{x}$ equal the total number of consumers  $\sum_{y\in \mathcal{Y}}m_{y}$.

\paragraph{Qualities.} There is a finite set $\mathcal{W}$ of qualities, also sometimes referred to as contracts or characteristics.  Each  producer must choose to produce one of the qualities  in $\mathcal{W}$, or to remain inactive.  Each consumer must choose to consume one quality  in $\mathcal{W}$ or remain inactive.  We can interpret the set $\mathcal{W}$ as the set of possible (vectors of) characteristics of goods that determine prices.

\paragraph{Hedonic prices.} Let $p:\mathcal{W}\rightarrow \mathbb R$ be a price vector assigning prices to qualities, with 
$p_w$ denoting the price of quality $w$.   A producer of type $x$ who produces a quality $w$ that bears price $p_w$ earns the profit $\pi_{xw}(p_w)$.  A consumer of type $y$ who consumes a quality $w$ bearing price $p_w$ reaps surplus $s_{yw}(p_w)$.  

\paragraph{Indirect utilities.} Given the price function $p$, a producer of type $x$ solves
\[
\max_{w\in \mathcal{W}}\left\{ \pi _{xw}\left( p_{w}\right) ,0\right\}=:u_x(p)
\]
and a consumer solves of type $y$ solves
\begin{equation*}
\max_{w\in \mathcal{W}}\left\{ s_{yw}\left( p_{w}\right) ,0\right\}=:v_y(p).
\end{equation*}

\paragraph{Supply and demand allocations.} Introduce $\mu _{xw}$ as the number of producers of type $x$ producing
quality $w$, and $\mu _{wy}$ as the number of consumers of type $y$
consuming quality $w$. Complete this notation by introducing $\mu _{x0}$ and
$\mu _{0y}$ as respectively the number of consumers of type $x$ and
producers of type $y$ opting out. Necessarily if $\mu_{xw}>0$, it must hold that $\pi_{xw}(p_w) = u_x(p)$, that is, $w$ is produced by $x$ only if producing $w$ is optimal for $x$. Similar considerations apply for the demand allocation.

\paragraph{Hedonic pricing equilibrium.} Given the specifications $\{n_x\}_{x\in \mathcal{X}}$ and $\{n_y\}_{y\in \mathcal{Y}}$ of producers and consumers, as well as the functions $\pi:\mathcal{X}\times \mathcal{W} \times \mathbb R\rightarrow \mathbb R$ and 
$s:\mathcal{Y} \times \mathcal{W} \times \mathbb R\rightarrow \mathbb R$, an equilibrium determines the price function $p$ and the specification of which good (if any) is produced by each producer and which good (if any) is consumed by each consumer.  In the process, the equilibrium determines the payoff $u_x$ of each producer $x$ and $v_y$ of each consumer $y$, as well as the quantity $q_w$ of each quality $w$ produced.  

\vspace{.1in}

\begin{definition}\label{book}
A price vector $p$ and allocation $\mu$ are a {\em hedonic pricing equilibrium} if the attendant  utilities of producers and consumers $u_x$ and $v_y$, the prices of the qualities $p_w$, the numbers of producers and consumers $n_x$ and $m_y$, the excess supply of qualities $q_w$, the production flows $\mu_{xw}$, and the consumption flows $\mu_{wy}$
are related by the following relations: 
\begin{itemize}

\item[(i)] supply and demand allocations are feasible:  
\begin{equation}
\sum_{w\in \mathcal{W}}\mu _{xw}+\mu _{x0} = n_{x} \text{ and } \sum_{w\in \mathcal{W}}\mu _{wy}+\mu _{0y} = m_{y},\label{hed-1}
\end{equation}

\item[(ii)] markets balance holds:
\begin{equation}\label{hed-2}
q_{w} = \sum_{x\in \mathcal{X}}\mu _{xw} - \sum_{y\in \mathcal{Y}}\mu _{wy},
\end{equation}

\item[(iii)] rents are nonpositive and agents maximize:
\begin{equation}\label{hed-3}
\left\{ 
\begin{array}{lll}
u_{x}\geq \pi _{xw}\left( p_{w}\right)  & \text{ with equality if } & \mu
_{xw}>0 \\ 
v_{y}\geq 0 & \text{ with equality if } & \mu _{0y}>0 \\ 
v_{y}\geq s_{yw}\left( p_{w}\right)  & \text{ with equality if } & \mu
_{wy}>0 \\ 
u_{x}\geq 0 & \text{ with equality if } & \mu _{x0}>0%
\end{array}%
\right. 
\end{equation}
\end{itemize}
\end{definition}	

It is shown in Appendix~\ref{app:hedonic} that:
\begin{theorem}\label{thm:ugs-for-hedonic} The correspondence that associates the vector  $(u,p,-v)$ to the set of vectors $(-n,q,m)$ such that equations~\eqref{hed-1},~\eqref{hed-2} and~\eqref{hed-3} are satisfied for some allocation $\mu \geq 0$ is an M0-correspondence.  
\end{theorem}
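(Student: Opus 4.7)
The plan is to cast the hedonic pricing problem as an equilibrium flow problem on an augmented network and invoke Theorem \ref{camel} together with the totally-isotone-inverse machinery of Section \ref{grenade}. Take the node set $\mathcal{Z}^{*} = \mathcal{X} \cup \mathcal{W} \cup \mathcal{Y} \cup \{0\}$, with $0$ an auxiliary outside-option node, and arcs $(x,w)$, $(w,y)$, $(x,0)$ and $(0,y)$ corresponding respectively to production, consumption, producer opt-out and consumer opt-out. Apply the ``change-of-sign'' trick used in Theorem \ref{thm:matching-is-ugs}: set node prices to $p_x = u_x$, $p_w$ unchanged, $p_y = -v_y$, $p_0 = 0$, and node quantities to $q_x = -n_x$, $q_w$ the hedonic excess supply, $q_y = m_y$, with $q_0$ pinned down by $\sum_{z \in \mathcal Z^{*}}q_z = 0$. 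Define connection functions
\begin{equation*}
G_{xw}(p_w) = \pi_{xw}(p_w),\quad G_{wy}(p_y) = s_{yw}^{-1}(-p_y),\quad G_{x0}(p_0) = p_0,\quad G_{0y}(p_y) = p_y,
\end{equation*}
each of which is increasing (using that $\pi_{xw}$ is increasing and $s_{yw}$ is decreasing in $p_w$). A routine check shows that mass balance $\nabla^{\top}\mu = q$, the no-positive-rent inequalities, and complementary slackness correspond exactly to \eqref{hed-1}--\eqref{hed-3} of Definition \ref{book}.

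Let $\mathtt Q^{*}$ denote the equilibrium flow correspondence on this augmented network. Theorem \ref{camel} gives unified gross substitutes for $\mathtt Q^{*}$, and $\sum_{z} q_{z} = 0$ together with Property \ref{pty:cao-implies-nr} gives nonreversingness, so $\mathtt Q^{*}$ is an M0-correspondence; Theorem \ref{belgian} then yields that $\mathtt Q^{*}$ has totally isotone inverse. The hedonic correspondence $\mathtt Q$ of the theorem arises by restricting to the hyperplane $\{p_0 = 0\}$ and projecting out the (deterministically pinned) coordinate $q_0$. Unified gross substitutes descends to $\mathtt Q$ because the UGS inequalities of Definition \ref{duck} restrict coordinatewise and the joins and meets preserve $p_0 = 0$. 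For nonreversingness of $\mathtt Q$: given $p \geq p'$ and $q \leq q'$ on the projected coordinates with $q \in \mathtt Q(p)$ and $q' \in \mathtt Q(p')$, apply the totally-isotone-inverse property of $\mathtt Q^{*}$ with $B = \{0\}$. Then $p_{z} \leq p'_{z}$ holds trivially for $z \in B$ (both equal zero) and $q_{z} \leq q'_{z}$ holds for $z \notin B$ by assumption, so \eqref{parking} yields $(q,q_{0}) \in \mathtt Q^{*}(p',0)$ and $(q',q_{0}') \in \mathtt Q^{*}(p,0)$; projecting out $q_0$ gives $q \in \mathtt Q(p')$ and $q' \in \mathtt Q(p)$, as required.

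The main subtlety is the handling of the outside-option coordinate: under $q \leq q'$ on the projected coordinates, the full vectors satisfy $q_0 \geq q'_0$ (because $\sum_z q_z = 0$), so naive nonreversingness of $\mathtt Q^{*}$ cannot be invoked directly; the choice $B = \{0\}$ in the totally-isotone-inverse property is precisely what rescues the argument by allowing the reversed coordinate to sit inside $B$. A secondary technical point is that the construction implicitly relies on the standard assumptions that $\pi_{xw}$ is (strictly) increasing and $s_{yw}$ is (strictly) decreasing in $p_w$, so that $s_{yw}^{-1}$ is well-defined and $G_{wy}$ is increasing.
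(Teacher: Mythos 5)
Your proof is correct and follows the same route as the paper: recast the hedonic model as an equilibrium flow problem on the network with node set $\mathcal{X}\cup\mathcal{W}\cup\mathcal{Y}\cup\{0\}$ and arcs $(\mathcal X\times\mathcal W_0)\cup(\mathcal W_0\times\mathcal Y)$, apply the change-of-sign convention $p_x=u_x$, $p_y=-v_y$, $q_x=-n_x$, $q_y=m_y$, and invoke Theorem \ref{camel} plus constant aggregate output. Where you add value is in making explicit a step the paper treats tersely: the theorem's correspondence lives on the coordinates $\mathcal{X}\cup\mathcal{W}\cup\mathcal{Y}$ only, and since $\sum_z q_z=0$ on the full node set forces $q_0\geq q_0'$ whenever $q\leq q'$ on the visible coordinates, naive nonreversingness of the full equilibrium-flow correspondence cannot be quoted directly; your use of the totally-isotone-inverse property with $B=\{0\}$ (exploiting $p_0=p_0'=0$) is exactly the right device to push inverse isotonicity through the projection, and is a cleaner justification than the paper spells out.
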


\noindent The proof of this result is based on a reformulation as an equilibrium flow problem, and the application of Theorem \ref{camel}. This equilibrium therefore  exhibits the lattice structure established in  Corollary \ref{gobble}, and we once again obtain comparative static results.. We can again turn to Galichon, Samuelson and Vernet \cite{GSV2022} for an existence result.

\section{Conclusion}

The concept of weak gross substitutes plays a prominent role in economic theory.  
We view the concept of unified gross substitutes as the natural generalization of weak gross substitutes to correspondences.  It connects to the literature in multiple points, generalizing some results and unifying others.

The concept of unified gross substitutes allows one to derive the inverse isotonicity and lattice-valued-inverse properties of general correspondences, which in turn gives rise to comparative static results.  This provides a tool that should be useful in two directions.  First, it should be useful in extending familiar results developed under the assumption of quasilinearity to more general settings.  Second, it should allow research to address a broader class of problems.  In particular, there is great potential for formulating a variety of problems as special cases of the equilibrium flow problem, allowing immediate application of the implications of unified gross substitutes.

\appendix 

\section{Appendix}\label{app:proofs}

\subsection{Aggregation Preserves Unified Gross Substitutes\label{woodpecker}}
We prove one of the four implications in the definition of unified gross
substitutes, noting that the others follow along similar lines. Fix $p$ and $%
p^{\prime }$ and let
\begin{eqnarray*}
	q=\lambda q^1+\mu q^2, &&q^1\in \mathtt Q^{1}(p),~~~~q^2\in \mathtt Q^{2}(p) \\
	q^{\prime }=\lambda q'^1+\mu q'^2 &&q'^1\in
	\mathtt Q^{1}(p^{\prime }),~~~~q'^2\in \mathtt Q^{2}(p^{\prime }).
\end{eqnarray*}%
Because $\mathtt Q^{1}$ and $\mathtt Q^{2}$ satisfy unified gross substitutes, there
exist $q^{\vee 1}\in \mathtt Q^{1}(p\vee p^{\prime })$ and $q^{\vee 2}\in
\mathtt Q^{2}(p\vee p^{\prime })$ with
\begin{eqnarray*}
	p_{z}\leq p_{z}^{\prime } &\implies &q_{z}^{\vee 1}\leq q_{z}'^1
	\\
	p_{z}\leq p_{z}^{\prime } &\implies &q_{z}^{\vee 2}\leq q_{z}'^2.
\end{eqnarray*}%
But then we have
\begin{equation*}
	p_{z}\leq p_{z}^{\prime }\implies \lambda q_{z}^{\vee 1}+\mu q_{z}^{\vee
		2}\leq \lambda q_{z}'^1+\mu q_{z}'^2,
\end{equation*}%
giving the condition for unified gross substitutes.

\subsection{The Sum of Two M0-correspondences is Not Always an M0-correspondence}\label{par:sum-m-corresp}

Let $\mathtt Q\left( p\right) =\left\{ Mp\right\} $ and $\mathtt Q'\left( p\right) =\left\{
M^{\top }p\right\} $ be two point-valued correspondences respectively
associated with matrices $M$ and $M^{\top }$, where%
\begin{equation*}
M=%
\begin{pmatrix}
~~5 & -1 \\ 
-4 & ~~1
\end{pmatrix}~~.
\end{equation*}
Because $M$ and $M^{\top}$ have positive entries on the diagonal and negative entries off the diagonal, the functions $\mathtt Q$ and $\mathtt Q'$ both satisfy weak gross substitutes and hence unified gross substitutes (cf. Property 	 \ref{prop:wgs-implies-ugs}).  
We have 
\begin{equation*}
M^{-1}=%
\begin{pmatrix}
1 & 1 \\ 
4 & 5%
\end{pmatrix}~~.
\end{equation*}%
The inverses of $M$ and $M'$ are both composed of positive entries, so $\mathtt Q$ and $\mathtt Q'$ are both inverse isotone and hence nonreversing (Property \ref{monica}) and hence are M0-correspondences.    However, 
\begin{equation*}
\left( M+M^{\top }\right) ^{-1}=\frac{1}{10}%
\begin{pmatrix}
-4 & -10 \\ 
-10 & ~~20%
\end{pmatrix}
\end{equation*}
is not entrywise positive.  Hence, the inverse of $\mathtt Q+\mathtt Q^{\prime }$ is not isotone and so $\mathtt Q+\mathtt Q^{\prime }$ is not an
M0-correspondence (Theorem \ref{belgian}).  Unified gross substitutes is satisfied (see Property~\ref{pty:agg-pres-ugs}) but $\mathtt Q+\mathtt Q'$ is not nonreversing.

\subsection{Kelso and Crawford's Substitutes Does Not Imply Unified Gross Substitutes}
\label{par:kc-example}

We show that unified gross substitutes  is strictly stronger than Kelso and Crawford's notion of gross substitutes. 

Let $N=4$ and consider the price vectors
		\begin{eqnarray*}
			p  = (1,1,2,2)&&p\vee p^{\prime}=(2,2,2,2) \\
			p\wedge p^{\prime}=(1,1,1,1)&&~~~~~p^{\prime} = (2,2,1,1).
		\end{eqnarray*}
Let the supply correspondence be
		\begin{eqnarray*}
			\mathtt Q(p) = \left\{\left(1,0,1,1\right),\left(0,1,1,0\right)\right\}			&&~~~~~~~~\mathtt Q(p\vee p^{\prime})=\{
			\left(1,1,1,0\right),\left(0,1,1,1\right)\} \\
			\mathtt Q(p\wedge p^{\prime})= \left\{
			\left(1,0,0,0\right),\left(0,0,0,1\right)\right\}
			&&~~~~~~~~~~~~~\mathtt Q(p') =\left\{ \left(1,1,0,1\right),\left(0,1,1,0\right)\right\}.
		\end{eqnarray*}
		Then Kelso and Crawford's condition holds, but unified gross substitutes fails. The difficulty is that unified gross substitutes requires the
		allocations $q^{\vee}$ and $q^{\wedge}$ that appear in the two parts of Definition \ref{duck}   to be identical, while Kelso and Crawford's definition has no counterpart of this requirement. 

Note also that $\mathtt Q$ is nonreversing, but Kelso and Crawford's condition is not strong enough to ensure  total inverse isotonicity, as $\left( 0,1,1,0\right) \in Q\left( p\right)$ and $\left( 0,1,1,0\right) \in Q\left( p'\right)$, yet $\left( 0,1,1,0\right) \notin Q\left( p\wedge p^{\prime }\right)$.

\subsection{Unified Gross Substitutes and Polterovich and Spivak's Property Are Independent}\label{par:polterovich-comparison}

The following example shows that Polterovich and Spivak's definition of gross substitutes does not imply unified gross substitutes.  

Let $N=4$ and
\begin{eqnarray*}
	p  =(1,1,2,2) ~~&&~~p\vee p^{\prime}=(2,2,2,2)\\
	p\wedge p^{\prime}=(1,1,1,1)~~	&&~~~~~~~p^{\prime} = (2,2,1,1),
\end{eqnarray*}
and let the supply correspondence be
\begin{eqnarray*}
	\mathtt Q(p)  =(0,1,2,1) ~~&&~~\mathtt Q(p\vee p^{\prime})=(0,2,2,0)\\
	\mathtt Q(p\wedge p^{\prime})=(1,0,2,1)~~	&&~~~~~~~\mathtt Q(p^{\prime}) = (1,2,2,0).
\end{eqnarray*}
Then Polterovich and Spivak's \cite{Posp1983} notion of gross substitutability is satisfied.   For each of the four relevant increasing-price comparisons ($p\wedge p'$ to $p$, $p\wedge p'$ to $p'$, $p$ to $p\vee p'$ and $p'$ to $p\vee p'$), among the prices that remain constant, there is at least one dimension on which the allocation decreases.  However, unified gross substitutes fails.  We have $p_2<p_2'$, but $q_2\le q_2^{\wedge}$ fails.  Similarly, 
$p_4'<p_4$, but $q_4^{\vee}\le q_4'$ fails.

We next show that one can have unified gross substitutes without having Peltorovich and Spivak's notion of gross substitutability.
Let $N=2$ and
\[
p = (1,1)~~~~~~p'=(1,2)
\]
and let the supply correspondence be
\[
\mathtt Q(p) =\{(1,0),(3,0)\}~~~~~~~~\mathtt Q(p') = \{(0,0),(2,0)\}.
\]
Then Polterovich and Spivak's \cite{Posp1983} notion of gross substitutes fails---as we move from $p$ to $p'$, the price of good 2 increases while that of good 1 remains constant, but some allocations in $\mathtt Q(p')$ supply more good 1 than do some allocations in $\mathtt Q(p)$.  Unified gross substitutes is satisfied---for every allocation in $\mathtt Q(p')$, there is an allocation in $\mathtt Q(p)$ that supplies less good 1, and for every allocation in $\mathtt Q(p)$, there is an allocation in $\mathtt Q(p')$ that supplies more good 1, which in this simple case exhausts the implications of unified gross substitutes.

\subsection{Outside Goods and Weighted Monotonicity}\label{horns}

Given a correspondence $\mathtt Q:P\rightrightarrows Q$, we can introduce a fictitious good $0$ with price $p_{0}$
and constants $k\in\mathbb R^N_{++}$, and define the {\em extended correspondence} $\tilde {\mathtt Q}:P\times P_0\rightrightarrows Q\times \mathbb R$ by letting 
\begin{equation*}
	\tilde{\mathtt Q}\left( p,p_{0}\right) =\left\{\left( q,q_{0}\right) \in
	Q\times \mathbb{R}:q\in E\left( p\right)
	,q_{0}=p_{0}-\sum_{z=1}^Nk_zq_{z}\right\}.
\end{equation*}%
The obvious candidates for extended correspondences will set $P_0 = \{1\}$ and each of the constants $k_z=1$.  

We then have:

\begin{lemma}
	If the extended correspondence $\tilde {\mathtt Q}$ satisfies unified gross substitutes, then the correspondence $\mathtt Q$ satisfies unified gross substitutes and weighted monotonicity.
\end{lemma}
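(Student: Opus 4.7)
The plan is to use the auxiliary coordinate of $\tilde{\mathtt Q}$ as a weighted-sum tracker while freezing the price of the fictitious good. Fix any $p_0 \in P_0$ (e.g.\ $p_0 = 1$ in the canonical case). Given $p, p' \in P$, $q \in \mathtt Q(p)$ and $q' \in \mathtt Q(p')$, set $\tilde p = (p, p_0)$ and $\tilde p' = (p', p_0)$, so that $\tilde p \wedge \tilde p' = (p \wedge p', p_0)$ and $\tilde p \vee \tilde p' = (p \vee p', p_0)$. By construction of $\tilde{\mathtt Q}$, the lifted vectors $\tilde q = (q,\, p_0 - \sum_z k_z q_z)$ and $\tilde q' = (q',\, p_0 - \sum_z k_z q'_z)$ lie in $\tilde{\mathtt Q}(\tilde p)$ and $\tilde{\mathtt Q}(\tilde p')$ respectively.

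Applying unified gross substitutes for $\tilde{\mathtt Q}$ to these data produces $\tilde q^{\wedge} = (q^{\wedge}, q_0^{\wedge}) \in \tilde{\mathtt Q}(\tilde p \wedge \tilde p')$ and $\tilde q^{\vee} = (q^{\vee}, q_0^{\vee}) \in \tilde{\mathtt Q}(\tilde p \vee \tilde p')$ satisfying the $(N+1)$-coordinate versions of \eqref{lester}--\eqref{flatt}. Directly from the definition of the extended correspondence, $q^{\wedge} \in \mathtt Q(p \wedge p')$ and $q^{\vee} \in \mathtt Q(p \vee p')$. Restricting \eqref{lester}--\eqref{flatt} to coordinates $z \in \{1,\ldots,N\}$ yields unified gross substitutes for $\mathtt Q$, since on those coordinates the order relation between $\tilde p$ and $\tilde p'$ coincides with that between $p$ and $p'$ (and strictness is preserved).

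To extract weighted monotonicity, I turn to the $0$-th coordinate. Because $\tilde p_0 = \tilde p'_0$, the weak inequality $\tilde p_0 \leq \tilde p'_0$ holds and \eqref{lester} applied at $z = 0$ yields $\tilde q_0 \leq \tilde q^{\wedge}_0$ and $\tilde q^{\vee}_0 \leq \tilde q'_0$. Substituting the defining expressions $\tilde q_0 = p_0 - \sum_z k_z q_z$ and the analogues for $\tilde q', \tilde q^{\wedge}, \tilde q^{\vee}$, the $p_0$ terms cancel and the inequalities flip to give
\[
\sum_{z=1}^N k_z q_z \geq \sum_{z=1}^N k_z q^{\wedge}_z \qquad \text{and} \qquad \sum_{z=1}^N k_z q^{\vee}_z \geq \sum_{z=1}^N k_z q'_z,
\]
which is exactly \eqref{bug} with the fixed weight vector $k$ used to define $\tilde{\mathtt Q}$.

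The argument is pure bookkeeping and there is no substantive obstacle; the only choice is to freeze $p_0 = p'_0$ so that the $0$-th coordinate of the UGS inequalities collapses into a \emph{two-sided} inequality on the tracked weighted sums, which is precisely the shape of \eqref{bug}. A minor bonus is that the weight vector $k$ obtained here does not depend on $(p,p',q,q')$, so the conclusion is formally slightly stronger than Property~\ref{bug}'s definition requires.
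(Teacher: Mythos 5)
Your proof is correct and follows essentially the same route as the paper's: apply unified gross substitutes to the lifted correspondence, project coordinates $1,\ldots,N$ to recover UGS for $\mathtt Q$, and unwind the $0$-th coordinate inequality at $p_0 = p_0'$ to extract \eqref{bug}. If anything your version is cleaner, since by freezing $p_0 = p_0'$ you invoke only \eqref{lester} at the fictitious coordinate and avoid the (slightly sloppy) strict-inequality antecedent the paper writes there.
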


\paragraph{Proof.}  Let $\tilde {\mathtt Q}$, derived from $\mathtt Q$ via the constants $k\in\mathbb R^N_{++}$, satisfy unified gross substitutes.   The conditions for $\mathtt Q$ to satisfy unified gross substitutes are a subset of the conditions for $\tilde {\mathtt Q}$ to do so, and hence it is immediate that $\mathtt Q$ satisfies unified gross substitutes.  Next, fix the price vectors $(p,p_0)$ and $(p',p'_0)$.  Then applying unified gross substitutes to $\tilde {\mathtt Q}$, we have
\[
p_0<p_0' \implies q_0 \le q_0^{\wedge}~~~{\rm and}~~q_0^{\vee}\le q_0'.
\]
We then note that $p_0=p_0^{\wedge}$ and $p'_0= p_0^{\vee}$, and hence 
\begin{eqnarray*}
	p_0 -q_0 \ge p_0^{\wedge}-q_0^{\wedge}\\
	p'_0 -q'_0 \le p_0^{\vee}-q_0^{\vee}.
\end{eqnarray*}
Applying the definition of the extended correspondence gives
\begin{eqnarray*}
	p_0 -q_0 \ge p_0^{\wedge}-q_0^{\wedge} &\implies &\sum_{z=1}^Nk_zq_{z}\geq \sum_{z=1}^Nk_zq_{z}^{\wedge }\\
	p'_0 -q'_0 \le p_0^{\vee}-q_0^{\vee} &\implies &
	\sum_{z=1}^Nk_zq_{z}^{\vee }\geq \sum_{z=1}^Nk_zq_{z}^{\prime },
\end{eqnarray*}
giving the weighted monotonicity of $\mathtt Q$.
\hfill\rule{.1in}{.1in}





\subsection{Connections with Monotone Comparative Statics}\label{buffalo}

We first present an example in which the correspondence $\mathtt Q$ fails unified gross substitutes, but satisfies \eqref{string} for a function $g(p,q)$ satisfying \eqref{spinner} and hence Topkis' assumptions, and hence $\mathtt Q^{-1}$ is isotone.  

Let $P=S=\mathbb R^3$ and let $\mathtt Q(p) = \{ Yp \}$ where $Y$ is the matrix
\[
\frac{1}{28}
\left[
\begin{array}{ccc}
	63&-28&-28\\
	-28&16&12\\
	-28&12&16
\end{array}
\right].
\]
It is easy to see that the correspondence $\mathtt Q$ is a function that fails weak gross substitutes (because $Y$ has some positive off-diagonal terms), and hence fails 	unified gross substitutes (cf. Property~\ref{prop:wgs-implies-ugs}). The inverse is the point-valued correspondence $\mathtt Q^{-1}= \{ Xq \}$ where $X$ is the inverse of $Y$, given by
\[
\left[
\begin{array}{ccc}
	4&4&4\\
	4&8&1\\
	4&1&8
\end{array}
\right].
\]
Since $X$ has strictly positive terms, it is  immediate that $\mathtt Q^{-1}$ is isotone.  The function $\mathtt Q^{-1}$ is the solution to the maximization problem%
\footnote{In particular, the first-order conditions for this maximization problem are $Xq=p$, while the sufficient second-order conditions are clearly met.
}
\[
\max_{p\in \mathbb R^3}g(p,q) = \max_{p\in \mathbb R^3}  pXq-\frac12p_1^2-\frac12p_2^2-\frac12p_3^2.
\]
The function $g(p,q) =   pXq-\frac12p_1^2-\frac12p_2^2-\frac12p_3^2$ is strictly concave and satisfies Topkis' assumptions given in \eqref{spinner}.

We next present an equilibrium correspondence that satisfies unified gross substitutes and aggregate monotonicity, and hence has an isotone inverse, but for which the corresponding function $g$ satisfying \eqref{string} fails the formulation of Milgrom and Shannon's conditions given by \eqref{rooster} (and hence, fails those of Topkis).  

Let $P = Q = \mathbb R^2_+$ and let the correspondence $\mathtt Q$ be given by
\begin{eqnarray*}
	q_1(p) &=& p_2\\
	q_2(p) &=& p_1.
\end{eqnarray*}
Clearly $\mathtt Q$ is point-valued, and satisfies unified gross substitutes and aggregate monotonicity.  Let $g(p,q)=1$ if $q=(p_2,p_1)$ and $g(p,q)=0$ otherwise, and note that \eqref{string} holds.  Consider the second implication in \eqref{rooster}, namely
\[
g\left(p\vee
p^{\prime},q\right)-g\left(p^{\prime},q\right)\leq0%
\implies g\left(p,q'\right)-g\left(p\wedge p^{\prime},q'\right)\leq0.
\]
Letting
\begin{eqnarray*}
	p = (3,3)&~~~~~~&q = (5,2)\\
	p' = (2,5)&&q' = (4,1)
\end{eqnarray*}
gives us a violation.

\subsection{Proof of Theorem~\ref{spice}}\label{purple}
We present a proof of Theorem~\ref{spice} that establishes the equivalence between unified gross substitute of $\partial c^*$ and the submodularity of $c$.
In order to do this, we prove a series of lemmas regarding convex functions and how to characterize their submodularity.  Let $f$ be a convex function.  Our interpretation will be that $f$ is an indirect profit function $f=c^*$ associated with a cost function $c(q)$, but we will not use that interpretation in the lemmas.

The first result is a well-known result in convex analysis (Theorem 23.4 in Rockafellar \cite{Rocket1997}), recalled here for convenience, which essentially asserts that the support function of the subdifferential of a convex function coincides with the directional derivatives.

\begin{lemma}\label{lem:char-subdiff}
Let $f: \mathbb R^N \to \mathbb R$ be a convex function. We have%
	\begin{equation*}
		\partial f\left( p\right) =\left\{ q\in \mathbb R^{N}:q^{\top }b\leq \frac{d%
		}{dt}f\left( p+tb\right) |_{0^{+}},\forall b\in \mathbb \mathbb \mathbb R^{N}\right\}.
	\end{equation*}    
\end{lemma}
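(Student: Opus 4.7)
The plan is to prove the two set inclusions separately, using only the definition of the subdifferential and the standard fact that for a convex function $f$ the difference quotient $t \mapsto [f(p+tb)-f(p)]/t$ is nondecreasing in $t>0$, so that the one-sided directional derivative
\[
f'(p;b) \;:=\; \frac{d}{dt} f(p+tb)\Big|_{0^+} \;=\; \inf_{t>0}\frac{f(p+tb)-f(p)}{t}
\]
is well defined (in $\mathbb{R}\cup\{-\infty\}$; finite for $f$ finite on $\mathbb R^N$). Recall $\partial f(p) = \{q : f(p') \ge f(p) + q^\top (p'-p)\text{ for all }p'\}$.

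For the inclusion $\subseteq$, I would take $q \in \partial f(p)$ and fix any direction $b$. Plugging $p' = p + tb$ into the subgradient inequality yields $f(p+tb) - f(p) \ge t\, q^\top b$ for every $t>0$, hence $[f(p+tb)-f(p)]/t \ge q^\top b$. Passing to the infimum over $t>0$ (equivalently, letting $t\to 0^+$) gives $f'(p;b) \ge q^\top b$, which is exactly the membership in the right-hand side.

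For the reverse inclusion $\supseteq$, I would start with $q$ satisfying $q^\top b \le f'(p;b)$ for every $b\in\mathbb R^N$. Given an arbitrary $p'\in\mathbb R^N$, set $b := p'-p$. Because the difference quotient is monotone in $t>0$, its infimum is attained in the limit $t\to 0^+$, so in particular
\[
f'(p;p'-p) \;\le\; \frac{f(p + 1\cdot (p'-p)) - f(p)}{1} \;=\; f(p') - f(p).
\]
Combining with the hypothesis $q^\top(p'-p) \le f'(p;p'-p)$ yields $f(p') \ge f(p) + q^\top(p'-p)$ for every $p'$, i.e.\ $q \in \partial f(p)$.

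I do not anticipate a serious obstacle: both directions reduce to the monotonicity of the convex difference quotient, which is the workhorse of this part of convex analysis. The only mild care needed is to ensure $f'(p;b)$ is finite for each $b$, which holds because $f$ is assumed finite-valued on all of $\mathbb R^N$ (so $p$ lies in the interior of $\operatorname{dom} f$); this justifies taking the infimum and comparing $q^\top b$ to a real number rather than to $-\infty$. If one wished to state the lemma with $f$ merely proper and lower semicontinuous, the same argument works provided $p$ is in the interior of the effective domain; otherwise the directional derivative may fail to be a proper support function and additional closure arguments (as in Rockafellar \cite{Rocket1997}) are needed.
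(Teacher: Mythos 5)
Your proof is correct and takes essentially the same approach as the paper. The paper's version simply cites the standard fact that a convex function is maximized at a point iff all one-sided directional derivatives there are nonpositive and rearranges; you make that fact explicit by unpacking both directions from the monotonicity of the convex difference quotient, which is precisely what the paper's cited equivalence rests on.
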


\begin{proof}[Proof of Lemma~\ref{lem:char-subdiff}]
By definition, one has that $\partial f\left( p\right)$ is the set of $q\in \mathbb R^N$ such that the quantity $q^{\top }p^{\prime }-f\left( p^{\prime }\right)$  is maximal for  $p^\prime = p$. Because $f$ is convex, the function $p^{\prime }\rightarrow q^{\top }p^{\prime
}-f\left( p^{\prime }\right) $ is maximal at $p$ if and only if all
the directional derivatives at $p$ are nonpositive, thus for all $b\in \mathbb R^{N}$,
\[
	\frac{d}{dt}\left\{ q^{\top }\left( p+tb\right) -f\left( p+tb\right)
	\right\} |_{0^{+}}\leq 0,
	\]
or equivalently,
\[q^{\top }b\leq \frac{d}{dt}c^{\ast
	}\left( p+tb\right) |_{0^{+}}.
\]
\end{proof}
	
	\bigskip

For $X\subseteq \mathbb R^N $, define $\tilde{X}$ as the set of vectors of $\mathbb R^N$ that are dominated by some vector in $X$, or more formally:
\begin{equation}\label{eq:Ctilde}
    \tilde{X}=\left\{ \tilde{q}\in \mathbb R^{N}:\exists q\in
    X~s.t.~\tilde{q}\leq q\right\} ,
\end{equation} 
and define the \emph{support function} of $X$ as
\begin{equation}\label{eq:support-function}
h_{X}\left( b\right) =\sup_{q\in X} \left\{ q^{\top }b:\right\}.
\end{equation}
Let $cch\left( X\right) $ be
the convex closure of $X$, which is the closure of the convex hull of $X$. It is well-known that $cch\left( X\right) $  is the set of elements $x\in \mathbb{R}^N$ such that $x^\top b \leq h_X(b)$ for all $b\in \mathbb{R}^N$.

The next result states that the support function of $\tilde C$ is the support function of $C$ whose domain has been restricted to nonnegative coordinates. 

\begin{lemma}\label{tympani}
	If $C$ is a closed convex set of $\mathbb R^N$, then $\tilde{C}$ as defined in expression~\eqref{eq:Ctilde} can be expressed as
	$$ \tilde{C}=\left\{ q:q^{\top }b\leq h_C\left( b\right) ,\forall b\in
	\mathbb R_{+}^{N}\right\}$$
	where $h_C$ has been defined in \eqref{eq:support-function}. 
\end{lemma}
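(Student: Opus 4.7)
The plan is to characterize $\tilde C$ via its support function and then invoke the representation of a closed convex set as the intersection of its supporting half-spaces, which is recalled in the text immediately preceding the lemma.

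First I would dispatch the easy inclusion: for any $\tilde q \in \tilde C$, pick $q \in C$ with $\tilde q \leq q$; then for every $b \in \mathbb{R}_+^N$ we have $\tilde q^\top b \leq q^\top b \leq h_C(b)$, so $\tilde C$ is contained in the right-hand side.

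For the reverse direction I would compute the support function of $\tilde C$. When $b \in \mathbb{R}_+^N$, the argument above yields $h_{\tilde C}(b) \leq h_C(b)$, while $C \subseteq \tilde C$ gives $h_{\tilde C}(b) \geq h_C(b)$, whence $h_{\tilde C}(b) = h_C(b)$. When $b$ has some negative coordinate $b_i < 0$, fixing any $c \in C$ the ray $\{c - t e_i : t \geq 0\}$ lies in $\tilde C$ and $(c - t e_i)^\top b = c^\top b + t|b_i| \to +\infty$, so $h_{\tilde C}(b) = +\infty$ and the constraint $q^\top b \leq h_{\tilde C}(b)$ becomes vacuous. Combining the two cases, the set $\{q : q^\top b \leq h_{\tilde C}(b),\ \forall b \in \mathbb{R}^N\}$ coincides with $\{q : q^\top b \leq h_C(b),\ \forall b \in \mathbb{R}_+^N\}$. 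By the representation recalled in the text just before the lemma, the former set equals $cch(\tilde C)$, the closed convex hull of $\tilde C$.

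The remaining step, which I expect to be the main obstacle, is to establish $\tilde C = cch(\tilde C)$, i.e.\ that $\tilde C$ is already closed (its convexity is immediate, since $\tilde C = C + (-\mathbb{R}_+^N)$ is a Minkowski sum of convex sets). In general a Minkowski sum of two closed convex sets need not be closed; here, however, $-\mathbb{R}_+^N$ is a polyhedral convex cone, and standard results on sums with a polyhedral summand (e.g.\ \cite[Corollary 9.1.2]{Rocket1997}) guarantee that $\tilde C$ is closed. Once this is secured, the two inclusions combine to give the claimed equality.
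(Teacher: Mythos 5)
Your approach --- compare the support function of $\tilde C$ with that of the candidate set, and then conclude equality from the representation of a closed convex set by its supporting half-spaces --- is the same route the paper takes, and you are right that the whole argument hinges on $\tilde C$ being closed. You have isolated the genuine difficulty: the paper's own proof passes over it silently in the final sentence (``same support function, and thus coincide''), which is only valid once one knows $\tilde C$ is closed and convex.

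However, the closedness step you propose does not go through, and in fact the lemma is false as stated. Adding a polyhedral cone to a closed convex set does not in general produce a closed set; the relevant Rockafellar results (Theorem 9.1 and its corollaries, or Corollary 20.3.1) require that every recession direction of $C$ lying in $\mathbb R_+^N$ be a direction of linearity of $C$. A concrete counterexample: take $N=2$ and $C=\{(x,y): x<0,\ y\ge -1/x\}$, which is closed and convex. Then $\tilde C=\{(x,y):x<0\}$, which is not closed, whereas the set $\{q:q^\top b\le h_C(b)\ \forall b\in\mathbb R_+^2\}$ works out to $\{(q_1,q_2):q_1\le 0\}$ (since $h_C(b_1,0)=0$ for $b_1\ge0$ and $h_C=+\infty$ whenever $b_2>0$). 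So the two sets differ by a boundary, and equality fails. The correct conclusion of the support-function comparison is only that the right-hand side equals $\mathrm{cch}(\tilde C)$, the closed convex hull. To obtain the lemma as written one needs an extra hypothesis guaranteeing $\tilde C$ closed --- most simply $C$ compact, which is the case in every invocation in the paper, since there $C=\partial f(p)$ for a finite-valued convex $f$ on $\mathbb R^N$ and is therefore nonempty, convex and compact. Under compactness, $C+(-\mathbb R_+^N)$ is closed by the elementary argument that a convergent sequence $c_n+k_n$ has $c_n$ subsequentially convergent, hence $k_n$ too. I would restate the lemma with $C$ compact (or with the recession-cone condition $0^+C\cap\mathbb R_+^N=\{0\}$) and add the one-line closedness argument; the remainder of your proof and of the paper's proof is then sound.
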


\begin{proof}[Proof of Lemma~\ref{tympani}]
	By the supporting hyperplane theorem, for any convex set $C$ and any boundary point $q_0 = \arg \max_{q\in C}q^{\top }b$ of the boundary of $C$ there exists a supporting hyperplane for $C$ at $q_0$. Therefore one has%
	\begin{equation*}
		C=\left\{ q:q^{\top }b\leq h\left( b\right) ,\forall b\in \mathbb R^{N}\right\}.
	\end{equation*}
	Further, note that $\hat{C}=\left\{ \hat{q}:\hat{q}^{\top }b\leq \hat{h}%
	\left( b\right) ,\forall b\in \mathbb R^{N}\right\} $, where $\hat{h}\left( b\right)
	=h\left( b\right) $ if $b\in \mathbb R_{+}^{N}$ and  $\hat{h}\left( b\right)
	=+\infty $ otherwise, and thus 
	\begin{equation*}
		\hat{h}\left( b\right) =\max_{\hat{q}\in \hat{C}}\hat{q}^{\top }b.
	\end{equation*}
	Now compute $\tilde{h}\left( b\right) =\max_{\tilde{q}\in \tilde{C}}\tilde{q}%
	^{\top }b$. One has $\tilde{C}=\left\{ q-\delta :q\in C,\delta \in
	\mathbb R_{+}^{N}\right\} $, and so%
	\begin{equation*}
		\tilde{h}\left( b\right) =\max_{q\in C}\max_{\delta \geq 0}\left( q-\delta
		\right) ^{\top }b.
	\end{equation*}%
	Thus, if $b\in {\mathbb R}_{+}^{N}$, one has $\tilde{h}\left( b\right) =\max_{q\in
		C}q^{\top }b=h\left( b\right) $. Now if $b_{z}<0$ for some $z$, one has
	clearly $\tilde{h}\left( b\right) =+\infty $. Hence 
	\begin{equation*}
		\tilde{h}\left( b\right) =\hat{h}\left( b\right) =h\left( b\right) +\iota
		_{\mathbb R_{+}^{N}}\left( b\right),
	\end{equation*}%
	where $\iota _{K}\left( b\right) =0$ if $b\in K$ and $\iota _{K}\left( b\right) =+\infty $ otherwise.
	This implies that $\hat{C}$ and $\tilde{C}$ have the same support function,
	and thus coincide.
\end{proof}

From Lemma~\ref{tympani}, it follows that:
\begin{lemma}\label{lem:convex-envelope-dominating}
\label{lem-4}The inequality $b^{\top }x\leq h_{X}\left( b\right) $ holds for all $%
b\in \mathbb{R}^N_{+}$ if and only if there is $\tilde{x}\in cch\left( X\right) $ with $%
x\leq \tilde{x}$.
\end{lemma}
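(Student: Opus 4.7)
The plan is to reduce the claim directly to Lemma~\ref{tympani} by noting that the support function is insensitive to taking convex closures. Let $C := cch(X)$, which is closed and convex by definition, so that Lemma~\ref{tympani} applies to $C$.

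First I would record the standard fact that $h_{cch(X)}(b) = h_X(b)$ for every $b \in \mathbb{R}^N$. Indeed, linear functionals $q \mapsto q^\top b$ are continuous and attain the same supremum on $X$ as on the closed convex hull of $X$: any convex combination satisfies $(\sum_i \lambda_i q_i)^\top b \leq \sup_{q \in X} q^\top b$, and passing to the closure preserves the inequality, while the reverse inequality is trivial since $X \subseteq cch(X)$. Thus the condition ``$b^\top x \leq h_X(b)$ for all $b \in \mathbb{R}^N_+$'' is the same as ``$b^\top x \leq h_C(b)$ for all $b \in \mathbb{R}^N_+$.''

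Next I would apply Lemma~\ref{tympani} to the closed convex set $C = cch(X)$, which yields
\begin{equation*}
\tilde{C} \;=\; \{\, q : q^\top b \leq h_C(b),\ \forall b \in \mathbb{R}^N_+\,\}.
\end{equation*}
Combining with the previous step, the inequality $b^\top x \leq h_X(b)$ holds for all $b \in \mathbb{R}^N_+$ if and only if $x \in \tilde{C}$. Finally, unfolding the definition of $\tilde{C}$ in \eqref{eq:Ctilde}, this membership is equivalent to the existence of some $\tilde{x} \in C = cch(X)$ with $x \leq \tilde{x}$, which is exactly the conclusion of the lemma.

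Since Lemma~\ref{tympani} has already been established and the support-function identity $h_{cch(X)} = h_X$ is standard, there is no real obstacle here; the entire argument is an unpacking of definitions layered on top of the previous lemma. The only point requiring a line of justification is $h_X = h_{cch(X)}$, and even that is immediate from linearity and continuity of $q \mapsto q^\top b$.
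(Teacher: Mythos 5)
Your proof is correct, and it is the clean version of what the authors evidently intended: they introduce the lemma with ``From Lemma~\ref{tympani}, it follows that,'' and you take that at face value. Applying Lemma~\ref{tympani} to the closed convex set $C=cch(X)$, supplying the standard identity $h_{cch(X)}=h_X$ (a linear functional attains the same supremum on $X$, on its convex hull, and on the closure of that hull), and then unfolding the definition of $\tilde C$ in \eqref{eq:Ctilde} gives the biconditional in one pass. The paper's own write-up is more roundabout: in the forward direction it does not actually cite Lemma~\ref{tympani} but re-runs the support-function computation on the set $Y=\widetilde{cch(X)}$, arguing $h_Y(b)=h_X(b)$ for $b\ge 0$ and $h_Y(b)=+\infty$ otherwise, before invoking that a closed convex set is determined by its support function; and the converse direction is handled separately via the integral representation $\tilde x=\int_0^1 x_t\,d\mu(t)$ with $x_t\in X$. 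Your reduction obtains both directions at once from the equivalence already packaged in Lemma~\ref{tympani}, which is a more modular argument. The one ingredient you add beyond the cited lemma---$h_X=h_{cch(X)}$---you justify correctly and it is indeed the only extra step required.
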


\begin{proof}[Proof of Lemma~\ref{lem:convex-envelope-dominating}]
First, we assume $b^{\top }x\leq h_{X}\left( b\right) $ for all $b\geq 0$, and show that $x\in Y$
where $Y=\left\{ x^{\prime }:\exists \tilde{x}\in cch\left( X\right)
:x^{\prime }\leq \tilde{x}\right\} $. One has $X\subseteq Y$. Consider $%
h_{Y}\left( b\right) $ for $b\in R^{Z}$. First, if $b_{z}<0$
for some $z$, then $h_{Y}\left( b\right) =+\infty $. Next, if $%
b\geq 0$, then $h_{Y}\left( b\right) =h_{X}\left(
b\right) $. Indeed, one has $h_{X}\left( b\right) \leq \iota
_{Y}^{\ast }\left( b\right) $, but taking $y\in Y$ such that $b^{\top }y$
attains $h_{Y}\left( b\right) $, we have $\iota _{Y}^{\ast
}\left( b\right) =b^{\top }y$ and by definition of $y\in Y$, there is  $%
 \tilde{x}\in cch\left( X\right)$ such that $y\leq \tilde{x}$. Hence as $b\geq 0$%
, $b^{\top }y\leq b^{\top }\tilde{x}$, and as $\tilde{x}\in cch\left(
X\right) $, we get $b^{\top }\tilde{x}\leq h_{X}\left( b\right) 
$, thus $h_{Y}\left( b\right) \leq h_{X}\left(
b\right) $. As a result $h_{Y}\left( b\right) =h_{X}\left( b\right) $ as soon as $b\geq 0$, and we have that%
\[
b^{\top }x\leq h_{Y}\left( b\right) \text{ for all }b\in R^{Z}
\]%
and therefore, given that $Y$ is a closed convex set, this implies that $%
x\in Y$.

Conversely, assume there is $\tilde{x}\in cch\left( X\right) $ with $x\leq \tilde{x}$.  Then $\tilde{x}=\int_{0}^{1}x_{t}d\mu \left( t\right) $ where $\mu $ is a
probability measure on $\left[ 0,1\right] $ and $x_{t}\in X$. Then we have $%
x_{t}^{\top }b\leq h_{X}\left( b\right) $ and  thus $x^{\top }b\leq 
\tilde{x}^{\top }b=\int_{0}^{1}x_{t}^{\top }bd\mu \left( t\right) \leq
\int_{0}^{1}h_{X}\left( b\right) d\mu \left( t\right) = h_{X}\left( b\right) $.
\end{proof}

By combining Lemma~\ref{lem:char-subdiff} and Lemma~\ref{tympani}, we get:
\begin{lemma}\label{lem:diffcstartilde}
For a convex function $f: \mathbb R^N \to \mathbb R$, one has
	\begin{equation}\label{stony}
		\left\{ \tilde{q}\in
		\mathbb R^{N}:\exists q\in \partial f\left( p\right) ~s.t.~\tilde{q}\leq
		q\right\} = \left\{ q\in \mathbb R^{N}:q^{\top
		}b\leq \frac{d}{dt}f\left( p+tb\right) |_{0^{+}},\forall b\in
		\mathbb R_{+}^{N}\right\},
	\end{equation}%
and both these values coincide with $\widetilde{\partial f(p)}$.
\end{lemma}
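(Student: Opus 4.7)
The plan is to combine Lemma~\ref{lem:char-subdiff} with Lemma~\ref{tympani}, applied to the closed convex set $C := \partial f(p)$. First note that $\partial f(p)$ is a closed convex subset of $\mathbb{R}^N$, being the intersection of the closed half-spaces $\{q : f(p') \geq f(p) + q^{\top}(p'-p)\}$ indexed by $p' \in \mathbb{R}^N$. Moreover, the set on the left-hand side of~\eqref{stony} is, by the definition~\eqref{eq:Ctilde} applied with $X = \partial f(p)$, exactly $\widetilde{\partial f(p)}$.

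Next, applying Lemma~\ref{tympani} to $C = \partial f(p)$ yields
\begin{equation*}
\widetilde{\partial f(p)} \;=\; \{\, q \in \mathbb{R}^N : q^{\top} b \leq h_{\partial f(p)}(b),\ \forall b \in \mathbb{R}_+^N \,\}.
\end{equation*}
It then remains to identify the support function $h_{\partial f(p)}(b) = \sup_{q \in \partial f(p)} q^{\top} b$ with the one-sided directional derivative $\frac{d}{dt} f(p+tb)|_{0^+}$. One inequality, namely $h_{\partial f(p)}(b) \leq \frac{d}{dt} f(p+tb)|_{0^+}$, is immediate from Lemma~\ref{lem:char-subdiff}. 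The reverse inequality is the classical statement (Rockafellar~\cite{Rocket1997}, Theorem~23.4) that the directional derivative of a finite convex function coincides with the support function of its subdifferential. Substituting this identity into the previous display produces exactly the right-hand side of~\eqref{stony}, so both sets coincide with $\widetilde{\partial f(p)}$.

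The main obstacle is precisely this reverse inequality $\frac{d}{dt} f(p+tb)|_{0^+} \leq h_{\partial f(p)}(b)$. It requires the non-emptiness of $\partial f(p)$ (guaranteed here since $f$ is finite on all of $\mathbb{R}^N$, which makes $\partial f(p)$ a nonempty compact set) together with a Hahn--Banach separation argument: given a direction $b_0$, one extends the linear functional $t b_0 \mapsto t\, f'(p;b_0)$ defined on the ray $\mathbb{R}_+ b_0$ to all of $\mathbb{R}^N$ while keeping it dominated by the sublinear function $b \mapsto f'(p;b)$, thereby producing a $q \in \partial f(p)$ with $q^{\top} b_0 = f'(p;b_0)$. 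Because this is a standard piece of convex analysis, it is cleaner to invoke it as a black box than to reprove it. An alternative, slightly cleaner route would be to apply Lemma~\ref{lem:convex-envelope-dominating} directly to $X = \partial f(p)$ (using that $\mathrm{cch}(\partial f(p)) = \partial f(p)$), which reduces the statement to the same support-function identification.
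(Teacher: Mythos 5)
Your proof is correct and follows the same route as the paper: identify the left-hand side as $\widetilde{\partial f(p)}$, apply Lemma~\ref{tympani} to $C=\partial f(p)$, and then substitute the Rockafellar Theorem 23.4 identity $h_{\partial f(p)}(b)=\tfrac{d}{dt}f(p+tb)|_{0^+}$. You are slightly more explicit than the paper in justifying the support-function identification (the paper attributes it to Lemma~\ref{lem:char-subdiff} alone, which only yields one inequality), but the argument is the same.
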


	\begin{proof}[Proof of Lemma~\ref{lem:diffcstartilde}]
Let the function $h$ in the specification of $\hat C$ be given by $h\left( b\right) =\frac{d}{dt}	f\left( p+tb\right) |_{0^{+}}$.  From Claim 1, we then have $h(b) = \max _{q\in\partial f(p)}q^{\top}b$ and hence can take $C=\partial f(p)$ in the specification of $\tilde C$.  The equality of $\hat C$ and $\tilde C$ established in  Lemma \ref{tympani} then gives the required identity \eqref{stony}.
	\end{proof}
	
	\bigskip
\noindent In the sequel, we shall consider a pair of prices $p$ and $p^\prime$ in $\mathbb R^N$, and for a vector $b \in \mathbb R^N$, we define two vectors $b^{>}$ and $b^{\leq}$ in $\mathbb R^N$ such that
\begin{equation}\label{eq:border}
b_{z}^{\leq }=b_{z}\mathbf 1_{\left\{ p_{z}\leq p_{z}^{\prime }\right\}}\text{ and } b_{z}^{>}=b_{z}\mathbf 1_{\left\{ p_{z}>p_{z}^{\prime }\right\}}.
\end{equation}
	
\begin{lemma}\label{huddle}
A function $f:\mathbb R^N\rightarrow \mathbb R$ is submodular if and only if for any $p,p^\prime$ in $\mathbb R^N$,  $b\in R_{+}^{N}$ such that  $b^{>} \leq (p-p^\prime)^+$, one has
	\begin{equation*}
		f\left( p+b^{\leq }\right) +f\left( p^{\prime }+b^{>}\right) +f\left(
		p\wedge p^{\prime }\right) \leq f\left( p\right) +f\left( p^{\prime }\right)
		+f\left( p\wedge p^{\prime }+b\right) .
	\end{equation*}
\end{lemma}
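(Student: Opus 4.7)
The plan is to prove the equivalence directly, by recognizing the inequality in the statement as the combination of two submodularity inequalities applied to carefully chosen pairs. The key observation is that the vector $b$ is split into a ``$\leq$'' part and a ``$>$'' part by the relative order of the coordinates of $p$ and $p'$, and the constraint $b^{>}\leq (p-p')^{+}$ precisely guarantees that the lattice operations between $p$, $p'$, and perturbations by $b^{\leq}$ or $b^{>}$ simplify to clean expressions.

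For the ``if'' direction I would specialize the hypothesized inequality to $b=(p-p')^{+}$. Then on coordinates with $p_{z}>p'_{z}$ one has $b_{z}=p_{z}-p'_{z}>0$, so $b^{>}_{z}=b_{z}$, while on the remaining coordinates $b_{z}=0$, so $b^{\leq}=0$ and $b^{>}=b$. A coordinate-wise check gives $p'+b^{>}=p\vee p'$ and $(p\wedge p')+b=p$, so the inequality collapses exactly to $f(p\vee p')+f(p\wedge p')\leq f(p)+f(p')$, i.e.\ submodularity.

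For the ``only if'' direction, assume $f$ is submodular and take $b\geq 0$ with $b^{>}\leq (p-p')^{+}$. The plan is to apply submodularity to two pairs. First, to $u=p$ and $v=(p\wedge p')+b$: on coordinates where $p_{z}\leq p'_{z}$ one has $v_{z}=p_{z}+b_{z}\geq p_{z}=u_{z}$, while on coordinates where $p_{z}>p'_{z}$ the bound $b_{z}\leq p_{z}-p'_{z}$ yields $v_{z}=p'_{z}+b_{z}\leq p_{z}=u_{z}$; hence $u\vee v=p+b^{\leq}$ and $u\wedge v=(p\wedge p')+b^{>}$, giving
\begin{equation*}
f(p+b^{\leq})+f((p\wedge p')+b^{>})\leq f(p)+f((p\wedge p')+b).
\end{equation*}
Second, to $u=p'$ and $v=(p\wedge p')+b^{>}$: a similar split of cases shows $u\vee v=p'+b^{>}$ and $u\wedge v=p\wedge p'$, giving
\begin{equation*}
f(p'+b^{>})+f(p\wedge p')\leq f(p')+f((p\wedge p')+b^{>}).
\end{equation*}
Summing the two and canceling the common term $f((p\wedge p')+b^{>})$ yields exactly the claimed inequality.

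No step is genuinely hard; the work is lattice bookkeeping and the verification that the four coordinate-wise cases behave as claimed. The only delicate point, and the place where the constraint $b^{>}\leq (p-p')^{+}$ is used, is in the first submodularity application: without this bound, the join $p\vee\bigl((p\wedge p')+b\bigr)$ would not reduce to $p+b^{\leq}$ on coordinates with $p_{z}>p'_{z}$. Making this dependence explicit is the main thing to highlight in writing the proof cleanly.
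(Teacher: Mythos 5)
Your proof is correct and follows essentially the same route as the paper's: the two submodularity applications you make (first to the pair $p$ and $(p\wedge p')+b$, then to the pair $p'$ and $(p\wedge p')+b^{>}$, and then cancel the common term) are exactly the paper's argument, just stated as two explicit join/meet computations rather than a chain of inequalities. The only cosmetic difference is in the converse direction, where you specialize to $b=(p-p')^{+}$ while the paper takes $b=(p'-p)^{+}$; both choices satisfy the constraint $b^{>}\leq(p-p')^{+}$ and collapse the inequality to submodularity after cancellation, so this is a matter of taste.
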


\begin{proof}[Proof of Lemma~\ref{huddle}]
	Suppose $f$ is submodular. Then we have, by the submodularity of $f$, 
	\begin{equation*}
		f\left( p\right) +f\left( p\wedge p^{\prime }+b\right) \geq f\left( p\wedge
		\left( p\wedge p^{\prime }+b\right) \right) +f\left( p\vee \left( p\wedge
		p^{\prime }+b\right) \right).
	\end{equation*}%
However,  $p\wedge \left( p\wedge p^{\prime }+b\right) =p\wedge p^{\prime }+b^{>}$
	and $p\vee \left( p\wedge p^{\prime }+b\right) =p+b^{\leq }$, and so%
	\begin{equation*}
		f\left( p\right) +f\left( p\wedge p^{\prime }+b\right) \geq f\left( p\wedge
		p^{\prime }+b^{>}\right) +f\left( p+b^{\leq }\right) .
	\end{equation*}
	This implies%
	\begin{equation*}
		f\left( p\right) +f\left( p^{\prime }\right) +f\left( p\wedge p^{\prime
		}+b\right) \geq f\left( p\wedge p^{\prime }+b^{>}\right) +f\left( p^{\prime
		}\right) +f\left( p+b^{\leq }\right).
	\end{equation*}%
	Again by the submodularity of $f$, we have 
	\begin{equation*}
		f\left( p\wedge p^{\prime }+b^{>}\right) +f\left( p^{\prime }\right) \geq
		f\left( \left( p\wedge p^{\prime }+b^{>}\right) \wedge p^{\prime }\right)
		+f\left( \left( p\wedge p^{\prime }+b^{>}\right) \vee p^{\prime }\right) ,
	\end{equation*}%
	and hence 
	\begin{eqnarray}
		&&f\left( p\right) +f\left( p^{\prime }\right) +f\left( p\wedge p^{\prime
		}+b\right)  \label{almost-final} \\
		&\geq &f\left( \left( p\wedge p^{\prime }+b^{>}\right) \wedge p^{\prime
		}\right) +f\left( \left( p\wedge p^{\prime }+b^{>}\right) \vee p^{\prime
		}\right) +f\left( p+b^{\leq }\right).  \notag
	\end{eqnarray}%
	But $\left( p\wedge p^{\prime }+b^{>}\right) \wedge p^{\prime }=p\wedge
	p^{\prime }$ and $\left( p\wedge p^{\prime }+b^{>}\right) \vee p^{\prime
	}=p^{\prime }+b^{>}$, and therefore~(\ref{almost-final}) becomes%
	\begin{equation*}
		f\left( p\right) +f\left( p^{\prime }\right) +f\left( p\wedge p^{\prime
		}+b\right) \geq f\left( p\wedge p^{\prime }\right) +f\left( p^{\prime
		}+b^{>}\right) +f\left( p+b^{\leq }\right),
	\end{equation*}%
	giving the required result.  
	
	Conversely, assume we have for all $p,p^{\prime }$ and $b\geq 0$ that%
	\begin{equation*}
		f\left( p\right) +f\left( p^{\prime }\right) +f\left( p\wedge p^{\prime
		}+b\right) \geq f\left( p+b^{\leq }\right) +f\left( p^{\prime}+b^>
		\right) +f\left( p\wedge p^{\prime }\right).
	\end{equation*}%
	Choose $b$ to be specified by $b^{\leq }=\left( p^{\prime }-p\right) ^{+}$ and $b^{>}=0$. We have%
	\begin{eqnarray*}
		p\wedge p^{\prime }+b&=&p^{\prime } \\ 
			p+b^{\leq }&=&p\vee p^{\prime }%
			\end{eqnarray*}%
	and thus%
	\begin{equation*}
		f\left( p\right) +f\left( p^{\prime }\right) \geq f\left( p\vee p^{\prime
		}\right) +f\left( p\wedge p^{\prime }\right) ,
	\end{equation*}
	giving the submodularity of $f$.
\end{proof}

\begin{lemma}\label{lemma:diff-char-submod}
A convex function $f:\mathbb R^N\rightarrow \mathbb R$ is  submodular if
and only if for any $b\in \mathbb R_{+}^{N}$ 
\begin{equation}\label{shuttle}
	\frac{d}{dt}f\left( p+tb^{\leq }\right) |_{0^{+}}+\frac{d}{dt}%
	f\left( p^{\prime }+tb^{>}\right) |_{0^{+}}\leq \frac{d}{dt}f\left( p\wedge p^{\prime }+tb\right) |_{0^{+}},
\end{equation}
where $b^{\leq}$ and $b^{>}$ are defined in equation~\eqref{eq:border}.
\end{lemma}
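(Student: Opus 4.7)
The plan is to prove each direction using Lemma~\ref{huddle}, which equates the submodularity of $f$ with the finite-difference inequality
\begin{equation*}
f(p+b^{\leq})+f(p'+b^{>})+f(p\wedge p')\leq f(p)+f(p')+f(p\wedge p'+b),
\end{equation*}
valid for all $p,p^{\prime}$ and all $b\in\mathbb R_{+}^{N}$ with $b^{>}\leq (p-p^{\prime})^{+}$.

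For the forward direction, I would assume $f$ is submodular and apply Lemma~\ref{huddle} with $tb$ in place of $b$. The constraint $tb^{>}\leq (p-p^{\prime})^{+}$ holds for all sufficiently small $t>0$ because $p_{z}>p_{z}^{\prime}$ strictly on the support of $b^{>}$. Subtracting $f(p)+f(p^{\prime})+f(p\wedge p^{\prime})$ from both sides, dividing by $t$, and passing to the limit $t\downarrow 0$ yields~\eqref{shuttle}; the right derivatives exist because the restriction of a convex function to a line is convex.

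For the converse, my plan is to recover Lemma~\ref{huddle}'s inequality by integrating the pointwise analogue of~\eqref{shuttle} along the segment $s\in [0,1]$. Since each of $s\mapsto f(p+sb^{\leq})$, $s\mapsto f(p^{\prime}+sb^{>})$ and $s\mapsto f(p\wedge p^{\prime}+sb)$ is convex, the fundamental theorem of calculus gives $f(p+b^{\leq})-f(p)=\int_{0}^{1}f^{\prime}(p+sb^{\leq};b^{\leq})\,ds$, and similarly for the other two terms, where $f^{\prime}(x;v)$ denotes the right directional derivative. It then suffices to establish
\begin{equation*}
f^{\prime}(p+sb^{\leq};b^{\leq})+f^{\prime}(p^{\prime}+sb^{>};b^{>})\leq f^{\prime}(p\wedge p^{\prime}+sb;b)
\end{equation*}
for a.e.\ $s\in[0,1]$, which I would obtain by applying the hypothesis at the shifted base points $\tilde p_{s}=p+sb^{\leq}$ and $\tilde p_{s}^{\prime}=p^{\prime}+sb^{>}$ with direction $b$, noting that $\tilde p_{s}\wedge \tilde p_{s}^{\prime}=p\wedge p^{\prime}+sb$ whenever the coordinate partition induced by $(\tilde p_{s},\tilde p_{s}^{\prime})$ matches the one induced by $(p,p^{\prime})$.

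The main obstacle is preservation of this partition along the path. The restriction $b^{>}\leq (p-p^{\prime})^{+}$ ensures that no coordinate with $p_{z}>p_{z}^{\prime}$ flips on $[0,1]$, but a coordinate $z$ with $p_{z}\leq p_{z}^{\prime}$ and $b_{z}>0$ may cross the diagonal at $s^{*}_{z}=(p_{z}^{\prime}-p_{z})/b_{z}$. I would handle this either by subdividing $[0,1]$ at the finitely many flip times and applying the hypothesis on each subinterval with the locally correct choice of $b^{\leq}$ and $b^{>}$, checking that the pieces telescope to the integrated inequality, or, as a cleaner alternative, by regularizing through convolution with a nonnegative mollifier $\rho_{\varepsilon}$. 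The convolution $f_{\varepsilon}=f*\rho_{\varepsilon}$ inherits~\eqref{shuttle} because the coordinate partition depends only on $(p,p^{\prime})$ and not on the integration variable; for smooth convex $f_{\varepsilon}$, specializing~\eqref{shuttle} at base points of the form $p^{*}+te_{i}$ and $p^{*}+se_{j}$ with $i\neq j$ and letting $s,t\downarrow 0$ reduces to $\partial_{i}\partial_{j}f_{\varepsilon}(p^{*})\leq 0$, the Hessian characterization of submodularity; the pointwise limit $\varepsilon\downarrow 0$ then transfers submodularity to $f$.
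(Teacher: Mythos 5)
Your forward direction is exactly the paper's argument and is correct: apply Lemma~\ref{huddle} with $tb$ in place of $b$ (the constraint $tb^{>}\leq(p-p')^{+}$ holds for small $t>0$ since $p_z>p'_z$ strictly on the support of $b^{>}$), subtract, divide by $t$, and pass to the right-derivative limit using convexity.

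For the converse, you correctly identify that directly integrating \eqref{shuttle} along $s\mapsto(p+sb^{\leq},\,p'+sb^{>})$ for a \emph{general} $b$ runs into the problem that coordinates with $p_z\leq p'_z$ and $b_z>p'_z-p_z$ cross the diagonal. However, the two remedies you propose are not equally sound. The subdivision scheme does not telescope: once a coordinate $z_1$ flips at time $s_1$, you must drop $b_{z_1}$ from the direction applied to the first base point and add it to the second, so the endpoints you reach at $s=1$ are $p+b^{\leq}-(1-s_1)b_{z_1}e_{z_1}$ and $p'+b^{>}+(1-s_1)b_{z_1}e_{z_1}$ rather than the points $p+b^{\leq}$ and $p'+b^{>}$ appearing in Lemma~\ref{huddle}; you would be proving a different inequality, not the one you need. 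The mollification route is correct, and your observation that the coordinate partition is translation-invariant (so $f_\varepsilon$ inherits \eqref{shuttle}) is exactly what makes it work, but it is considerably heavier machinery than necessary.

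The key point you missed, and what the paper's terse ``by integration'' must mean, is that the converse of Lemma~\ref{huddle} only uses the single direction $b=(p'-p)^{+}$, for which $b^{\leq}=b$ and $b^{>}=0$. Along the segment $s\mapsto p+sb$, no coordinate ever crosses the diagonal: for $z$ with $p_z\leq p'_z$ one has $p_z+sb_z=(1-s)p_z+sp'_z\leq p'_z$ for all $s\in[0,1]$, and for $z$ with $p_z>p'_z$ one has $b_z=0$. Hence the partition is preserved, and applying \eqref{shuttle} at base points $(p+sb,\,p')$ gives $f'(p+sb;b)\leq f'(p\wedge p'+sb;b)$ pointwise in $s$; integrating over $[0,1]$ and using $p+b=p\vee p'$, $p\wedge p'+b=p'$ yields submodularity directly, without Lemma~\ref{huddle}, subdivision, or mollification. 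I would encourage you to replace the converse argument with this one-line integration.
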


\begin{proof}[Proof of Lemma~\ref{lemma:diff-char-submod}]
Applying Lemma \ref{huddle}, if $f$ is submodular it follows that for $t\ge 0$, we have%
\begin{equation*}
	f\left( p+tb^{\leq }\right) -f\left( p\right) +f\left( p^{\prime }+tb^{>}\right) -f\left( p^{\prime }\right) \leq
	f\left( p\wedge p^{\prime }+tb\right) -f\left( p\wedge
	p^{\prime }\right),
\end{equation*}
and thus, using the convexity of $f$, it follows that%
\begin{equation*}
	\frac{d}{dt}f\left( p+tb^{\leq }\right) |_{0^{+}}+\frac{d}{dt}%
	f\left( p^{\prime }+tb^{>}\right) |_{0^{+}}\leq \frac{d}{dt}f\left( p\wedge p^{\prime }+tb\right) |_{0^{+}}.
\end{equation*}
The converse holds by integration over $t\in \left[ 0,1\right] $.
\end{proof}

\paragraph{Theorem~\ref{spice}, direct implication.} If $f$ is submodular then $\partial f\left(
p\right)$ exhibits unified gross substitutes.  

\begin{proof}[Proof of the direct implication of Theorem~\ref{spice}]
	Assume $f$ is
	submodular.
	Take $q\in \partial f\left( p\right) $ and $q^{\prime }\in \partial
	f\left( p^{\prime }\right) $. We want to show that there exists $%
	q^{\wedge }\in \partial f\left( p\wedge p^{\prime }\right) $ such
	that 
	\begin{eqnarray*}
		p_{z} \leq p_{z}^{\prime }&\implies& q_{z}\leq q_{z}^{\wedge } \\
		p_{z} >p_{z}^{\prime }&\implies& q_{z}^{\prime }\leq q_{z}^{\wedge }.
	\end{eqnarray*}
	To show this, we need to show that $q\mathbf 1_{\mathcal{Z}^{\leq }}+q^{\prime }\mathbf 1_{\mathcal{Z}^{>}}\le q^{\wedge}$, i. e, we need to show that $q\mathbf 1_{\mathcal{Z}^{\leq }}+q^{\prime }\mathbf 1_{\mathcal{Z}^{>}}\in 
	\widetilde{\partial f}\left( p\wedge p^{\prime }\right) $, where the tilde notation $\widetilde{\partial f }$ was introduced in~\eqref{eq:Ctilde}.  By Lemma~\ref{lem:diffcstartilde}, it suffices to show that%
	\begin{equation*}
		\forall b\in \mathbb R_{+}^{N}:\left( q\mathbf 1_{\mathcal{Z}^{\leq }}+q^{\prime }\mathbf1_{\mathcal{Z}^{>}}\right)
		^{\top }b\leq \frac{d}{dt}f\left( p\wedge p^{\prime }+tb\right)
		|_{0^{+}}.
	\end{equation*}
	In order to do this, take $b\in \mathbb R_{+}^{N}$ and express that $q \in \partial f(p)$ and $q^\prime \in \partial f(p^\prime)$ by writing 
	\begin{equation*}
		q^{\top }b^{\leq } \leq \frac{d}{dt}f\left( p+tb^{\leq }\right)
		|_{0^{+}}\text{ and } 
		q^{\prime \top }b^{>} \leq \frac{d}{dt}f\left( p^{\prime
		}+tb^{>}\right) |_{0^{+}},
	\end{equation*}%
	and then note that, by summation of these two inequalities we get
	\begin{equation*}
		\left( q\mathbf1_{\mathcal{Z}^{\leq }}+q^{\prime }\mathbf1_{\mathcal{Z}^{>}}\right) ^{\top }b\leq \frac{d}{dt}%
		f\left( p+tb^{\leq }\right) |_{0^{+}}+\frac{d}{dt}f\left(
		p^{\prime }+tb^{>}\right) |_{0^{+}},
	\end{equation*}%
	and so by Lemma~\ref{lemma:diff-char-submod}, we have 
	\begin{equation*}
		\forall b\in \mathbb R_{+}^{N}:\left( q\mathbf1_{\mathcal{Z}^{\leq }}+q^{\prime }\mathbf1_{\mathcal{Z}^{>}}\right)
		^{\top }b\leq \frac{d}{dt}f\left( p\wedge p^{\prime }+tb\right)
		|_{0^{+}},
	\end{equation*}%
	and hence (again, by Lemma~\ref{lem:diffcstartilde}) $q\mathbf1_{\mathcal{Z}^{\leq }}+q^{\prime }\mathbf1_{\mathcal{Z}^{>}}\in \widetilde{\partial f}%
	\left( p\wedge p^{\prime }\right)$ as required.
	
We set $\hat{f} = f(-p)$, and we introduce $\hat{p} = p^\prime$ and $\hat{p}^\prime = p$. We note that $\hat{f}$ is submodular, and we apply the previous claim to $\hat{p}$ and $\hat{p'}$ to get the existence of $q^\vee \in \partial f(p\vee p^\prime)$ such that 
\begin{eqnarray*}
	p_{z} \leq p_{z}^{\prime }&\implies& q_{z}\geq q_{z}^{\vee } \\
	p_{z} >p_{z}^{\prime }&\implies& q_{z}^{\prime }\geq q_{z}^{\vee }.
\end{eqnarray*}

\end{proof}

\paragraph{Theorem~\ref{spice}, backward implication.} The converse holds, i.e. if $\partial f\left(
p\right) $ satisfies unified gross substitutes, then $f\left( p\right) $ is submodular.

\begin{proof}[Proof of the backward implication of Theorem~\ref{spice}]
	Assume unified gross substitutes holds. Then for $q\in \partial f\left( p\right) $ and $%
	q^{\prime }\in \partial f\left( p^{\prime }\right) $, there exists $%
	q^{\wedge }\in \partial f\left( p\wedge p^{\prime }\right) $ such
	that%
	\begin{equation*}
		q\mathbf1_{\mathcal{Z}^{\leq }}+q^{\prime }\mathbf1_{\mathcal{Z}^{>}}\leq q^{\wedge }.
	\end{equation*}%
	Hence $q\mathbf 1_{\mathcal{Z}^{\leq }}+q^{\prime }\mathbf 1_{\mathcal{Z}^{>}}\in \widetilde{\partial f}%
	\left( p\wedge p^{\prime }\right) $, and therefore (using Lemma~\ref{tympani}), for all $q\in \partial
	f\left( p\right) $ and $q^{\prime }\in \partial f\left(
	p^{\prime }\right) $ 
	\begin{equation*}
		\forall b\in {\mathbb R}_{+}^{Z}:q^{\top }b^{\leq }+q^{\prime \top }b^{<}\leq \frac{d}{%
			dt}f\left( p\wedge p^{\prime }+tb\right) |_{0^{+}},
	\end{equation*}%
	and thus, by maximizing over $q\in \partial f\left( p\right) $ and $%
	q^{\prime }\in \partial f\left( p^{\prime }\right) $, we get 
	\begin{equation*}
		\frac{d}{dt}f\left( p+tb^{\leq }\right) |_{0^{+}}+\frac{d}{dt}%
		f\left( p^{\prime }+tb^{>}\right) |_{0^{+}}\leq \frac{d}{dt}c^{\ast
		}\left( p\wedge p^{\prime }+tb\right) |_{0^{+}}
	\end{equation*}%
	for all $b\geq 0$, hence (by Lemma~\ref{lemma:diff-char-submod}) $f$ is submodular.
\end{proof}

\subsection{Proof of Theorem \ref{camel}}\label{iron}

Let $q\in \mathtt Q\left(p\right)$ and $q^{\prime}\in \mathtt Q\left(p^{\prime}\right)$.  Rewriting \eqref{lester}--\eqref{flatt}, it suffices for unified gross substitutes to show that there exists $q^{\vee}\in \mathtt Q\left(p\vee p^{\prime}\right)$ and $q^{\wedge}\in \mathtt Q\left(p\wedge p^{\prime}\right)$ such that for all $z\in\mathcal{Z}$,
\begin{eqnarray*}
	\mathbf 1_{\left\{ z\in\mathcal{Z}^{\leq}\right\}} q_{z}+\mathbf1_{\left\{ z\in\mathcal{Z}%
		^{>}\right\} }q_{z}^{\prime}&\leq& q_{z}^{\wedge}\\
	\mathbf1_{\left\{ z\in\mathcal{Z}^{\leq}\right\} }q_{z}^{\prime}+\mathbf1_{\left\{ z\in%
		\mathcal{Z}^{>}\right\} }q_{z}&\ge& q_{z}^{\vee},
\end{eqnarray*}
where we have defined $\mathcal{Z}^{\leq}=\left\{ z\in\mathcal{Z}:p_{z}\leq p_{z}^{\prime}\right\} $ and $\mathcal{Z}^{>}=\left\{ z\in\mathcal{Z}:p_{z}>p_{z}^{\prime}\right\} $.

First, note that $\mu_{xz}>0$ and $p_{z}\leq p_{z}^{\prime}$ implies $p^{\prime}_x \geq G_{xz} (p^{\prime}_z) \geq G_{xz} (p_z) = p_x$, which implies
\begin{equation}\label{cymbal}
	\mu_{xz}\mathbf1_{\left\{ z\in\mathcal{Z}^{\leq}\right\}}
	\leq\mu_{xz}\mathbf1_{\left\{ x\in\mathcal{Z}^{\leq}\right\}}.
\end{equation}
Similarly,  $\mu_{xz}^{\prime}>0$ and $p_{z}>p_{z}^{\prime}$ implies
$p_x \geq G_{xz} (p_z) > G_{xz} (p^{\prime}_z) = p^{\prime}_x $, thus 
\begin{equation}\label{triangle}
	\mu_{xz}^{\prime}\mathbf1_{\left\{ z\in\mathcal{Z}^{>}\right\}}
	\leq\mu_{xz}^{\prime}\mathbf1_{\left\{ x\in\mathcal{Z}^{>}\right\}}.
\end{equation}
Given these results, set:
\begin{eqnarray*}
	\mu_{xz}^{\wedge}&=&\mathbf1_{\left\{ x\in\mathcal{Z}^{\leq}\right\}}
	\mu_{xz}+\mathbf1_{\left\{ x\in\mathcal{Z}^{>}\right\} }\mu_{xz}^{\prime}\\
	q_{z}^{\wedge}&=&\sum_{x}\mu_{xz}^{\wedge}-\sum_{y}\mu_{zy}^{\wedge}.
\end{eqnarray*}
We have $\mu_{xy}^{\wedge}>0$ implies $p_x=G_{xy}(p_y)$,
and, using \eqref{cymbal} and \eqref{triangle} for the inequality, we obtain
\begin{align*}
	q_{z}^{\wedge} & =\sum_{x}(\mathbf1_{\left\{ x\in\mathcal{Z}^{\leq}\right\}} \mu_{xz}+\mathbf1_{\left\{
		x\in\mathcal{Z}^{>}\right\}} \mu_{xz}^{\prime})  -\sum_{y}(\mathbf1_{\left\{ z\in\mathcal{Z}^{\leq}\right\}} \mu_{zy}+\mathbf1_{\left\{
		x\in\mathcal{Z}^{>}\right\}} \mu_{zy}^{\prime}) \\
	& \geq\sum_{x}(\mathbf1_{\left\{ z\in\mathcal{Z}^{\leq}\right\}} \mu_{xz}+\mathbf1_{\left\{
		z\in\mathcal{Z}^{>}\right\}} \mu_{xz}^{\prime})  -\sum_{y}(\mathbf1_{\left\{ z\in\mathcal{Z}^{\leq}\right\}} \mu_{zy}+\mathbf1_{\left\{
		z\in\mathcal{Z}^{>}\right\}} \mu_{zy}^{\prime}) \\
	& =\mathbf1_{\left\{ z\in\mathcal{Z}^{\leq}\right\}} q_{z}+\mathbf1_{\left\{ z\in\mathcal{Z}%
		^{>}\right\}} q_{z}^{\prime},
\end{align*}
giving the required result for $q^{\wedge}$.
A similar argument shows that $\mathbf1_{\left\{ z\in\mathcal{Z}^{\leq}\right\}}
q_{z}^{\prime}+\mathbf1_{\left\{ z\in\mathcal{Z}^{>}\right\}} q_{z}^{\prime}\geq
q_{z}^{\vee}$.

\subsection{Matching and Equilibrium Flows}\label{chatter}
We associate an equilibrium flow problem with this matching market, generalized to accommodate unmatched agents.  See  figure~\ref{fig:matching-with-singles}.

\begin{figure}[t]
    \centering
    \includegraphics[width=0.35\textwidth]{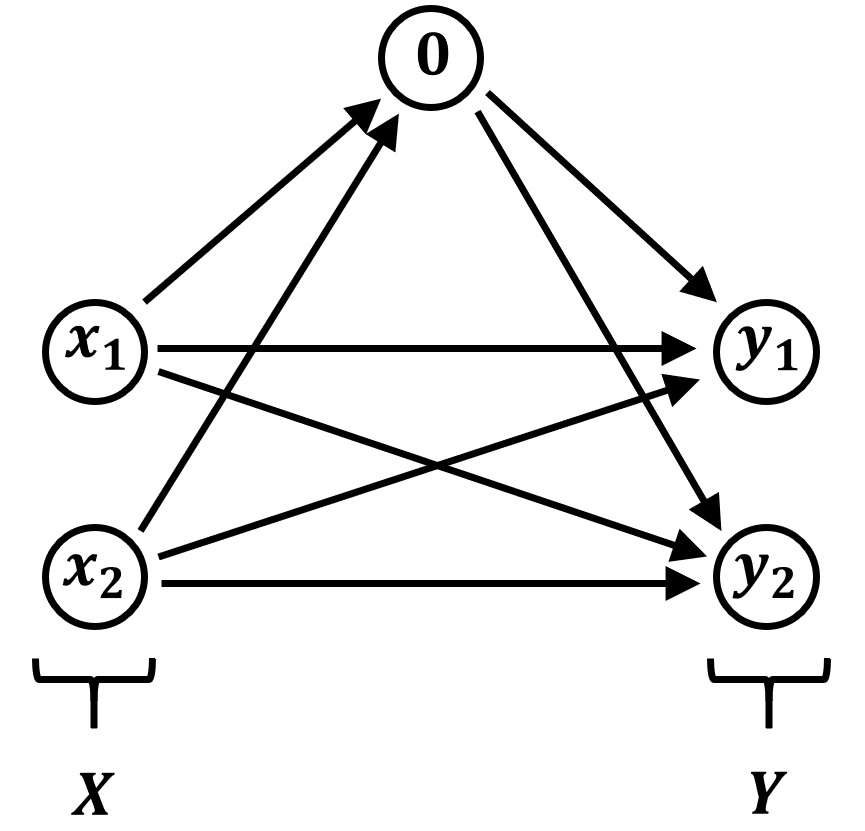}
    \caption{Network associated with a bipartite matching problem with singles.}
    \label{fig:matching-with-singles}
\end{figure}

\begin{eqnarray*}
	\mathcal{Z}&=&\mathcal{X}\cup\mathcal{Y}\cup\{0\}\\
	\mathcal{A}&=&((\mathcal{X}\cup \{0\})\times(\mathcal{Y}\cup\{0\}))\setminus \{(0,0)\}\\
	q_z &=& - n_z,~~~~~z\in \mathcal X\\
	q_z &=& m_z,~~~~~z\in \mathcal Y.
\end{eqnarray*}
Given a price $p$, we define the connection function
\begin{eqnarray*}
	G_{xy}\left(p_y\right)&=& \mathcal{U}_{xy}\circ \mathcal{V}_{xy}^{-1}\left(-p_{y}\right) \text{ for }x\in\mathcal{X},y\in\mathcal{Y} \\
	G_{x0}\left(p\right)&=&p_{0}\text{ for }x\in\mathcal{X} \\
	G_{0y}\left(p\right)&=&p_{y}\text{ for }y\in\mathcal{Y}.
\end{eqnarray*}

\noindent As required, $G_{xy}(p_y)$ is increasing in $p_y$.  We adopt the normalization $p_{0}=0$. We now establish a relationship between stable matchings of the matching market and equilibria of the equilibrium flow problem.   First, let $(q,\mu,p)$ be an equilibrium of the equilibrium flow problem.  
To obtain a stable matching $(\mu,w)$, choose $w_{xy}$ so that $\mathcal{V}%
_{xy}^{-1}\left(-p_{y}\right)\leq w_{xy}\leq\mathcal{U}_{xy}^{-1}\left(p_{x}%
\right)$.  Because we have $p_x \geq G_{xy} (p_y)$ in equilibrium, this is possible.  Then for worker $x$ we have
\begin{eqnarray*}
	\mu_{xy}>0 &\implies& p_x = G_{xy}\left(p_y\right) \\
	&\implies&w_{xy} = \mathcal U^{-1}_{xy}(p_x) \\
	p_x \geq G_{xz}(p_z) \le 0&\implies&w_{xz} \le \mathcal U^{-1}_{xz}(p_x)\\
	&\implies &\mathcal U_{xy}(w_{xy})\ge \mathcal U_{xz}(w_{xz}),
\end{eqnarray*}
which gives
\[
y\in\arg\max_{y\in\mathcal{Y}}\mathcal{U}%
_{xy}\left(w_{xy}\right),
\]
as needed.  The argument for firms is similar, giving a stable matching.	

Conversely, suppose we have a stable matching $(\mu,w)$.  Define $q_z = - n_x$ for $z\in \mathcal X$ and $q_z = m_y$ for $z\in \mathcal Y$.  We identify prices $p$ such that $(q,\mu,p)$ is an equilibrium of the equilibrium flow problem.  Define the indirect utilities
\begin{eqnarray*}
	u_{x}&=&\max_{x\in\mathcal{X}}\left\{\mathcal{U}_{xy}\left(w_{xy}\right),0\right\} \\
	v_{y} &=&\max_{y\in%
		\mathcal{Y}}\left\{ \mathcal{V}_{xy}\left(w_{xy}\right),0\right\}.
\end{eqnarray*}
Then define $p$ by $p_{x}=u_{x}$ if $x\in\mathcal{X}$, $p_{y}=-v_{y}$ if $y\in\mathcal{Y}$ and $p_{0}=0$, and define $q_z = - n_x$ for $z\in \mathcal X$ and $q_z = m_y$ for $z\in\mathcal Y$.  Define the connection functions $G_{xy}$ as above.  If $\mu_{xy} = 0$, then it follows from the stability condition for a stable matching that $p_x = G_{xy}(p_y)$.  For other pairs $(x,y)$, the stability condition implies that there is no wage $w_{xy}$ at which $x$ and $y$ can match and obtain utilities in excess of $u_x$ and $v_y$, which is equivalent to the statement that $p_x \geq G_{xy}(p_y)$.  We thus have an equilibrium flow.

\subsection{Proof of Theorem \ref{th_ntu_matching}}\label{proof_th_ntu_matching}

Statement (i): Assume $\mu $ is a stable matching. Let $y\in \mathcal{Y}$ and let $%
x\in \mathcal{X}_{0}$ be the match of $y$ under $\mu $. We need to consider two
cases:

\bigskip

\noindent First case: Let $x\in \mathcal{X}$.   We have that $v_{y}^{\mu }=\gamma _{xy}$,
and there cannot be another $y^{\prime }\in \mathcal{Y}$ with $\gamma
_{xy^{\prime }}\geq v_{y^{\prime }}^{\mu }$ and $\alpha _{xy^{\prime
}}>u_{x}^{\mu }=\alpha _{xy}$; for if this were the case, $y\neq y^{\prime }$%
, and hence $x$ is not matched with $y^{\prime }$, and thus $\gamma
_{xy^{\prime }}\geq v_{y^{\prime }}^{\mu }$ would imply $\gamma _{xy^{\prime
}}>v_{y^{\prime }}^{\mu }$. As a result, 
\begin{equation}
y\in \arg \max_{y\in \mathcal{Y}}\left\{ \alpha _{xy}:\gamma _{xy}\geq
v_{y}^{\mu },\alpha _{x0}\right\} .  \label{res1}
\end{equation}

\noindent Now assume that $y\in \arg \max_{y}\left\{ \alpha _{x^{\prime }y}:\gamma
_{x^{\prime }y}\geq v_{y},\alpha _{x^{\prime }0}\right\} $ for some $%
x^{\prime }\neq x$; but then $x^{\prime }y$ would be a blocking pair;
further, $v_{y}^{\mu }=\gamma _{xy}>\gamma _{0y}$. As a result, for any $%
x^{\prime }\in \mathcal{X}\backslash \left\{ x\right\} $, we have%
\begin{equation}
y\notin \arg \max_{y}\left\{ \alpha _{x^{\prime }y}:\gamma _{x^{\prime
}y}\geq v_{y},\alpha _{x^{\prime }0}\right\}.  \label{res2}
\end{equation}
Finally, because $\mu $ is stable and $x\neq 0$, we have 
\begin{equation}
v_{y}^{\mu }=\gamma _{xy}>\gamma _{0y}.  \label{res3}
\end{equation}
By putting (\ref{res1}), (\ref{res2}), and (\ref{res3}) together, we get
that $Q_{y}\left( v^{\mu }\right) =0$.

\bigskip

\noindent Second case: Let $x=0$.  Then $v_{y}^{\mu }=\gamma _{0y}$ and a similar logic
as above shows that for any $x^{\prime }\in \mathcal{X}$, we have%
\begin{equation*}
y\notin \arg \max_{y}\left\{ \alpha _{x^{\prime }y}:\gamma _{x^{\prime
}y}\geq v_{y},\alpha _{x^{\prime }0}\right\},  
\end{equation*}%
and as a result we get that $Q_{y}\left( v^{\mu }\right) =0$.

\bigskip

\noindent Statement (ii): Conversely, assume $Q\left( v\right) =0$. Then for $x\in \mathcal{X}$
and $y\in \mathcal{Y}$, we define 
\begin{equation*}
\left\{ 
\begin{array}{l}
\mu _{xy}=\mathbf 1_{\left\{ y\in \arg \max_{y}\left\{ \alpha _{xy}:\gamma _{xy}\geq
v_{y},\alpha _{x0}\right\} \right\} } \text{for }x\in \mathcal{X}\text{ and }%
y\in \mathcal{Y}_{0}\text{,} \\ 
\mu _{0y}=\mathbf 1_{\left\{ \gamma _{0y}\geq v_{y}\right\}} \text{ for }y\in \mathcal{Y%
}\text{.}%
\end{array}%
\right. 
\end{equation*}%
We have $\sum_{x\in \mathcal{X}_{0}}\mu _{xy}=1$ by assumption and it is
straightforward to see that $\sum_{y\in \mathcal{Y}_{0}}\mu _{xy}=1$ by the strict
preferences assumption.

We need to show that $\mu $ is a stable matching. First, let us show that
there is no blocking pair. By contradiction, assume $xy$ is a blocking pair.
Consider $x^{\ast }\in \mathcal{X}_{0}$, the match of $y$ under $\mu $, and $%
y^{\ast }\in \mathcal{Y}_{0}$, the match of $x$ under $\mu $. Because $xy$ is
a blocking pair, we have $\gamma _{xy}>\gamma _{x^{\ast }y}$ and $\alpha
_{xy}>\alpha _{xy^{\ast }}$.

We show that $\gamma _{xy}>v_{y}$. Indeed, if $x^{\ast }\neq 0$ we have that because $x^{\ast }$ is matched with $y$, then $%
y\in \arg \max_{y}\left\{ \alpha _{x^{\ast }y}:\gamma _{x^{\ast }y}\geq
v_{y},\alpha _{x^{\ast }0}\right\} $, and $\gamma _{x^{\ast }y}\geq v_{y}$, and thus $\gamma _{xy}>\gamma _{x^{\ast
}y}\geq v_{y}$. Otherwise, if $x^{\ast }=0$, we have $\mu _{0y}=1$, and hence $%
\gamma _{0y}\geq v_{y}$; but we have $\gamma _{xy}>\gamma _{0y}$, because $xy$
is a blocking pair, and thus $\gamma _{xy}>\gamma _{0y}\geq v_{y}$. In either
case, $\gamma _{xy}>v_{y}$ as announced.

As $\alpha _{xy^{\ast }}=\max_{y}\left\{ \alpha _{xy}:\gamma _{xy}\geq
v_{y},\alpha _{x0}\right\} $ and $\gamma _{xy}>v_{y}$, it follows that $%
\alpha _{xy^{\ast }}\geq \alpha _{xy}$, a contradiction.

Finally, we need to show that there is no blocking individual. Assume $xy$
is a matched pair. As $\mu _{xy}=1$, we have $\alpha _{xy}\geq \alpha _{x0}$
by definition of $\mu $. Next, we have $\mu _{0y}=0$ therefore $\gamma
_{0y}<v_{y}$ and in particular $\gamma _{0y}\leq v_{y}$.

\subsection{Proof of Theorem \ref{th_ntu_matching_m0}}\label{proof_th_ntu_matching_m0}

It is straightforward to verify that $\mathtt Q$ satisfies weak gross substitutes, and hence it
defines a point-valued correspondence for which unified gross substitutes
holds, by Property \ref{prop:wgs-implies-ugs}. Next, we show that is it a M0-correspondence by showing that it satisfies
monotone total output. We have 
\begin{eqnarray*}
\sum_{y\in Y}Q_{y}\left( v\right)  &=&\left\vert Y\right\vert -\sum_{y\in
Y}\sum_{x\in X}\mathbf 1_{\left\{ y\in \arg \max_{y}\left\{ \alpha _{xy}:\gamma
_{xy}\geq v_{y},\alpha _{x0}\right\} \right\}} -\sum_{y\in Y}\mathbf 1_{\left\{ \gamma
_{0y}\geq v_{y}\right\} } \\
&=&\left\vert Y\right\vert -\sum_{x\in X}\sum_{y\in Y}\mathbf 1_{\left\{ y\in \arg
\max_{y}\left\{ \alpha _{xy}:\gamma _{xy}\geq v_{y},\alpha _{x0}\right\}
\right\}} -\sum_{y\in Y}\mathbf 1_{\left\{ \gamma _{0y}\geq v_{y}\right\}}  \\
&=&\left\vert Y\right\vert -\left\vert X\right\vert -\sum_{x\in X}\mathbf 1_{\left\{
0\in \arg \max_{y}\left\{ \alpha _{xy}:\gamma _{xy}\geq v_{y},\alpha
_{x0}\right\} \right\}} -\sum_{y\in Y}\mathbf 1_{\left\{ \gamma _{0y}\geq v_{y}\right\}} 
\\
&=&\left\vert Y\right\vert -\left\vert X\right\vert -\sum_{x\in X}\mathbf 1_{\left\{
\alpha _{x0}\geq \max_{y\in Y}\left\{ \alpha _{xy}:\gamma _{xy}\geq
v_{y}\right\} \right\}} -\sum_{y\in Y}\mathbf 1_{\left\{ \gamma _{0y}\geq v_{y}\right\}}. 
\end{eqnarray*}%
We then note that $\mathbf 1_{\left\{ \alpha _{x0}\geq \max_{y\in Y}\left\{ \alpha _{xy}:\gamma
_{xy}\geq v_{y}\right\} \right\}} $ is nondecreasing in each $v_{y}$, and similarly  $\mathbf 1_{\left\{ \gamma _{0y}\geq v_{y}\right\}} $ is also nondecreasing
in each $v_{y}$. Hence monotone total output holds, and nonreversingness
follows.

\subsection{Proof of Theorem \ref{thm:ugs-for-hedonic}}\label{app:hedonic}

We show that the hedonic pricing problem is an equilibrium flow problem, as illustrated in Figure~\ref{fig:hedonic-model}.

\begin{figure}[ht]
    \centering
    \includegraphics[width=0.35\textwidth]{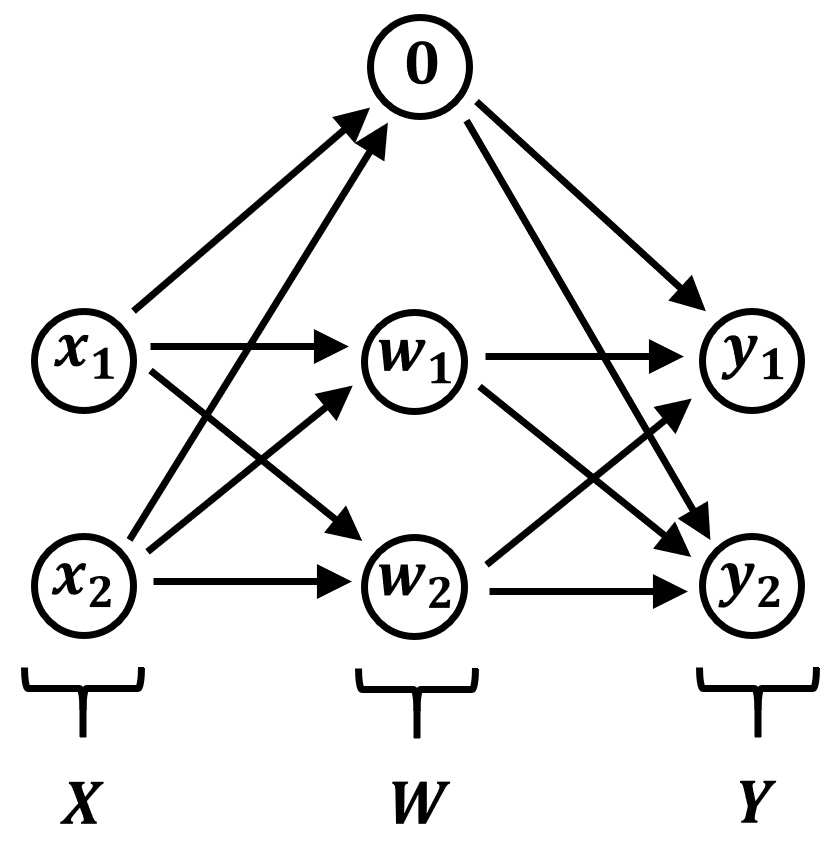}
    \caption{Network associated with a hedonic model.}
    \label{fig:hedonic-model}
\end{figure}

Consider $\mathcal{Z}=\mathcal{X}\cup \mathcal{Y}\cup \mathcal{W}$ and $%
\mathcal{Z}_{0}=\mathcal{Z}\cup \left\{ 0\right\} $, where $0$ is an
additional node. Denote as well $\mathcal{W}_{0}=\mathcal{W}\cup \left\{
0\right\} $. The set of arcs $\mathcal{A}$ is given by%
\begin{equation*}
	\mathcal{A}=(\mathcal X\times \mathcal{W}_{0})\cup (\mathcal{W}_{0}\times \mathcal{Y}).
\end{equation*}
To define the connection functions, first let $p_{x} = u_{x}$ for all $x\in \mathcal{X}$, $p_{w} = p_{w}$ for all $w\in \mathcal{W}$ and $p_{y} = -v_{y}$ for all $y\in \mathcal{Y}$.

Then the connections functions are given by $G_{xw}\left( p_w\right) =\pi_{xw}\left( p_{w}\right)$, $G_{x0}\left( p_0\right) =p_{0}$, $G_{wy}\left( p_y\right) = s^{-1}_{yw}\left( - p_{y}\right)$ and $G_{0y}\left( p_y\right) =  p_{y}$.

The equilibrium stocks are given by $q_{w} = \sum_{x\in \mathcal{X}}\mu _{x0} - \sum_{y\in \mathcal{Y}}\mu _{0y}$ for all $w\in \mathcal{W}$, $q_{x} = -n_{x}$ for all $x\in \mathcal{X}$, $q_{w} = 0$ for all $w\in \mathcal{W}$ and $q_{y} = m_{y}$ for all $y\in \mathcal{Y}$.

We can then reformulate the equilibrium conditions of Definition \ref{book} as an equilibrium flow on this network, along with the normalization $p_0=0$.

\bibliographystyle{plain}
\bibliography{larrybib}

\clearpage


\begin{center}
	{\large Monotone Comparative Statics 		for Equilibrium Problems
	
\medskip	
	
Online Appendix}
	
	\bigskip
	
Alfred Galichon$^{\dag }$, Larry Samuelson{\small $^{\flat }$}, and Lucas Vernet{\small $^{\S }$}
	
	\bigskip
	
	\today
	
	\bigskip

\end{center}

\section{Online Appendix}

\subsection{Weak or Strong Inequalities?}\label{par:weak-strong}

One might be tempted to strengthen Definition \ref{duck}  by asking for weak inequalities in the antecedents of both \eqref{lester} and \eqref{flatt}.  To see why we retain the asymmetry, consider the correspondence $\mathtt Q$ defined by
\begin{equation}\label{vesper}
	\mathtt Q\left(p\right)=\arg\max_{q\in\mathbb{R}_{+}^{N}}\left\{ \sum_{z=1}^Np_{z}q_{z}:\sum_{z=1}^Nq_{z}=1\right\} .
\end{equation}
We can think of an agent who must select an outcome from the set $\{1,\ldots, N\}$, receiving a  payoff $p_z$ when selecting outcome $z$.  
The agent then maximizes her expected payoff by predicting the  outcome with the highest payoff, if unique, and otherwise choosing any mixture of the set of payoff-maximizing outcomes.  The correspondence $\mathtt Q(p)$ describes this payoff-maximizing strategy.%
\footnote{Alternatively, suppose we interpret $p\in P\subseteq \mathbb R^N_+$ as a vector of realizations of random utilities, and suppose that demand is given by a multinomial logit function.  Then $q\in \mathtt Q(p)$ if 
	\[
	q_z = \frac{{\rm exp}\frac{p_z}{T}}{\sum_{\tilde z=1}^N{\rm exp}\frac{p_{\tilde z}}{T}},~~~~z=1,\ldots,N.
	\]
	for some ``rationality''  parameter $T>0$ that is fixed as part of the logit specification.  In the limit as $T\rightarrow 0$, we obtain a demand correspondence,  in which a vector $q$ is optimal if and only if it exhausts  the budget constraint and $q_z>0\Longleftrightarrow p_z = \max_{\tilde z\in \{1,\ldots,N\}}p_{\tilde z}$, as specified by \eqref{vesper}.}

One should expect this problem to give rise to substitutes.  If the current choice is outcome $z$, and then the payoff attached to outcome $z'$ increases, the choice should either remain $z$ (if the payoff on $z'$ is still too small) or switch to $z'$, making outcomes $z$ and $z'$ substitutes.  

We confirm at the end of this section that \eqref{vesper} satisfies unified gross substitutes.
To see that it fails the stronger condition, consider  the prices $p$ and $p'$ and allocations $q\in \mathtt Q(p)$ and $q'\in \mathtt Q(p')$ given by
\begin{equation*}
	\begin{array}{ccccccc}
		z && p & q& p^{\prime } & q^{\prime } & p\vee p^{\prime } \\ \hline
		1: && 1 & 1 & 1 & 0 & 1 \\
		2: && 1 & 0 & 1 & 1 & 1%
	\end{array}%
	~~~.
\end{equation*}
We have the following implications of \eqref{lester} and the weak-inequality version of \eqref{flatt}:
\begin{eqnarray*}
	1=p_{1}\leq p_{1}^{\prime }=1 &\implies &q_{1}^{\vee }\leq q_{1}^{\prime }=0\\
	1=p_{2}^{\prime }\leq p_{2}=1 &\implies &q_{2}^{\vee }\leq q_{2}=0.
\end{eqnarray*}%
But then we must have $q_{1}^{\vee }=q_{2}^{\vee }=0$, which contradicts the requirement that $q^{\vee}\in \mathtt Q(p\vee p')$.
The correspondence $\mathtt Q$ thus fails the proposed stronger formulation of unified gross substitutes.

To show that \eqref{vesper} satisfies unified gross substitutes, consider $q\in \mathtt Q\left( p\right) $ and $q^{\prime }\in \mathtt Q\left( p^{\prime}\right) $. We would like to show that:

(i) there is $q^{\vee }\in \mathtt Q\left( p\vee p^{\prime }\right) $ such that
\begin{equation}
	q_{z}^{\vee }\leq q_{z}^{\prime }\bm1_{\left\{ p_{z}\leq p_{z}^{\prime }\right\}}
	+q_{z}\bm1_{\left\{ p_{z}>p_{z}^{\prime }\right\}}, ~{\rm and}   \label{req2}
\end{equation}

(ii) there is $q^{\wedge }\in \mathtt Q\left( p\wedge p^{\prime }\right) $ such that
\begin{equation}
	q_{z}\bm 1_{\left\{ p_{z}\leq p_{z}^{\prime }\right\}} +q_{z}^{\prime }\bm1_{\left\{
		p_{z}>p_{z}^{\prime }\right\}} \leq q_{z}^{\wedge }.  \label{req1}
\end{equation}

\noindent [{\sc Step} (i)]  To establish (i), we consider two cases:

\begin{itemize}
	
	\item[(a)] $\max p_{z}>\max p_{z}^{\prime }$. Then set $q^{\vee }=q$, and notice that $%
	q_{z}^{\vee }=q_{z}>0$ implies $p_{z}>p_{z}^{\prime }$, giving $q_{z}^{\prime
	}\bm1_{\left\{ p_{z}\leq p_{z}^{\prime }\right\}} +q_{z}\bm1_{\left\{
		p_{z}>p_{z}^{\prime }\right\}} =q_{z}$, and hence requirement~(\ref{req2}) is met.
	
	\item[(b)] $\max p_{z}\leq \max p_{z}^{\prime }$. Then set $q^{\vee }=q^{\prime }$%
	, and notice that $q_{z}^{\vee }=q_{z}^{\prime }>0$ implies $p_{z}\leq p_{z}^{\prime }$,
	giving $q_{z}^{\prime }\bm1_{\left\{ p_{z}\leq p_{z}^{\prime }\right\}}
	+q_{z}\bm1_{\left\{ p_{z}>p_{z}^{\prime }\right\}} =q_{z}^{\prime }$, and hence
	requirement~(\ref{req2}) is met.
	
\end{itemize}

\noindent [{\sc Step} (ii)]  In order to show (ii), we first show that for $y$ and $z$ in $\{1,\ldots,N\}$, we have
\begin{equation}
	\left(q_{z}       \bm1_{\left\{ p_{z}\leq p_{z}^{\prime }\right\}} \right) 
	\left(q_{y}^{\prime }\bm1_{\left\{ p_{y}>p_{y}^{\prime }\right\}} \right) =0.
	\label{complementarity}
\end{equation}
Indeed, assume otherwise. The definition of $\mathtt Q$ implies that $q_{z}>0$
implies $p_{z}\geq p_{y}$, and $q_{y}^{\prime }>0$ implies $p_{y}^{\prime
}\geq p_{z}^{\prime }$, which we combine with the negation of~(\ref%
{complementarity}) to obtain
\begin{equation*}
	p_{z}^{\prime }\geq p_{z}\geq p_{y}>p_{y}^{\prime }\geq p_{z}^{\prime },
\end{equation*}%
a contradiction. 

Then define
\begin{eqnarray*}
	\mathcal{M} &=&\left\{ z:q_{z}\bm1_{\left\{ p_{z}\leq
		p_{z}^{\prime }\right\}} >0\right\}  \\
	\mathcal{M}^{\prime } &=&\left\{ z:q_{z}^{\prime }\bm1_{\left\{
		p_{z}>p_{z}^{\prime }\right\}} >0\right\}.
\end{eqnarray*}%
Note that it follows from~(\ref{complementarity}) that if $\mathcal{M}$ is
nonempty then $\mathcal{M}^{\prime }$ is empty and conversely. We thus
consider three cases:

\begin{itemize}
	
	\item[(a)] Assume that $\mathcal{M}$ is nonempty. Then $q_{y}^{\prime }\bm1_{\left\{
		p_{y}>p_{y}^{\prime }\right\}} =0$ for all $y\in \{1,\ldots,N\}$. Note that for $%
	z\in \mathcal{M}$ and for $y\in \{1,\ldots,N\}$, $p_{z}\geq p_{y}$, and
	therefore $p_{z}=p_{z}\wedge p_{z}^{\prime }\geq p_{y}\wedge p_{y}^{\prime }$%
	, therefore $z\in \mathcal{M}\implies z\in \arg \max_{y\in \mathcal{Z}%
	}\left\{ p_{y}\wedge p_{y}^{\prime }\right\} $. Thus, set
	\begin{equation*}
		q_{z}^{\wedge }=\frac{q_{z}\bm1_{\left\{ p_{z}\leq p_{z}^{\prime }\right\}} }{%
			\sum_{y\in \mathcal{M}}q_{y}\bm1_{\left\{ p_{y}\leq p_{y}^{\prime }\right\}} }
	\end{equation*}%
	and~(\ref{req1}) is met.
	
	\item[(b)] Assume that $\mathcal{M}^{\prime }$ is nonempty. Then $q_{z}\bm1_{\left\{
		p_{z}\leq p_{z}^{\prime }\right\}} =0$ for all $y\in \{1,\ldots,N\}$. Note that
	for $z\in \mathcal{M}^{\prime }$ and for $y\in \{1,\ldots,N\}$, $p_{z}^{\prime
	}\geq p_{y}^{\prime }$, and therefore $p_{z}^{\prime }=p_{z}\wedge
	p_{z}^{\prime }\geq p_{y}\wedge p_{y}^{\prime }$, therefore $z\in \mathcal{M}%
	^{\prime }\implies z\in \arg \max_{y\in \mathcal{Z}}\left\{ p_{y}\wedge
	p_{y}^{\prime }\right\} $. Thus, set
	\begin{equation*}
		q_{z}^{\wedge }=\frac{q_{z}^{\prime }\bm1_{\left\{ p_{z}>p_{z}^{\prime }\right\}}
		}{\sum_{y\in \mathcal{M}^{\prime }}q_{y}^{\prime }\bm1_{\left\{
				p_{y}>p_{y}^{\prime }\right\}} }
	\end{equation*}%
	and (\ref{req1}) is met.
	
	\item[(c)] If both $\mathcal{M}$ and $\mathcal{M}^{\prime }$ are empty, then~(\ref%
	{req1}) is trivially met by any $q^{\wedge }$ supported in $\arg \max
	\left\{ p\wedge p^{\prime }\right\} $.
\end{itemize}

\noindent  This completes the argument.

\subsection{Nonreversingness and Aggregate Monotonicity but Not Monotone Total Output}\label{kettle}

We give an example of a point-valued correspondence that satisfies aggregate and weighted monotonicity and hence nonreversingness but
not the monotone total output condition defined in Property \ref{zeus}.  
Consider the function
\[
\mathtt q(p) = \arg\max_{q\in\mathbb R^3} p^{\top}q-q^{\top}Cq,
\]
where the matrix $C$ is given by
\[
C=\left[
\begin{array}{ccc}
	25&10&24\\
	10&5&10\\
	24&10&25
\end{array}
\right].
\]
We can solve for the supply function $\mathtt q(p) =\frac12 C^{-1}p$, 
where
\[
C^{-1} = \frac{1}{90}
\left[
\begin{array}{ccc}
	~~50&-20&-40\\
	-20&~~98&-20\\
	-40&-20&~~50
\end{array}
\right]
\]
and the inverse function $\mathtt q^{-1}(q)$  is given by $\mathtt q^{-1}(q) = 2Cq$.

The function $\mathtt q$ satisfies unified gross substitutes (it suffices that $C^{-1}$ has positive entries on the diagonal and negative entries elsewhere)  and the inverse correspondence $\mathtt q^{-1}$ is isotone (because $C$ is positive), and hence $\mathtt q(p)$ is nonreversing (cf. Property \ref{monica}).  However, monotone total output fails.  For example, $\mathtt q(0,0,0) = (0,0,0)$ while $\mathtt q(2,0,0) = (50,-20,-40)$.  Aggregate monotonicity is satisfied as is weighted monotonicity (for example, set  $k = (2,1,1)$ for this pair of prices). 
\hfill\rule{.1in}{.1in}


		%
		%

\subsection{Equivalent Definitions of M-functions}\label{water}

We first show that our Definition \ref{shade} implies More and Rheinboldt's \cite[Definition 2.3, p. 48]{more1973p} definition of an M-function.  Hence, let $\mathtt Q$ be a function satisfying unified gross substitutes and nonreversingness, with a point-valued inverse.  Property \ref{prop:wgs-implies-ugs} establishes that $\mathtt Q$ satisfies weak gross substitutes (and hence in More and Rheinboldt's \cite{more1973p}
language is off-diagonally antitone). By Theorem \ref{belgian}, the inverse function $\mathtt Q^{-1}$ is isotone in the strong set order, and hence for $q\leq q^{\prime }$ with $p\in \mathtt Q^{-1}\left(
q\right) $ and $p^{\prime }\in \mathtt Q^{-1}\left( q^{\prime }\right) $, we have $p\wedge p^{\prime }\in \mathtt Q^{-1}\left( q\right) $ and $p\vee p^{\prime }\in
\mathtt Q^{-1}\left( q^{\prime }\right) $. However, as $\mathtt Q^{-1}$ is point valued, $%
Q^{-1}\left( q\right) =\left\{ p\right\} $ and $Q^{-1}\left( q^{\prime
}\right) =\left\{ p^{\prime }\right\} $ and thus $p\wedge p^{\prime }=p$ and 
$p\vee p^{\prime }=p^{\prime }$.  Hence $\mathtt Q(p)\le \mathtt Q(p')\implies p\le p'$, and so $\mathtt Q$  is an M-function in the sense of More-Rheinboldt.

Conversely, assume that $\mathtt Q$ is a M-function in the sense of More-Rheinboldt.
Then by Property \ref{prop:wgs-implies-ugs}, the point-valued correspondence $\mathtt  Q $
satisfies unified gross substitutes.   Because we have (for $q\in \mathtt Q(p)$ and $q'\in \mathtt Q(p')$) 
\begin{equation}\label{hockey}
	q\le q' \implies p\le p',    
\end{equation} 
the inverse $\mathtt Q^{-1}$ is point valued.  (In particular, applying this implication and $q\le q$ to $q\in \mathtt Q(p)$ and $q\in \mathtt Q(p')$ gives $p\le p'$ and $p'\le p$).  Finally, \eqref{hockey} implies that the antecedent of nonreversingness can hold only if $p=p'$, at which point the consequent must hold, giving nonreversingness and hence Definition \ref{shade}.\hfill\rule{.1in}{.1in}

\subsection{Imperfect Competition}\label{sawbones}
In this subsection, we assume throughout that $Q$ is a compact subset of $\mathbb R ^ {\mathcal Z}$. We consider the supply correspondence given by 
$$
\mathtt Q\left( p\right) =\arg \max_{q \in Q}\left\{ \pi \left( p,q\right) -c\left(
q\right) \right\}
$$
which is associated with indirect profit function 
$$
c^{\pi}\left( p\right) =\max_{q \in Q}\left\{ \pi \left( p,q\right) -c\left(
q\right) \right\},
$$
where we assume that the profit function is such that
$\pi \left( p,q\right) =\sum_{z}\pi _{z}\left( p_{z},q_{z}\right) $ for some functions $\pi_z(.,.)$.

We investigate which restrictions on $\pi_z$ are needed to maintain necessary and sufficient conditions that relate unified gross substitutes of $\mathtt Q$ to the submodularity of the indirect profit function $c^{\pi}$.

\begin{proposition}\label{prop:ugs-implies-submod-gal}
	Assume $\pi \left( p,q\right) =\sum_{z}\pi _{z}\left( p_{z},q_{z}\right) $, 
	where $\partial _{pq}^{2}\pi _{z}\left( p_{z},q_{z}\right) $ exists and is
	nonnegative and is continuous, and $\pi _{z}\left( p_{z},q_{z}\right) $ is
	concave in $q_{z}$. Then unified gross substitutes of $\partial c^{\pi}\left( p\right) $
	implies submodularity of $c^{\pi}$.
\end{proposition}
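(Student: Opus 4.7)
The plan is to exploit the envelope theorem to express the cross-partial derivatives of $c^\pi$ in terms of the marginal response of the optimal quantity $q^*$ to price changes, and then to invoke Property \ref{prop:wgs-implies-ugs} to convert UGS into the sign needed for submodularity.

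First I would reduce to a single-valued setting by assuming, as in the setup of Remark \ref{remark_imperfect_competition}, that $c$ is strictly convex.  Since $\pi(p,\cdot)$ is concave in $q$ and $-c$ is strictly concave, the maximand $\pi(p,q)-c(q)$ is strictly concave in $q$ on the compact set $Q$, so $\mathtt Q(p)=\{q^*(p)\}$ for a uniquely determined map $q^*:P\to Q$.  Property \ref{prop:wgs-implies-ugs} then shows that UGS of the point-valued correspondence $\mathtt Q$ is equivalent to weak gross substitutes of $q^*$: holding the other coordinates fixed, $q_z^*$ is nonincreasing in $p_{z'}$ for every $z\neq z'$.  Under enough regularity for the implicit function theorem to apply at the first-order condition $\partial_{q_z}\pi_z(p_z,q_z^*)=\partial_{q_z}c(q^*)$, the map $q^*$ is $C^1$.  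The envelope theorem then gives $\partial_{p_z}c^\pi(p)=\partial_{p_z}\pi_z(p_z,q_z^*(p))$, and differentiating in $p_{z'}$ for $z\neq z'$ yields
\[
\partial^2_{p_{z'}p_z}c^\pi(p)=\partial^2_{p_zq_z}\pi_z\bigl(p_z,q^*_z(p)\bigr)\cdot\partial_{p_{z'}}q_z^*(p),
\]
the product of a nonnegative factor (by the hypothesis $\partial^2_{pq}\pi_z\geq 0$) and a nonpositive factor (by weak gross substitutes).  Non-positive cross partials for $z\neq z'$ deliver the submodularity of $c^\pi$.

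For the general case in which $q^*$ may be multi-valued or non-smooth, I would pass to the limit in an approximation.  A natural choice is to regularize cost by $c_\epsilon(q)=c(q)+\tfrac{\epsilon}{2}\|q\|^2$ and to mollify each $\pi_z$ in $p_z$.  The regularized problem has a unique smooth optimizer $q^*_\epsilon$, and the cross-partial computation above gives submodularity of the regularized indirect profit function $c^\pi_\epsilon$; submodularity passes to the limit $c^\pi_\epsilon\to c^\pi$ because the inequality $c^\pi(p)+c^\pi(p')\geq c^\pi(p\vee p')+c^\pi(p\wedge p')$ is closed under pointwise convergence.  The main obstacle is verifying that UGS of $\mathtt Q$ is inherited by the regularized correspondence $\mathtt Q_\epsilon$: the $\tfrac{\epsilon}{2}\|q\|^2$ perturbation alters the argmax and there is no general transfer principle for UGS.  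An attractive route is to re-express UGS via a no-complementarities condition in the spirit of Theorem \ref{tarantula}, a structural condition on allocations that is plausibly stable under perturbation; alternatively, one can bypass regularization by working directly with Dini directional derivatives of the locally Lipschitz function $c^\pi$, identified by a Milgrom--Segal envelope argument with $\partial_{p_z}\pi_z(p_z,q_z)$ for appropriate selections $q\in\mathtt Q(p)$, and concluding through the finite-difference characterization of submodularity in Lemma \ref{huddle}.
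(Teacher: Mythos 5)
Your argument for the smooth, strictly-convex case is fine: reduce UGS of the point-valued $\mathtt Q(p)=\{q^*(p)\}$ to weak gross substitutes of $q^*$ via Property~\ref{prop:wgs-implies-ugs}, apply the envelope theorem to get $\partial_{p_z}c^{\pi}(p)=\partial_{p_z}\pi_z(p_z,q^*_z(p))$, and observe that the off-diagonal cross-partial $\partial^2_{p_{z'}p_z}c^{\pi}=\partial^2_{p_zq_z}\pi_z\cdot\partial_{p_{z'}}q^*_z\le 0$ is the product of a nonnegative and a nonpositive term. The problem is that this is not what the proposition asserts, and — as you yourself flag — your treatment of the general (set-valued, non-smooth) case has a genuine gap: the regularization $c_\epsilon(q)=c(q)+\tfrac{\epsilon}{2}\lVert q\rVert^2$ changes the argmax correspondence and there is no reason for $\mathtt Q_\epsilon$ to inherit UGS from $\mathtt Q$. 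So submodularity of $c^{\pi}_\epsilon$ cannot be established by the smooth argument, and passing to the limit does not help. The alternative routes you mention at the end (a no-complementarities reformulation as in Theorem~\ref{tarantula}, or working with one-sided directional derivatives and a Milgrom--Segal/Danskin envelope) are named but not carried out, so the proposal as written does not prove the statement.

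The paper's proof executes exactly the second route you gesture at. It never reduces to the point-valued case: Lemma~\ref{lemma-1} (a Danskin-type envelope theorem, citing Bonnans--Shapiro) gives the one-sided directional derivative
\[
\frac{d}{dt}c^{\pi}(p+tb)\big|_{0^+}=\max\bigl\{\,b^{\top}\partial_p\pi(p,q):q\in\mathtt Q(p)\,\bigr\}
\]
with no smoothness of any selection. Lemma~\ref{lemma:diff-char-submod} reduces submodularity of $c^{\pi}$ to the directional-derivative inequality~\eqref{shuttle}, so one only has to choose $q\in\mathtt Q(p)$ and $q'\in\mathtt Q(p')$ attaining the two maxima on the left, invoke UGS to produce $q^{\wedge}\in\mathtt Q(p\wedge p')$ dominating $q\mathbf 1_{\mathcal Z^{\le}}+q'\mathbf 1_{\mathcal Z^{>}}$, and use the increasing-differences hypothesis $\partial^2_{pq}\pi_z\ge 0$ (i.e.\ $\partial_p\pi$ monotone in $q$) to bound the left-hand side by $b^{\top}\partial_p\pi(p\wedge p',q^{\wedge})$, hence by the right-hand side. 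That argument handles multi-valuedness and non-smoothness in one stroke and uses only the hypotheses of the proposition. To repair your proposal you should prove (or cite) the envelope formula for the directional derivative of $c^{\pi}$ and combine it with the finite-difference characterization in Lemma~\ref{huddle}, rather than attempting a regularization whose compatibility with UGS you cannot establish.
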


\begin{proposition}\label{prop:submod-implies-ugs-gal}
	Assume $\pi \left( p,q\right) =\sum_{z}\pi _{z}\left( p_{z},q_{z}\right) $, $%
	\partial _{pq}^{2}\pi _{z}\left( p_{z},q_{z}\right) $ exists and is positive
	and is continuous, and both $\pi _{z}\left( p_{z},q_{z}\right) $ and $%
	\partial _{p_{z}}\pi _{z}\left( p_{z},q_{z}\right) $ are concave in $q_{z}$.
	Then submodularity of $c^{\pi}$ implies unified gross substitutes of $\partial c^{\pi}\left(
	p\right) $.
\end{proposition}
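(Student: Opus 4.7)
The plan is to reduce the statement to Theorem~\ref{spice} via a change of variables that sends the supply correspondence $\mathtt Q(p) = \arg\max_q\{\pi(p,q)-c(q)\}$ to the subdifferential $\partial c^{\pi}(p)$. Concretely, for $q\in \mathtt Q(p)$, set $\tilde q_z := \partial_{p_z}\pi_z(p_z,q_z)$; by the envelope theorem applied to the profit-maximization problem defining $c^{\pi}$, the vector $\tilde q$ lies in $\partial c^{\pi}(p)$, and conversely every element of $\partial c^{\pi}(p)$ arises this way for some $q\in \mathtt Q(p)$. The strict positivity $\partial^2_{pq}\pi_z>0$ makes $q_z \mapsto \partial_{p_z}\pi_z(p_z,q_z)$ strictly increasing, so once $p_z$ is fixed the change of variables is a coordinatewise bijection.

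First I would argue that $c^{\pi}$ is convex under the stated hypotheses, so that Theorem~\ref{spice} is applicable: submodularity of $c^{\pi}$ then yields unified gross substitutes for $\partial c^{\pi}$. Given $q\in \mathtt Q(p)$ and $q'\in \mathtt Q(p')$ with images $\tilde q$ and $\tilde q'$, there would then exist $\tilde q^{\wedge}\in \partial c^{\pi}(p\wedge p')$ and $\tilde q^{\vee}\in \partial c^{\pi}(p\vee p')$ satisfying the four inequalities of Definition~\ref{duck}.

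Next I would pull these vectors back to $\mathtt Q$: define $q^{\wedge}_z$ as the unique solution of $\tilde q^{\wedge}_z = \partial_{p_z}\pi_z((p\wedge p')_z, q^{\wedge}_z)$ (which exists and is unique by strict monotonicity, and which lies in $\mathtt Q(p\wedge p')$ by the envelope theorem converse), and similarly $q^{\vee}$. The translation of the UGS inequalities would then be coordinate-by-coordinate: for $z$ with $p_z\le p'_z$, we have $(p\wedge p')_z = p_z$, so the relation $\tilde q_z \le \tilde q^{\wedge}_z$ rewrites as $\partial_{p_z}\pi_z(p_z,q_z) \le \partial_{p_z}\pi_z(p_z, q^{\wedge}_z)$, and monotonicity in the second argument yields $q_z \le q^{\wedge}_z$. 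The other three inequalities follow by the same mechanism, each compared at a common value of $p_z$ (either $(p\wedge p')_z$ or $(p\vee p')_z$).

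The main obstacle is verifying the convexity of $c^{\pi}$ so that Theorem~\ref{spice} can be invoked. In the setting of Theorem~\ref{spice} the convexity of $c^{*}$ is automatic from Legendre duality, but here $c^{\pi}(p)= \max_q \{\sum_z \pi_z(p_z,q_z) - c(q)\}$ is a supremum of functions that may be nonlinear in $p$, so convexity must be derived from the joint assumptions on $\pi_z$ and $c$. I expect this is precisely where the concavity of $\partial_{p_z}\pi_z$ in $q_z$ (the less obvious of the two concavity hypotheses) enters: combined with $\partial^2_{pq}\pi_z>0$ and the envelope formula $\partial_{p_z} c^{\pi}(p) = \partial_{p_z}\pi_z(p_z, q^*_z(p))$, it should force the Hessian of $c^{\pi}$ to be positive semidefinite via the implicit differentiation of the first-order condition for $q^*(p)$. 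A secondary technical matter is handling subdifferentials rigorously when $\mathtt Q$ is set-valued, but the coordinatewise bijection above extends to that case.
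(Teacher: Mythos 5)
The strategic idea---reduce to Theorem~\ref{spice} through the envelope-theorem change of variables $\tilde q_z = \partial_{p_z}\pi_z(p_z,q_z)$, then pull back using strict monotonicity in $q_z$---is close in spirit to the paper's own proof, which directly reworks the lemma machinery underlying Theorem~\ref{spice} (Lemmas~\ref{lem:convex-envelope-dominating}, \ref{lemma:diff-char-submod}, and Danskin's Lemma~\ref{lemma-1}). However, your proposal glosses over the one technical step where the actual work is done, and you misdiagnose where the concavity of $\partial_{p_z}\pi_z$ in $q_z$ enters the argument.

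The real obstacle is not the convexity of $c^\pi$ but your pull-back step. The subdifferential $\partial c^\pi(p\wedge p')$ is (by Danskin/Lemma~\ref{lemma-1}) the \emph{closed convex hull} of $\left\{\partial_p\pi(p\wedge p', \hat q): \hat q\in\mathtt Q(p\wedge p')\right\}$, not that image set itself. Since $q\mapsto\partial_p\pi(p\wedge p',q)$ is concave but not affine, the image set need not be convex, so a generic $\tilde q^\wedge$ delivered by Theorem~\ref{spice} may be a convex combination $\int_0^1\partial_p\pi(p\wedge p', q_t^\wedge)\,d\mu(t)$ of values over several $q_t^\wedge\in\mathtt Q(p\wedge p')$ and have no preimage $q^\wedge$ with $\tilde q^\wedge = \partial_p\pi(p\wedge p',q^\wedge)$. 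Your sentence claiming the ``coordinatewise bijection extends'' is exactly the place the argument breaks. The paper closes the gap with Jensen's inequality: concavity of $\partial_p\pi(\cdot)$ in $q$ gives $\int\partial_p\pi(p\wedge p', q_t^\wedge)\,d\mu \le \partial_p\pi\bigl(p\wedge p', \int q_t^\wedge\,d\mu\bigr)$, and convexity of the argmax set $\mathtt Q(p\wedge p')$ ensures $q^\wedge := \int q_t^\wedge\,d\mu$ is a legitimate element of $\mathtt Q(p\wedge p')$; one then inverts the resulting inequality via strict monotonicity of $\partial_p\pi$ in $q$. So the concavity of $\partial_{p_z}\pi_z$ in $q_z$ is the engine of the pull-back step, not (as you conjectured) a device for establishing convexity of $c^\pi$. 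Your concern about convexity of $c^\pi$ is legitimate as a matter of hygiene---the submodularity-characterization lemma is stated for convex functions---but it is not where the concavity hypothesis is consumed, and resolving it would not by itself repair the pull-back.
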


The proofs of these two propositions rely on the following lemma, which is part of a family of  envelope theorems generically known as Danskin's theorem, which we state here without a proof (for the proof we refer to Bonnans and Shapiro~\cite[Theorem 4.16]{bonnans2013perturbation}:

\begin{lemma}
	\label{lemma-1}Assume $\pi \left( p,q\right) $ is concave in $q$, $\partial _{p}\pi \left(
	p,q\right) $ exists, is nonnegative and is continuous in $\left( p,q\right) $%
	. Given a function $c$ continuous on $Q$, one has%
	\[
	\frac{dc^{\pi}\left( p+tb\right) }{dt}|_{0^{+}}=\max_{q \in Q}\left\{ b^{\top
	}\partial _{p}\pi \left( p,q\right) :q\in {\mathtt Q} \left( p\right) \right\} . 
	\]
\end{lemma}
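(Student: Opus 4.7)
The plan is to establish the envelope identity by proving two opposing one-sided bounds on the right-derivative, each by a standard manipulation. Write $M(p,b) := \max_{q \in \mathtt Q(p)} b^{\top} \partial_{p} \pi(p,q)$; note that $\mathtt Q(p)$ is nonempty and compact because $\pi(p,\cdot) - c(\cdot)$ is continuous on the compact set $Q$, so the $\max$ is attained and $M(p,b)$ is well-defined.

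For the lower bound, I would fix any $q^{\ast} \in \mathtt Q(p)$, so that $c^{\pi}(p) = \pi(p,q^{\ast}) - c(q^{\ast})$ while $c^{\pi}(p+tb) \geq \pi(p+tb,q^{\ast}) - c(q^{\ast})$ for every $t > 0$. Dividing by $t$ and letting $t \downarrow 0$, the differentiability of $\pi(\cdot, q^{\ast})$ at $p$ yields
\[
\liminf_{t \downarrow 0} \frac{c^{\pi}(p+tb) - c^{\pi}(p)}{t} \;\geq\; b^{\top} \partial_{p} \pi(p, q^{\ast}),
\]
and maximizing over $q^{\ast} \in \mathtt Q(p)$ gives $\liminf \geq M(p,b)$.

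For the upper bound, I would pick, for each small $t > 0$, a maximizer $q_{t} \in \mathtt Q(p+tb)$. Then $c^{\pi}(p+tb) - c^{\pi}(p) \leq \pi(p+tb, q_{t}) - \pi(p, q_{t})$, which by the mean value theorem applied to $s \mapsto \pi(p+sb, q_{t})$ equals $t \cdot b^{\top} \partial_{p} \pi(p + s_{t} b, q_{t})$ for some $s_{t} \in (0,t)$. Along any sequence $t_{n} \downarrow 0$, compactness of $Q$ gives a subsequence with $q_{t_{n}} \to q^{\star}$ for some $q^{\star} \in Q$; Berge's maximum theorem, applicable since $\pi - c$ is jointly continuous and $Q$ is compact, shows $\mathtt Q$ is upper hemicontinuous at $p$, so $q^{\star} \in \mathtt Q(p)$. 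Continuity of $\partial_{p} \pi$ in $(p,q)$ then gives $b^{\top} \partial_{p} \pi(p + s_{t_{n}} b, q_{t_{n}}) \to b^{\top} \partial_{p} \pi(p, q^{\star}) \leq M(p,b)$ along the subsequence, whence $\limsup_{t \downarrow 0} (c^{\pi}(p+tb) - c^{\pi}(p))/t \leq M(p,b)$.

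Combining the two bounds shows that the right-derivative exists and equals $M(p,b)$, as claimed. The main obstacle is the upper-bound step, which relies on two regularity ingredients delivered by the hypotheses: upper hemicontinuity of $\mathtt Q$ (from Berge via joint continuity of $\pi - c$ and compactness of $Q$), and the joint continuity of $\partial_{p} \pi$ in $(p,q)$, which is needed to pass to the limit inside the gradient. The concavity of $\pi$ in $q$ plays no role in the envelope identity itself; it will matter only when this lemma is applied in Propositions~\ref{prop:ugs-implies-submod-gal} and~\ref{prop:submod-implies-ugs-gal}.
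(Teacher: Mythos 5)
Your proof is correct and is the standard Danskin-type envelope argument. Note that the paper does not actually prove this lemma; it cites Bonnans and Shapiro, Theorem 4.16, so there is no in-paper proof to compare against. What you have written is essentially a self-contained reproduction of the standard proof of that result: the lower bound from fixing an optimal $q^{\ast}$ and using first-order expansion of $\pi(\cdot,q^{\ast})$, and the upper bound from the mean value theorem together with closedness of the graph of $\mathtt Q$ and joint continuity of $\partial_{p}\pi$. Your closing observation that concavity of $\pi$ in $q$ is not used in the envelope identity is also accurate; it is carried in the hypotheses because it is needed in Propositions~\ref{prop:ugs-implies-submod-gal} and~\ref{prop:submod-implies-ugs-gal}, where the lemma is invoked.

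One point worth making explicit: the upper-bound step appeals to Berge's theorem via ``$\pi-c$ is jointly continuous,'' but the lemma's stated hypotheses only give continuity of $\partial_{p}\pi$ in $(p,q)$ and concavity of $\pi$ in $q$; joint continuity of $\pi$ itself is a tacit assumption. It holds in the paper's applications (where $\pi=\sum_{z}\pi_{z}(p_{z},q_{z})$ with smooth $\pi_{z}$), and one can also recover it from the hypotheses via $\pi(p,q)=\pi(p_{0},q)+\int_{0}^{1}(p-p_{0})^{\top}\partial_{p}\pi(p_{0}+s(p-p_{0}),q)\,ds$ once continuity of $\pi(p_{0},\cdot)$ on $Q$ is granted, but it is worth flagging rather than treating as automatic. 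With that caveat noted, the argument is complete.
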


\paragraph{Proof of Proposition~\ref{prop:ugs-implies-submod-gal}.}
Assume unified gross substitutes holds for the correspondence $p\rightrightarrows \partial c^{\pi}\left( p\right) $. Take $b\geq 0$, and define $b^{\leq}$ and $b^{>}$ as in\eqref{eq:border}. We want to show (\ref{shuttle}), which, by virtue of Lemma~\ref{lemma:diff-char-submod}, will allows us to
conclude that $c^{\pi}$ is submodular. Then
consider $q\in {\mathtt Q} \left( p\right) $ that attains $\max_{q}\left\{ \left(
b^{\leq }\right) ^{\top }\partial _{p}\pi \left( p,q\right) :q\in {\mathtt Q} \left(
p\right) \right\} $, and take $q^{\prime }\in {\mathtt Q} \left( p^{\prime }\right) $
that attains $\max_{q^{\prime } \in Q}\left\{ \left( b^{>}\right) ^{\top }\partial
_{p}\pi \left( p^{\prime },q^{\prime }\right) :q^{\prime }\in {\mathtt Q} \left(
p^{\prime }\right) \right\} $. By unified gross substitutes, there exists $q^{\wedge }\in {\mathtt Q} \left(
p\wedge p^{\prime }\right) \,$such that%
\[
q\mathbf1_{\mathcal{Z}^{\leq}}+q^{\prime }\mathbf1_{\mathcal{Z}^{>}}\leq q^{\wedge }.
\]
By increasing differences and by unified gross substitutes, we get that 
\begin{eqnarray*}
	&&\max_{q}\left\{ \left( b^{\leq }\right) ^{\top }\partial _{p}\pi \left(
	p,q\right) :q\in {\mathtt Q} \left( p\right) \right\} +\max_{q^{\prime }}\left\{ \left(
	b^{>}\right) ^{\top }\partial _{p}\pi \left( p^{\prime },q^{\prime }\right)
	:q^{\prime }\in {\mathtt Q} \left( p^{\prime }\right) \right\}  \\
	&=&\left( b^{\leq }\right) ^{\top }\partial _{p}\pi \left( p\wedge p^{\prime
	},q\right) +\left( b^{>}\right) ^{\top }\partial _{p}\pi \left( p\wedge
	p^{\prime },q^{\prime }\right)  \\
	&=&\partial _{p}\pi \left( p\wedge p^{\prime },\left( b^{\leq }\right)
	^{\top }q+\left( b^{>}\right) ^{\top }q^{\prime }\right) \leq \partial
	_{p}\pi \left( p\wedge p^{\prime },q^{\wedge }\right)  \\
	&\leq &\max_{\tilde{q}}\left\{ b^{\top }\partial _{p}\pi \left( p\wedge
	p^{\prime },\tilde{q}\right) :\tilde{q}\in {\mathtt Q} \left( p\wedge p^{\prime
	}\right) \right\} 
\end{eqnarray*}%
and as a result of Lemma~\ref{lemma-1}, we get that 
\[
\frac{dc^{\pi}\left( p+tb^{\leq }\right) }{dt}|_{0^{+}}+\frac{dc^{\ast
	}\left( p+tb^{>}\right) }{dt}|_{0^{+}}=\left( b^{\leq }\right) ^{\top
}\partial _{p}\pi \left( p,q\right) +\left( b^{>}\right) ^{\top }\partial
_{p}\pi \left( p^{\prime },q^{\prime }\right) .
\]%
\hfill\rule{.1in}{.1in}

\paragraph{Proof of Proposition~\ref{prop:submod-implies-ugs-gal}.}
To show the converse, assume $c^{\pi}$ is submodular, or equivalently (by
successively applying Lemma~\ref{lemma:diff-char-submod} and Lemma~\ref{lemma-1}), that 
\[
\sup_{q\in \partial c^{\pi}\left( p\right) }\left( b^{\leq }\right) ^{\top
}\partial _{p}\pi \left( p,q\right) +\sup_{q^{\prime }\in \partial c^{\ast
	}\left( p^{\prime }\right) }\left( b^{>}\right) ^{\top }\partial _{p}\pi
\left( p^{\prime },q^{\prime }\right) \leq \sup_{q^{\wedge }\in \partial
	c^{\pi}\left( p\wedge p^{\prime }\right) }b^{\top }\partial _{p}\pi \left(
p\wedge p^{\prime },q^{\wedge }\right) 
\]%
holds for all $b\geq 0$. Take $q\in \partial c^{\pi}\left( p\right) $ and $%
q\in \partial c^{\pi}\left( p^{\prime }\right) $. We have for all $b\geq 0$%
, 
\[
b^{\top }\left( q\mathbf1_{\mathcal{Z}^{\leq}}+q^{\prime }\mathbf1_{\mathcal{Z}^{>}}\right) =qb^{\leq }+q^{\prime
}b^{>},
\]%
but%
\begin{eqnarray*}
	\left( b^{\leq }\right) ^{\top }\partial _{p}\pi \left( p,q\right)  &\leq
	&\sup_{q\in \partial c^{\pi}\left( p\right) }\left( b^{\leq }\right)
	^{\top }\partial _{p}\pi \left( p,q\right)  \\
	\left( b^{>}\right) ^{\top }\partial _{p}\pi \left( p^{\prime },q^{\prime
	}\right)  &\leq &\sup_{q^{\prime }\in \partial c^{\pi}\left( p^{\prime
		}\right) }\left( b^{>}\right) ^{\top }\partial _{p}\pi \left( p^{\prime
	},q^{\prime }\right) 
\end{eqnarray*}%
and therefore 
\[
b^{\top }\partial _{p}\pi \left( p\wedge p^{\prime },q\mathbf1_{\mathcal{Z}^{\leq}}+q^{\prime
}\mathbf1_{\mathcal{Z}^{>}}\right) \leq \sup_{q^{\wedge }\in \partial c^{\pi}\left( p\wedge
	p^{\prime }\right) }b^{\top }\partial _{p}\pi \left( p\wedge p^{\prime
},q^{\wedge }\right) 
\]%
holds for all $b\geq 0$. By Lemma~\ref{lem:convex-envelope-dominating}, this implies that there is an element in the closed convex hull of $ \partial c^{\pi}\left( p\wedge
p^{\prime }\right) $ which is greater or equal than $\partial _{p}\pi \left( p\wedge p^{\prime },q\mathbf1_{\mathcal{Z}^{\leq}}+q^{\prime
}\mathbf1_{\mathcal{Z}^{>}}\right)$ in the partial order. This element can be represented as an integral over elements in $ \partial c^{\pi}\left( p\wedge
p^{\prime }\right) $, and therefore it can be written as $\int_{0}^{1}\partial _{p}\pi \left( p\wedge p^{\prime
},q_{t}^{\wedge }\right) d\mu \left( t\right) 
$ . That is,
\[
\partial _{p}\pi \left( p\wedge p^{\prime },q\mathbf1_{\mathcal{Z}^{\leq}}+q^{\prime
}\mathbf1_{\mathcal{Z}^{>}}\right) \leq \int_{0}^{1}\partial _{p}\pi \left( p\wedge p^{\prime
},q_{t}^{\wedge }\right) d\mu \left( t\right) 
\]%
where $q_{t}^{\wedge }\in \partial c^{\pi}\left( p\wedge p^{\prime
}\right) $. Because $\partial _{p}\pi \left( p,q\right) $ is concave in $q$,
we get 
\[
\int_{0}^{1}\partial _{p}\pi \left( p\wedge p^{\prime },q_{t}^{\wedge
}\right) d\mu \left( t\right) \leq \partial _{p}\pi \left( p\wedge p^{\prime
},\int_{0}^{1}q_{t}^{\wedge }d\mu \left( t\right) \right) 
\]%
and denoting $q^{\wedge }=\int_{0}^{1}q_{t}^{\wedge }d\mu \left( t\right) $,
we have $q^{\wedge }\in \partial c^{\pi}\left( p\wedge p^{\prime }\right) $
by the convexity of $\partial c^{\pi}\left( p\wedge p^{\prime }\right) $,
and therefore 
\[
\partial _{p}\pi \left( p\wedge p^{\prime },q\mathbf1_{\mathcal{Z}^{\leq}}+q^{\prime
}\mathbf1_{\mathcal{Z}^{>}}\right) \leq \partial _{p}\pi \left( p\wedge p^{\prime },q^{\wedge
}\right) ,
\]%
and by the fact that $\partial _{p}\pi \left( p,q\right) $ is increasing in $%
q$, we get that 
$q\mathbf1_{\mathcal{Z}^{\leq}}+q^{\prime }\mathbf1_{\mathcal{Z}^{>}}\leq q^{\wedge }$ as required.
\hfill\rule{.1in}{.1in}

\subsection{Partial Inverse Correspondence}
Let $X^c=\{1,\ldots,N\}\setminus X$.  We can generalize Theorem \ref{belgian}.  Define the \emph{partial inverse correspondence} at set $X\subseteq \{1,\ldots,N\}$ as
\[
\mathtt Q_{X}^{-1}\left( q_{X};p_{X^c}\right) =\left\{
p_{X}:\exists q_{X^c},\left( q_{X},q_{X^c}\right)
\in \mathtt Q\left( p_{X},p_{X^c}\right) \right\} .
\]
Note that for  $X = \mathcal \{1,\ldots,N\}$, this is the inverse correspondence $\mathtt Q^{-1}$.

\begin{corollary}
	If $\mathtt Q$ is an M0-correspondence, then the partial inverse correspondence $\ \mathtt Q_{X}^{-1} $ is isotone in the strong set order for any $X\subseteq \{1,\ldots,N\}$.
\end{corollary}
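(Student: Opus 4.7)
The plan is to reduce the claim directly to the totally isotone inverse property of $\mathtt Q$ established in Theorem~\ref{belgian}. The key observation is that the choice of the subset $B\subseteq\{1,\ldots,N\}$ in Definition~\ref{def:Veinott-isotone} gives us precisely the flexibility needed to accommodate a partial inversion: by setting $B=X^c$ we will get lattice operations that only affect the coordinates in $X$ (the ones we are inverting), while leaving the coordinates in $X^c$ (the ones held fixed as parameters) at the componentwise orderings already assumed.

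Concretely, I would take $p_X\in \mathtt Q_X^{-1}(q_X;p_{X^c})$ and $p_X'\in \mathtt Q_X^{-1}(q_X';p_{X^c}')$ with $q_X\leq q_X'$ and $p_{X^c}\leq p_{X^c}'$, and unfold the definition of the partial inverse: there exist $q_{X^c}$ and $q_{X^c}'$ such that, writing $p=(p_X,p_{X^c})$, $p'=(p_X',p_{X^c}')$, $q=(q_X,q_{X^c})$ and $q'=(q_X',q_{X^c}')$, we have $q\in \mathtt Q(p)$ and $q'\in \mathtt Q(p')$. Now choose $B=X^c$. For $z\in B=X^c$ we have $p_z=p_{X^c,z}\leq p_{X^c,z}'=p_z'$ by hypothesis, and for $z\notin B$, i.e.\ $z\in X$, we have $q_z=q_{X,z}\leq q_{X,z}'=q_z'$, again by hypothesis. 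So the antecedent of Definition~\ref{def:Veinott-isotone} is satisfied.

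Applying Theorem~\ref{belgian} (since $\mathtt Q$ is an M0-correspondence it has totally isotone inverse) yields $q\in \mathtt Q(p\wedge p')$ and $q'\in \mathtt Q(p\vee p')$. Since meets and joins are coordinatewise, $(p\wedge p')_X=p_X\wedge p_X'$ and $(p\wedge p')_{X^c}=p_{X^c}\wedge p_{X^c}'=p_{X^c}$, and similarly $(p\vee p')_X=p_X\vee p_X'$ and $(p\vee p')_{X^c}=p_{X^c}'$. Translating back through the definition of the partial inverse, this gives $p_X\wedge p_X'\in \mathtt Q_X^{-1}(q_X;p_{X^c})$ and $p_X\vee p_X'\in \mathtt Q_X^{-1}(q_X';p_{X^c}')$, which is exactly isotonicity of $\mathtt Q_X^{-1}$ in the strong set order.

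There is no real obstacle here: the proof is essentially a bookkeeping exercise that takes advantage of the fact that the ``totally'' in Theorem~\ref{belgian} was engineered to handle exactly this kind of mixed comparison (prices on some coordinates, quantities on the complementary ones). The only thing that might require care is confirming, in the write-up, that the extra coordinates $q_{X^c}$ and $q_{X^c}'$ furnished by the definition of $\mathtt Q_X^{-1}$ are consistent with the conclusion, i.e.\ that $(q_X,q_{X^c})\in \mathtt Q(p_X\wedge p_X',p_{X^c})$ certifies $p_X\wedge p_X'\in \mathtt Q_X^{-1}(q_X;p_{X^c})$; this is immediate from the definition and does not require any further argument.
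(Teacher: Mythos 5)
Your proof is correct, and it takes a cleaner route than the paper's. Where the paper replays the argument from first principles---extracting $q^{\wedge}$ and $q^{\vee}$ via unified gross substitutes, checking coordinate by coordinate that $q \leq q^{\wedge}$, then invoking nonreversingness---you instead recognize that the work was already done in Theorem~\ref{belgian}: the totally isotone inverse property, applied with $B = X^c$, delivers $q \in \mathtt Q(p \wedge p')$ and $q' \in \mathtt Q(p \vee p')$ in one stroke, after which only the coordinatewise bookkeeping $(p\wedge p')_{X^c} = p_{X^c}$, $(p\vee p')_{X^c} = p_{X^c}'$ remains. Your version also states (and proves) the slightly more general claim where the fixed parameters $p_{X^c}$ and $p_{X^c}'$ need only be ordered rather than equal; the paper's proof is written with $p_{X^c} = p_{X^c}'$ in the hypothesis but then mentions $p_{X^c}'$ in the conclusion, which your cleaner statement resolves. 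What your approach buys is exactly what the $B$-parameter in Definition~\ref{def:Veinott-isotone} was built for: the corollary becomes a corollary in substance, not just in name.
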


\paragraph{Proof.} Assume $\mathtt Q$ is an M0-correspondence and take $p_{X}\in \mathtt Q_{X}^{-1}\left( q_{X};p_{X^c}\right) $ and $p_{X}^{\prime }\in \mathtt Q_{X}^{-1}\left(
q_{X}^{\prime };p_{X^c}
\right) $. 
Let $q_X\le q'_X$.  
We need to show that $\left( p\wedge p^{\prime
}\right) _{X}\in \mathtt Q_{X}^{-1}\left( q_{X},p_{X^c}\right) $
and $\left( p\vee p^{\prime }\right) _{X}\in \mathtt Q_{X}^{-1}\left( q_{X}^{\prime
},p_{X^c}^{\prime }\right) $. By unified gross substitutes, there exist 
$q^{\wedge }\in \mathtt Q\left( p\wedge p^{\prime }\right) $ and $q^{\vee }\in
\mathtt Q\left( p\vee p^{\prime }\right) $ such that $p_{z}\leq p_{z}^{\prime
}\implies q_{z}\leq q_{z}^{\wedge }$ and $p_{z}>p_{z}^{\prime }\implies
q_{z}^{\prime }\leq q_{z}^{\wedge }$. But $p_{z}>p_{z}^{\prime }\implies
z\in X\implies q_{z}\leq q_{z}^{\prime }\implies q_{z}\leq q_{z}^{\wedge }$,
and therefore $q\leq q^{\wedge }$. As a result of the fact that $\mathtt Q$
is nonreversing, we have $q\in \mathtt Q\left( p\wedge p^{\prime }\right) $, and
thus $\left( p\wedge p^{\prime }\right) _{X}\in \mathtt Q_{X}^{-1}\left( q_{X},p_{X^c}\right) $. A similar logic shows that $\left( p\vee
p^{\prime }\right) _{X}\in \mathtt Q_{X}^{-1}\left( q_{X}^{\prime },p_{X^c}^{\prime }\right) $. \hfill\rule{.1in}{.1in}

\subsection{Structures of Solutions}\label{wolverine}

\begin{definition}[Subsolutions,  supersolutions, and solutions]
	A price vector $p\in P$ is a \emph{subsolution} of $\mathtt Q $ if there exists $q\in \mathtt Q\left( p\right) $ with $q\leq 0$.
	It is a \emph{supersolution} of $\mathtt Q $ if there exists $q\in \mathtt Q\left( p\right) $ with $q\geq 0$.
	It is a \emph{solution} of $\mathtt Q $ if there exists $q\in \mathtt Q\left( p\right) $ with $q = 0$.
\end{definition}

\noindent Note that a solution of $\mathtt Q$ is both a subsolution and a supersolution. However, the converse is not true in general: if $p$ is both a subsolution and a supersolution, it need not be a solution without additional assumptions.

\begin{proposition}
	Assume  $\mathtt Q$ satisfies unified gross substitutes. Then set of subsolutions of $\mathtt Q $ is closed under $\vee $, and the set of
	supersolutions is closed under $\wedge $.
\end{proposition}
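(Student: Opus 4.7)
The plan is to apply the unified gross substitutes condition directly to witness vectors for two subsolutions (respectively, supersolutions) and observe that the guaranteed elements of $\mathtt{Q}(p\vee p')$ and $\mathtt{Q}(p\wedge p')$ inherit the desired sign coordinate-by-coordinate. No elaborate construction is needed; the argument is just a careful coordinate-wise case split.

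In detail, I would first take two subsolutions $p,p'\in P$, with witnesses $q\in\mathtt{Q}(p)$ and $q'\in\mathtt{Q}(p')$ satisfying $q\le 0$ and $q'\le 0$. Applying unified gross substitutes (Definition~\ref{duck}) yields a vector $q^{\vee}\in\mathtt{Q}(p\vee p')$ (and a $q^{\wedge}\in \mathtt{Q}(p\wedge p')$, which I will not use here) such that, for each coordinate $z$, the right-hand consequents of \eqref{lester} and \eqref{flatt} hold: if $p_z\le p_z'$ then $q_z^{\vee}\le q_z'\le 0$, and if $p_z'<p_z$ then $q_z^{\vee}\le q_z\le 0$. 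Either way $q_z^{\vee}\le 0$, so $q^{\vee}\le 0$ and $p\vee p'$ is a subsolution.

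For supersolutions I would run the symmetric argument: take $p,p'$ with witnesses $q,q'\ge 0$, apply unified gross substitutes to obtain $q^{\wedge}\in \mathtt{Q}(p\wedge p')$, and read off the left-hand consequents of \eqref{lester}--\eqref{flatt}. For each $z$, either $p_z\le p_z'$ gives $q_z^{\wedge}\ge q_z\ge 0$, or $p_z'<p_z$ gives $q_z^{\wedge}\ge q_z'\ge 0$. Hence $q^{\wedge}\ge 0$ and $p\wedge p'$ is a supersolution.

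I do not foresee any obstacle; the content of the proposition is essentially a repackaging of the two halves of Definition~\ref{duck}. The only thing to be slightly careful about is that the antecedents of \eqref{lester} and \eqref{flatt} jointly partition the coordinates (using the asymmetric weak/strict split $p_z\le p_z'$ versus $p_z'<p_z$), so every coordinate $z$ is covered by exactly one implication, and no coordinate is missed in the case analysis above.
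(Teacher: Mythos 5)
Your argument is correct and is essentially identical to the paper's own proof: both apply Definition~\ref{duck} directly to the two witnesses and read off the sign of $q^{\vee}$ (resp.\ $q^{\wedge}$) coordinate-by-coordinate from the consequents of \eqref{lester}--\eqref{flatt}. Your remark that the antecedents $p_z\le p_z'$ and $p_z'<p_z$ partition the coordinates makes explicit the case split the paper leaves implicit, but there is no substantive difference.
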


\paragraph{Proof}
Let $p$ and $p^{\prime }$ be two subsolutions. Then there is $q\in Q\left(
p\right) $ with $q\leq 0$ and $q^{\prime }\in Q\left( p^{\prime }\right) $
with $q^{\prime }\leq 0$.
By unified gross substitutes, there is $q^{\vee }\in Q\left( p\vee p^{\prime }\right) $ such that $%
p_{z}\leq p_{z}^{\prime }\implies q^{\vee }\leq q_{z}^{\prime }\leq 0$, and $%
p_{z}>p_{z}^{\prime }\implies q^{\vee }\leq q_{z}\leq 0$. Thus $q^{\vee
}\leq 0$. The result for supersolutions is treated similarly.
\hfill\rule{.1in}{.1in}\bigskip

\noindent As a consequence of the proposition, assuming only unified gross substitutes,  if we take $p$ and $p^\prime$ to be two solutions of $\mathtt Q$, then $p \vee p^\prime$ is a supersolution, and $p \wedge p^\prime$ is a subsolution. However, there is no guarantee that they will be solutions. If we assume further that $\mathtt Q$ is nonreversing, and hence that it is a M0-correspondence, then we get that $p \vee p^\prime$ and $p \wedge p^\prime$ are solutions by virtue of Corollary~\ref{gobble}.

\begin{proposition}\label{prop:equiv-sol-max-subsol}
	Let $\mathtt Q$ be a M0-correspondence.  If a solution of $\mathtt Q$ exists, and if a maximal subsolution exists, then these two elements coincide.
\end{proposition}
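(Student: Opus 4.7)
The plan is to prove two facts that together yield the conclusion: (i) any solution $p_s$ satisfies $p_s \leq p^\ast$ for any maximal subsolution $p^\ast$, and (ii) any maximal subsolution $p^\ast$ is itself a solution. Combining these shows that the maximal subsolution coincides with the solution.

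First I would handle the inequality $p_s \leq p^\ast$. Observe that any solution is trivially a subsolution, since $0 \in \mathtt Q(p_s)$ and $0 \leq 0$. Given a solution $p_s$ and a maximal subsolution $p^\ast$, consider $p_s \vee p^\ast$. By the previous proposition, the set of subsolutions is closed under $\vee$ (which uses only unified gross substitutes), so $p_s \vee p^\ast$ is a subsolution. By maximality of $p^\ast$ together with $p^\ast \leq p_s \vee p^\ast$, we must have $p_s \vee p^\ast = p^\ast$, and hence $p_s \leq p^\ast$.

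Next I would invoke nonreversingness to upgrade $p^\ast$ from a subsolution to a solution. By hypothesis there exists $q^\ast \in \mathtt Q(p^\ast)$ with $q^\ast \leq 0$, and by assumption $0 \in \mathtt Q(p_s)$. The data $(p^\ast, q^\ast, p_s, 0)$ satisfies the antecedent of~\eqref{hambone}: namely $q^\ast \leq 0$ and $p^\ast \geq p_s$. The conclusion of nonreversingness then delivers $q^\ast \in \mathtt Q(p_s)$ and, crucially, $0 \in \mathtt Q(p^\ast)$. This last inclusion is exactly the statement that $p^\ast$ is a solution; combined with $p_s \leq p^\ast$ and the fact (noted in the discussion preceding the proposition, and following from Corollary~\ref{gobble}) that the set of solutions is a sublattice, this shows that the maximal subsolution and the solution refer to the same element.

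The main obstacle is purely conceptual rather than computational: one must recognize that the correct way to chain the two hypotheses is to apply nonreversingness with the solution value $q' = 0$ as the \emph{larger} allocation and the subsolution witness $q = q^\ast$ as the \emph{smaller} one, which is what turns a subsolution into a bona fide solution. Once this pairing is chosen, the argument reduces to invoking the $\vee$-closure of subsolutions (from the preceding proposition) together with a single application of the nonreversing implication~\eqref{hambone}.
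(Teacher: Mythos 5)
Your proof is correct and follows essentially the same route as the paper's: establish that the maximal subsolution dominates any solution, then apply nonreversingness with $q'=0$ as the larger allocation to promote the maximal subsolution to a solution. You are somewhat more careful than the paper at one point: where the paper asserts $p^\ast \leq \bar p$ ``by definition'' of the maximal subsolution $\bar p$, you explicitly invoke the $\vee$-closure of the subsolution set from the preceding proposition to deduce this domination, which is the right thing to do since ``maximal'' alone does not mean ``greatest.''
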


\paragraph{Proof}
Let  $p^{\ast }$ be a solution of $\mathtt Q$ and hence $0\in Q\left( p^{\ast }\right) $. Let $\bar{p}$ be a maximal subsolution. Then $p^{\ast }$ is a subsolution
and $p^{\ast }\leq \bar{p}$ holds by definition of $\bar{p}$. As the latter vector is a subsolution, there is $q\in Q\left(
\bar{p}\right) $ with $0\geq q$. \ By nonreversingness it follows that $0\in
Q\left( p^{\ast }\right) $.
\hfill\rule{.1in}{.1in}\bigskip

\noindent Given that the set of subsolutions is closed under $\vee$, this set is a semi-sublattice of $\mathbb R^N$. For a maximal element to exist, one requires this semi-sublattice to be complete. This will hold in particular if $P$ is finite, but also more generally under continuity assumptions on $\mathtt Q$.

\end{document}